\definecolor{myblue}{HTML}{6699CC}
\newcolumntype{d}[1]{D{.}{.}{#1}}  % decimal-aligned columns
\renewcommand{\hat}[1]{\widehat{#1}}
\renewcommand{\tilde}[1]{\widetilde{#1}}
\newcommand{\A}{\mathbf{A}}
\DeclareMathOperator*{\argmin}{argmin} 
\newcommand{\B}{\mathbf{B}}
\newcommand{\bbeta}{\boldsymbol{\beta}}
\newcommand{\bOmega}{\mathbf{\Omega}}
\newcommand{\bPsi}{\mathbf{\Psi}}
\newcommand{\bTheta}{\mathbf{\Theta}}
\newcommand{\btheta}{\boldsymbol{\theta}}
\newcommand{\C}{\mathbf{C}}
\newcommand{\D}{\mathbf{D}}
\newcommand{\E}{\mathbf{E}}
\newcommand{\e}{\mathbf{e}}
\renewcommand{\H}{\mathbf{H}}
\newcommand{\I}{\mathbf{I}}
\newcommand{\M}{\mathbf{M}}
\newcommand{\R}{\mathbb{R}}
\newcommand{\bR}{\mathbf{R}}
\newcommand{\s}[1]{\mathcal{#1}}
\renewcommand{\S}{\mathbf{S}}
\renewcommand{\vec}[1]{{\rm vec}({#1})}
\newcommand{\U}{\mathbf{U}}
\newcommand{\V}{\mathbf{V}}
\newcommand{\W}{\mathbf{W}}
\newcommand{\X}{\mathbf{X}}
\newcommand{\x}{\mathbf{x}}
\newcommand{\Y}{\mathbf{Y}}
\newcommand{\y}{\mathbf{y}}
\newcommand{\Z}{\mathbf{Z}}
\newcommand{\z}{\mathbf{z}}
\newtheorem{theorem}{Theorem}
\newtheorem{proposition}[theorem]{Proposition}
\newtheorem{conditions}{Conditions}
\newcommand\Mark[1]{\textsuperscript{#1}}
\newcommand{\blind}{0}
\begin{document}

\newcommand{\spacingset}[1]{\renewcommand{\baselinestretch}{#1}\small\normalsize}

%%%%%%%%%%%%%%%%%%%%%%%%%%%%%%%%%%%%%%%%%%%%%%%%%%%%%%%%%%%%%%%%%%%%%%%%%%%%%%

\if0\blind
{
    \title{\bf Inferring Influence Networks from Longitudinal Bipartite Relational Data\\[.75ex] }
  \author{\normalfont\large 
    Frank W. Marrs\Mark{1}\thanks{
\noindent 
 For correspondence, contact: frank.marrs@colostate.edu. Reproduction code available at \url{https://github.com/fmarrs3/BLIN}.}, Benjamin W. Campbell\Mark{2}, Bailey K. Fosdick\Mark{1},\\ Skyler J. Cranmer\Mark{2}, and Tobias B{\"o}hmelt\Mark{3}
 \\ [.5ex] }
\date{
{\normalfont\normalsize\noindent
\begin{flushleft}
\hspace{.25in}\Mark{1}Colorado State University, Department of Statistics, Fort Collins, CO, USA\\
\hspace{.25in}\Mark{2}The Ohio State University, Department of Political Science, Columbus, OH, USA\\
\hspace{.25in}\Mark{3}University of Essex, Department of Government, Colchester, England, UK
\end{flushleft}
}}
\maketitle
} \fi

\if1\blind
{
  \bigskip
  \bigskip
  \bigskip
  \begin{center}
    {\LARGE\bf  Inferring Influence Networks from\\[5pt] Longitudinal Bipartite Relational Data}
\medskip
\end{center}
} \fi

\medskip
\begin{abstract}
\noindent Longitudinal bipartite relational data characterize the evolution of relations between pairs of actors, where actors are of two distinct types and relations exist only between disparate types.  A common goal is to understand the temporal dependencies, specifically which actor relations incite later actor relations.  There are two primary existing approaches to this problem. The first projects the bipartite data in each time period to a unipartite network and uses existing unipartite network models.  Unfortunately, information is lost in calculating the projection and generative models for networks obtained through this process are scarce.  The second approach represents dependencies using two unipartite \emph{influence networks}, corresponding to the two actor types.  Existing models taking this approach are bilinear in the influence networks, creating challenges in computation and interpretation.  We propose a novel generative model that permits estimation of weighted, directed influence networks and does not suffer from these shortcomings. The proposed model is linear in the influence networks, permitting inference using off-the-shelf software tools.  We prove our estimator is consistent under cases of model misspecification and nearly asymptotically equivalent to the bilinear estimator. We demonstrate the performance of the proposed model in simulation studies and an analysis of weekly international state interactions.
\end{abstract}

\noindent%
{\it Keywords:}  temporal networks; weighted networks; tensor regression
\vfill

\spacingset{1.5}

\newpage

\vspace{.5in}

\section{Introduction}  Longitudinal bipartite relational data are being collected at unprecedented rates to study complex phenomena in both the social and biological sciences. These data characterize the evolution of
relations between pairs of actors, where each actor is one of two distinct types, and relations exist only between disparate actor types.
Studies involving such data have focused on, e.g., films \citep{watts1998collective}, international relations \citep{boulet201635, campbell2018latent}, metabolic interactions \citep{jeong2000large}, recommender systems \citep{linden2003amazon}, or transportation systems \citep{zhang2006model}.  
For example, in international affairs, researchers might study countries' financial contributions to  international organizations over the past few decades.  Here, the countries and international organizations are 
the actors and the relations of interest are yearly financial contributions.

In many studies of longitudinal bipartite relational data, the relevant scientific questions surround relations among actors of a single type; the set of which can be represented in a \emph{unipartite} network. For example, researchers may be interested in the (unobserved) relationships among countries that affect the amount of financial contributions to different international organizations.   
The financial contribution of China to the UN, for instance, may be influenced by
the US's recent announcement to cut its budget obligations for the next years. The degree of change in China's contribution based on the US could be viewed as a measure of US influence on China. Such influences may exist between international organizations as well: the contribution of the US to the United Nations (UN) can be related to its financial obligations to the World Trade Organization (WTO). In the examples given, the influences are occurring over time and are allowed to be asymmetric such that, for example, China may be influenced by the US to a high degree while the US is not influenced by China. A goal then in studying longitudinal bipartite data is to infer the relationships among actors of each type. We term these sets of unipartite relations \emph{influence networks}.  For example, in the US-China illustration, the country influence network is denoted $\A = \{a_{ij} \}_{i,j=1}^S$, where $S$ is the number of countries and $a_{ij}$ represents the amount of influence country $i$ has on country $j$.  Similarly, the organization influence network is denoted $\B = \{b_{ij} \}_{i,j=1}^L$, where $L$ is the number of international organizations and $b_{ij}$ represents the influence of organization $i$ on organization $j$.  
Inferring these latent influences are of substantive interest in many studies and, in the case of the states and their contributions to international organizations, have the potential to inform international policy-making in effective and previously unknown ways.

The idea of summarizing bipartite data in terms of unipartite influence networks is not new.  
\cite{newman2001structure} analyzes the relationships among academic authors by estimating the unipartite author-author network from data on academics and the papers they authored over five years. 
\cite{newman2001structure} ignores the temporal component of the data and defines the relationship $a_{ij}$ between author $i$ and author $j$ as the the number of papers $i$ and $j$ co-authored during the five-year period.  The resulting influence network is often referred to as a (one-mode) projection.  If the binary data matrix of authorship is denoted by a rectangular matrix $\Y=\{y_{ik}\}$, where $y_{ik}$ is an indicator of whether academic $i$ authored paper $k$, then the author influence network can be expressed as $\A = \Y\Y^T$.    Notice that $\A$ is symmetric by construction and represents behavioral co-occurrence (in this case, co-authorship), rather than influences over time as described earlier.  
Investigating temporal patterns in publications,  \cite{barabasi2002evolution} estimated yearly influence networks among academic authors using one-mode projections  
and analyzed the evolution of summary statistics of the yearly projections.
Extensions of one-mode projections exist for longitudinal bipartite networks \citep{wu2014temporal}, weighted bipartite networks \citep{newman2004analysis,liu2009personal}, and for creating directed influence networks \citep{zhou2007bipartite}.  These various extensions involve different weightings of the original bipartite relations  (e.g., weight each paper by the number of co-authors). 

Due to the plethora of tools available for unipartite networks, bipartite data are often cast into one-mode projections that can be subsequently analyzed using standard network modeling techniques \citep{zhou2007bipartite}.  Although various weighting schemes have been investigated for one-mode projections  \citep{wu2014temporal}, a key disadvantage of this approach is that information in the original bipartite data is inherently lost in the projection, regardless of the selected weighting scheme.  In addition, since the data naturally arise in a bipartite format, specifying a generative model for the projection on which to base inference is fundamentally challenging. 

There exists some previous work that directly models the observed bipartite network as well. \cite{skvoretz1999logit} and \cite{wang2009exponential}, for example, propose generative Exponential Random Graph Models (ERGMs) for bipartite networks, however this work aims to infer the effect of certain network motifs (such as triangles) on the strength of relations rather than infer the latent influence networks. Another thread of research seeks to explain network formation of heterogeneous information networks, those that consist of disparate node types \emph{and} links, of which bipartite networks are a subset \citep{sun2011co, sun2012mining, shi2017survey}. 
A particularly similar line of work to ours
explains network formation as a function of influence networks among node types, although these models have no temporal component and thus model simultaneous influence rather than the particular sequential influence we consider \citep{liu2010mining, liu2012learning}.

In this paper, we propose a novel bipartite longitudinal influence network (BLIN) model, which permits inference on the influence networks for each set of actors.  This work builds upon recent developments on statistical models for longitudinal unipartite relational data \citep{carley2013models, snijders2005models, krackhardt2007heider, sewell2015latent, carnegie2015approximation}. Specifically, \cite{ almquist2013dynamic, almquist2014logistic} propose an autoregressive model for unipartite networks that may be expressed as a generalized linear model. In a similar vein, we propose an autoregressive, generalized linear model for bipartite networks, wherein the influence networks are autoregressive parameters. 
Although the proposed model is conceptually similar to an existing diffusion model \citep{desmarais2015persistent} and a bilinear regression model \citep{hoff2015multilinear}, our model has key advantages over these existing methods with regard to estimability and interpretability.

The rest of the article is organized as follows. We introduce the BLIN model in Section~\ref{sec_model}. We discuss  approaches to modeling longitudinal bipartite data and then explore various extensions to our model. We describe maximum likelihood estimation procedures for the BLIN model in Section \ref{sec_est} and give properties of the resulting estimators in Section \ref{sec_est_prop}, including performance under misspecification. In Section~\ref{section:simulation}, we compare the performance of our model to existing approaches in simulation studies. In Section~\ref{section_temporal_SID}, we demonstrate our methodology using a data set of material and verbal interactions between international states, where the disparate actor types are the source countries and the target countries of these actions (e.g. humanitarian aid, boycotting, or intent to negotiate). Finally, we discuss future work in Section~\ref{sec:conc}.

\section{BLIN Model}
\label{sec_model}
Let the matrix $\Y_t = \{y_{ij}^t\} \in \mathbb{R}^{S \times L}$ denote the $t^{\text{th}}$ observation of the bipartite relations among $S$ actors of one type (e.g., countries) and $L$ actors of a second type (e.g., international organizations), where the time index  $t \in \{1,2,\ldots,T \}$.  For example, in the illustration introduced above, the $(i,j)$ entry in $\Y_t$, $y_{ij}^t$, is the financial contribution by country $i$ to international organization $j$ in year $t$. 
The BLIN model expresses the relations at time $t$, $\Y_t$, as a function of the $p$ previous relations $\{\Y_{k}: k\in\{t-1, t-2, \ldots, t-p\}\}$, and the influence matrices $\A$ and $\B$:
\begin{align}
\Y_t  &= \A^T \sum_{k=1}^{p_A} \Y_{t - k}  + \sum_{k=1}^{p_B} \Y_{t - k}\B + \E_t, \label{eqbiten}
\end{align}
where $\E_t$ is an $S \times L$ matrix of mean zero, independent and identically distributed  errors and $p= \text{max}(p_A, p_B)$.  The constants $p_A$ and $p_B$ represent the number of previous time periods which influence $\Y_t$ through the networks $\A$ and $\B$, respectively.

The BLIN model can alternatively be expressed as:
 \begin{align}
y_{ij}^t  &=  \sum_{s=1}^S a_{si} \left( \sum_{k=1}^{p_A} y_{s j}^{t-k} \right) +  \sum_{\ell = 1}^L b_{\ell j} \left( \sum_{k=1}^{p_B} y_{i \ell}^{t-k}\right)   + e_{ij}^t.\label{eqbitenSum}
\end{align}
From this representation, it is more easily seen that $y_{ij}^t$ is exclusively a function of those $y_{k \ell}^{t-k}$ when either $k=i$ or $\ell = j$, or both. Consider $p_A=p_B=1$. 
In the context of the international affairs example, this means that China's financial contribution to the UN in 2015 depends on the contributions of all other countries to the UN in 2014 through entries in $\A$, and on China's financial contributions to all other international organizations in 2014 through the entries in $\B$. The interpretation of the individual $\A$ and $\B$ parameters follow from the linearity of the BLIN model. For example, $a_{si}$ is the expected increase in financial contributions of country $i$ to a given international institution when country $s$ has raised its contribution by one unit to the same organization in the previous year. Similarly, the coefficient $b_{\ell j}$ is the expected increase in budget obligations to international organization $j$ by a given country when international organization $\ell$ has received one unit of financial contributions from the same country in the previous year.  
Figure~\ref{fig:dep} depicts these influences using the international affairs example. A positive value of $a_{US,CHI}$ in $\A$, corresponding to the influence of the US on China, means that if the US increased its contributions to the UN in 2014, then China is expected to increase its expenditure to the UN in 2015. Similarly, a positive value of $b_{UN,WTO}$ in $\B$, corresponding to influence of financial obligations to the UN on contributions to the WTO, implies that if the US spent more money on the UN in 2014, then it is expected to increase its WTO expenditure in 2015. Since the influence matrices are time invariant, the entries in  $\A$ and $\B$ represent the average influence over all time periods under consideration.

Figure~\ref{fig:dep} depicts types of direct influence patterns the BLIN model captures (i.e. those between $y_{ij}^t$ and $y_{k \ell}^{t-1}$ where $i=k$ and/or $j=\ell$).  Note, however, that secondary influences (such as that between $y_{ij}^t$ and $y_{k \ell}^{t-s}$ where $i\not=k$, and $j\not=\ell$) may propagate through the BLIN model over multiple time periods, i.e. for $s>1$. For example, although US contributions to the UN in 2014 may not affect China's contribution to the WTO in 2015, it may do so in 2016. This may occur if, say, US financial transfers to the UN in 2014  affect China's UN expenditure in 2015 via a nonzero value of $a_{US, \ CHI}$. Then, China's contribution to the UN in 2015 impact its own contribution to the WTO in 2016 through a nonzero value $b_{UN, \ WTO}$. In this way, the BLIN model allows both direct and secondary influences through different mechanisms.

\spacingset{1}
\begin{figure}[ht]
\centering
\renewcommand{\arraystretch}{1.5}
\begin{tabular}{|c|c|c|}
\hline
  \textbf{Dependence} & $\mathbf{t-1}$ & $\mathbf{t}$ \\
  \hline
  Direct country ($a_{US, CHI}$) & 
  \raisebox{-.5\height}{\includegraphics[width=.15\textwidth]{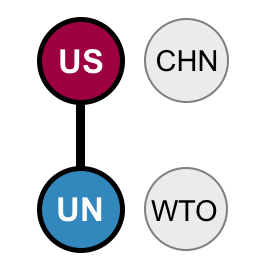}}  &
  \raisebox{-.5\height}{ \includegraphics[width=.15\textwidth]{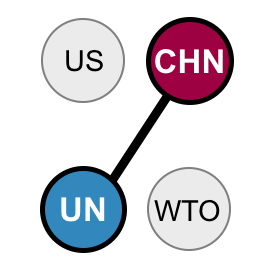}}
  \\
  \hline
  Direct organization ($b_{UN, WTO}$) &
    \raisebox{-.5\height}{\includegraphics[width=.15\textwidth]{a11_issue.png}}  &
  \raisebox{-.5\height}{ \includegraphics[width=.15\textwidth]{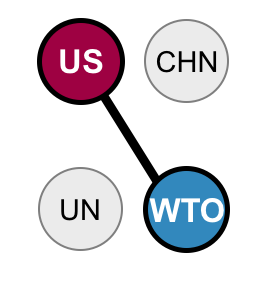}} \\
  \hline
  Secondary 
  &\raisebox{-.5\height}{\includegraphics[width=.15\textwidth]{a11_issue.png}}  &
  \raisebox{-.5\height}{ \includegraphics[width=.15\textwidth]{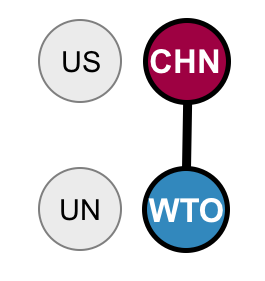}} \\
  \hline
\end{tabular}
\caption{Influence types in longitudinal bipartite relational data in the parlance of the country/international organization example for two countries and two institutions when $lag=1$. Dark red nodes represent countries (US and China) and light blue nodes represent international organizations (UN and WTO).}
\label{fig:dep}
\end{figure}
\spacingset{1.5}

A key flexibility of the BLIN model is that it allows for $p_A \neq p_B$; that is, $\A$ and $\B$ may represent influences over differing time scales.
This is natural. For example, in the international affairs example, country contributions may be influenced only by other countries' contributions in the past year ($p_A=1$). However, the US may have long-standing pattern of contributions to the UN and WTO such that the dependence through $\B$ is much longer, say $p_B = 5$.

A key property of the BLIN model is that it may be written as a linear model. Letting $\y_{t}$ and $\e_t$ denote the column-wise vectorization of matrices $\Y_t$ and $\E_t$, respectively, the model in \eqref{eqbitenSum} can alternatively be expressed 
 \begin{align}
\y_t &= \left(\sum_{k=1}^{p_A} \Y_{t - k}^T \otimes \I_S \right) {\rm vec}(\A^T) + \left(\I_L \otimes \sum_{k=1}^{p_B} \Y_{t - k} \right) {\rm vec}(\B) + \e_t, \label{eqRegBiten} \\
  &:= \mathbb{X}_B^{(t)} \btheta + \e_t, \label{eq_blinvec}
\end{align}
where `$\otimes$' is the Kronecker product and ${\rm vec} (\B)$ denotes the column-wise vectorization of matrix $\B$.  In the second line, $\mathbb{X}_B^{(t)}$ is the $SL \times (S^2 + L^2)$ matrix 
$[\sum_{k=1}^{p_A} \Y_{t - k}^T \otimes \I_S, \, \I_L \otimes \sum_{k=1}^{p_B} \Y_{t - k}]$  
and $\btheta^T = [{\rm vec}(\A^T)^T , \, \vec{\B}^T ]$ is the vector of parameters. Since the BLIN model may be written as a linear model, numerous off-the-shelf tools exist for estimation (including regularization) of the BLIN model, making inference on the influence networks straightforward. In what follows, we focus on the model without covariates, although 
we may easily incorporate covariate information by adding the term $\W_t \bbeta$ to the right hand side of \eqref{eqRegBiten}, for $\bbeta \in \R^p$ and $\W_t \in \R^{SL \times p}$. Thus, we assume throughout the paper that $\Y_t$ is mean zero for all $t \in \{1,2,\ldots,T \}$ without loss of generality.

Another useful representation of the BLIN model is as a vector autoregressive (VAR) model, a generalization of the univariate autoregressive model \citep{sims1980macroeconomics}. Letting $\bTheta_1 = \I_L \otimes \A^T + \B^T \otimes \I_S$ and $\bTheta_2 = \mathbbm{1}_{[p_A > p_B]}\left( \I_L \otimes \A^T \right) + \mathbbm{1}_{[p_B > p_A]} \left( \B^T \otimes \I_S \right)$, the BLIN model in \eqref{eqbiten} may be rewritten 
\begin{align}
\y_t = \bTheta_1 \left(\sum_{k=1}^{q} \y_{t-k} \right) + \bTheta_2 \left(\sum_{k=q+1}^{p} \y_{t-k} \right) + \e_t, \label{eqbiten_VAR}    
\end{align}
where $q = \text{min}(p_A, p_B)$. We note that, when $p_A = p_B = p = 1$, the VAR representation of the BLIN model reduces to $\y_t = \bTheta_1 \y_{t-1} + \e_t$, the standard lag-1 VAR model. An unstructured coefficient matrix $\bTheta_1$ has $S^2 L^2$ unknown parameters, while the BLIN $\bTheta_1$ has only $S^2 + L^2$ unknowns. In this light, the BLIN model may be viewed as reducing the number of unknown parameters in the coefficient matrix $\bTheta_1$ by imposing interpretable bipartite structure on $\bTheta_1$. 

The BLIN model in~\eqref{eqbiten} is not fully identifiable such that for any $c \in \R$, the transformation $\{ \A, \B\} \rightarrow \{\A + c \I_S, \, \B - c \I_L \}$ results in the \emph{exact} same model for the data $\Y_t$.
This non-identifiability means that we are unable to determine $a_{ii}$ and $b_{jj}$ separately, but that the sum $a_{ii} + b_{jj}$ is identifiable. Specifically, we may estimate the effect of $y_{ij}^{t-1}$ on $y_{ij}^t$, but we cannot decompose this effect into the contribution of country $i$ and organization $j$. 
Nevertheless, we may compare the marginal country effect among countries and among international organizations, respectively. For example, although the absolute values of $a_{US,US}$ and $a_{CHI,CHI}$ are not identifiable, their difference $a_{US,US} - a_{CHI, CHI}$ is identifiable. If this difference is positive, we may conclude, for example, that a US increase in financial contributions given a unit expenditure in the previous year is higher than China's increase in contributions. 
 
\subsection{Comparison to Existing Approaches}
\label{sec_bilinear_prop}
Diffusion models \citep[e.g.,][]{berry1990state} are popular for studying the interdependencies of institutions in political science \citep{desmarais2015persistent}. In these models, an outside institution
puts transmission pressure on for a particular policy on the focal institution, making the latter more likely to adopt that policy. The network of these transmissions forms a directed tree, where there is at most one path from one institution to another.
The diffusion model is distinct from the approach we propose in several ways. In the parlance of the international affairs example, the former supposes that each country's financial contribution to a specific international organization is influenced by at most a single other country. In addition,  a  binary network is inferred, rather than  a weighted network which can encode both positive and negative influences.  Furthermore, methods for quantifying uncertainty in the estimated network and incorporating covariates are unavailable.

\cite{hoff2015multilinear} proposes a generative model for bipartite longitudinal data termed the bilinear model, which can be expressed
\begin{align}
\Y_t &= \A^T \sum_{k=1}^{p}\Y_{t-k} \B + \E_t. 
\label{eqBilinearMat}
\end{align}
\cite{hoff2015multilinear} presents an estimator that proceeds by alternating estimates of $\A$ and $\B$, however this estimator is guaranteed to converge only to a local optimum. Thus, global optimality of the existing estimator is not guaranteed.
We illustrate this issue in simulation studies (Section~\ref{section:simulation}). Also, unlike the BLIN model, since the bilinear model is nonlinear, many standard off-the-shelf tools for regularization and uncertainty quantification are not applicable.

The matrices $\A$ and $\B$ in the bilinear model, as in the BLIN model, measure actor influences.
However, 
the bilinear model combines the direct and secondary dependencies in Figure~\ref{fig:dep} into the same mechanism.  This results in a different interpretation of the parameters.  To illustrate the interpretation, we rewrite 
the bilinear model in \eqref{eqBilinearMat} as
\begin{align}
y_{ij}^t  &=  \sum_{s=1}^S \sum_{\ell = 1}^L a_{si}  b_{\ell j} \left( \sum_{k=1}^{p} y_{s \ell}^{t-k} \right) + e_{ij}^t.
\label{eqBilinear}
\end{align}
Here we see $\Y_t$ depends on \emph{every} entry in $\Y_{t-1}$, as $y_{i j}^{t}$ may be affected by both $y_{s j}^{t-1}$ (direct) and $y_{s \ell}^{t-1}$ (secondary) through $a_{si}$.  A consequence of the multiplicative nature of the model is that the influence parameters must be interpreted in conjunction with one another.  For example, $a_{US, \ CHI}$ represents the expected increase in Chinese contributions to international institution $k$ \emph{for each $b_{jk}$ unit} of expenditure of the US to organization $j$ in the previous year.   Thus, the interpretations of the country influences in $\A$ and the international-organization influences in $\B$ are intertwined. 
While there may be instances where $y_{s\ell}^t$ is influenced by $y_{kj}^{t-1}$, we argue that secondary dependencies are often likely of a smaller magnitude than the direct dependencies. In these cases, it would be undesirable to use, for example, a single parameter $a_{ik}$ to simultaneously capture direct and secondary influences.  The BLIN model assumes secondary dependencies are zero and focuses estimation on the direct dependencies.

The influence matrices $\A$ and $\B$ in the bilinear model are identifiable up to a multiplicative constant. For any $c \in \R$, the transformation $\{ \A, \B\} \rightarrow \{c \A , \B/c\}$ leaves the model for $\Y_t$ invariant. This implies that the relative scales  of the networks represented by $\A$ and $\B$ and the signs of the elements are not estimable. However, the ratio of elements within each influence network is identifiable, e.g. $a_{US, CHN} / a_{US, UK}$.

\subsection{Extensions of the BLIN Model}
\label{sec_ext_blin}
In this section, we discuss various extensions of the BLIN model. First, in the definition of the BLIN model in \eqref{eqbiten}, the first observation of the bipartite relations $\Y_t$ is a function of the $p$ past observations $\{\Y_{t-k} \}_{k=1}^{p}$. For simplicity, each past observations affects the current observation equally. Obviously, more complex dependence functions may be considered, such as an exponentially decaying influence of the past time periods on the current time period. In general,
$\sum_{k=1}^{p_A} \Y_{t-k}$ and $\sum_{k=1}^{p_B} \Y_{t-k}$ in \eqref{eqbiten} may be replaced by any $S\times L$ matrix function of the past observations, $f(\Y_{t-})$, where $\Y_{t-} := \{\Y_{t-k} \}_{k=1}^{p}$ and again $p=\text{max}(p_A, p_B)$.  
The selection and estimation of such functions is a current area of research: see \cite{krackhardt2007heider, krivitsky2009statistical, hanneke2010discrete,almquist2014logistic} for a discussion of general unipartite temporal network models and autoregressive models for unipartite temporal networks.

The linear nature of the BLIN model simplifies its extension to other types of outcomes, e.g. binary or count observations. Let $y_{ij}^t$ be a general measure of the  relation between actors $i$ and $j$ at time $t$. Then, a general BLIN model may be expressed
\begin{align}
g(E[\Y_t | \Y_{t-}])  &= \A^T \sum_{k=1}^{p_A} \Y_{t - k}  + \sum_{k=1}^{p_B} \Y_{t - k}\B, \label{eqbiten_glm}
\end{align}
where $g(.)$ is an appropriate link function based on the form of $\Y_t$ \citep{mccullagh1989generalized}.  
Off-the-shelf tools are again available for estimation of the model in \eqref{eqbiten_glm} if $g$ is a standard link function. 
When $g$ is the canonical link function for logistic regression, for example, the BLIN model in \eqref{eqbiten_glm} may be viewed as a conditional logistic discrete choice model \citep{mcfadden1973conditional}; models of this type have recently been employed in network representations \citep{overgoor2018choosing}.

\section{Estimation of the BLIN model}
\label{sec_est}
In the following, we discuss several estimation procedures for the BLIN model. First, we propose an estimator that results from minimizing a least squares criterion. 
We then consider more parsimonious estimators using sparsity-inducing penalties and reduced-rank approaches.
For ease of notation, we define the regressor matrices in \eqref{eqbiten}
as $\X_t := \sum_{k=1}^{p_A} \Y_{t - k}$ and  
$\Z_t := \sum_{k=1}^{p_B} \Y_{t - k}$, such that the BLIN model can be expressed
\begin{align}
\Y_t  &= \A^T \X_t + \Z_t\B + \E_t. \label{blin_Xt}
\end{align}
In the theory that follows in this section and the next, we treat $\X_t$ and $\Z_t$ generally as, in principle, they may be any sequence of matrices of appropriate size.

\subsection{Least Squares Estimator}
\label{sec_estimation}
Based  on the vector representation of the BLIN model in \eqref{eqRegBiten}, we propose minimizing the following least squares criterion to construct an estimator for the  $\A$ and $\B$ matrices: 
\begin{align}
\hat{\btheta} &=\argmin_{\btheta} (\y - \mathbb{X}_B \btheta)^T (\y - \mathbb{X}_B \btheta), \label{eqopt}\\
& = \argmin_{ \{\A,\B\} } \sum_t ||\Y_t - \A^T \X_t - \Z_t \B ||^2_F
\label{eq_LS_matrix}
\end{align}
where $\y^T := [\y_1^T, \y_2^T, \ldots, \y_T^T]$ 
such that $\y \in \R^{SLT}$ and $\mathbb{X}_B \in \R^{SLT \times (S^2 + L^2)}$ is the column-wise stacking of the design matrices $\{ \mathbb{X}_B^{(t)} \}_{t=1}^T$ in the vector representation of the BLIN model in \eqref{eq_blinvec}. 
The explicit solution to \eqref{eqopt} is
{ \small
\begin{align}
\left[ \begin{array}{c} 
\vec{\hat{\A}^T}  \\
\vec{\hat{\B}} 
\end{array} \right]
=
\left[ \begin{array}{cc} 
\left(\sum_{t=1}^T \X_t \X_t^T \right) \otimes \I_S & 
\sum_{t=1}^T \X_t \otimes \Z_t \\
\sum_{t=1}^T \X_t^T \otimes \Z_t^T & 
\I_L \otimes \left(\sum_{t=1}^T  \Z_t^T \Z_t \right)
\end{array} \right]^{-}
\left[ \begin{array}{c} 
\vec{\sum_{t=1}^T \Y_t \X_t^T }  \\
\vec{\sum_{t=1}^T \Z_t^T \Y_t  }
\end{array} \right],\label{eqABexact}
\end{align}}
where $\mathbf{H}^{-}$ denotes the generalized inverse of square matrix $\mathbf{H}$.

Computing the solution in \eqref{eqABexact} requires inversion of a matrix of dimension $S^2 + L^2$. Using an iterative algorithm to solve \eqref{eqopt}, this computation can be replaced by repeated inversions of square matrices of dimensions $S$ and $L$.  Specifically, Algorithm \ref{alg:BLIN_it} details a block coordinate descent procedure, which alternates between solving for $\A$ and $\B$. 
Particularly, the iteration scheme reduces memory demand and reduces the complexity of computation from $O((S^2 + L^2)^3)$ in \eqref{eqABexact} to $O(T\cdot \text{max}(S,L)^3)$. In the data analysis in Section~\ref{section_temporal_SID} (with $S=L=25$ and $T = 543$), we find the iteration scheme advantageous over building the design matrix $\mathbb{X}_B$.
The estimator in Algorithm~\ref{alg:BLIN_it} converges to a  unique minimum when $\left( \sum_t \Z_t^T \Z_t \right)$ and $\left( \sum_t \X_t \X_t^T \right)$ are full rank. 
\spacingset{1}
\begin{algorithm}
\caption{Block coordinate descent  estimation of BLIN model}
\label{alg:BLIN_it}
\begin{enumerate}
  \setcounter{enumi}{-1}
\item Set threshold for convergence $\eta$. Set number of iterations $\nu = 1$. Initialize $\hat{\A}^{(0)} = \I_S $, $\hat{\B}^{(0)} = \I_L$ and $Q_0 = \sum_t ||\Y_t||^2_F$.
\item Compute $\hat{\B}^{(\nu)} = \left( \sum_t \Z_t^T \Z_t \right)^{-1} \left( \sum_t \Z_t^T \tilde{\Y}_t^{(A)} \right)$, where $\tilde{\Y}_t^{(A)} := \Y_t - (\hat{\A}^{(\nu - 1)} )^T \X_t$ for all $t$.
\item Compute $( \hat{\A}^{(\nu)} )^T = \left( \sum_t (\tilde{\Y}_t^{(B)})^T \X_t \right) \left( \sum_t \X_t \X_t^T\right)^{-1}$, where $\tilde{\Y}_t^{(B)} := \Y_t -  \Z_t \hat{\B}^{(\nu)}$ for all $t$.
\item Compute the least squares criterion $Q_{\nu} = \sum_t ||\Y_t - (\hat{\A}^{(\nu)})^T \X_t - \Z_t \hat{\B}^{(\nu)} ||^2_F$.  If  $|Q_{\nu} - Q_{\nu - 1}| > \eta$, increment $\nu$ and return to 1.  
\end{enumerate}
\end{algorithm}
\spacingset{1.5}

\subsection{Sparse Coefficients}
\label{sec_sparse}
Although influence may be multifactorial, it is easy to imagine scenarios where many entries in $\A$ and $\B$ are small or zero. In the international affairs example, democracies may only influence other democracies, or organizations dealing with issues at a global level may only be influenced by other global institutions.
To leverage this fact in parameter estimation, we propose augmenting the least-squares criterion of the
of the BLIN model in~\eqref{eqopt} with a sparsity-inducing penalty. 
Here, we consider the Lasso penalty \citep{tibshirani1996regression}, which uses an $L^1$ norm on $\btheta$ to simultaneously perform variable selection and regularization.  We term this model the \emph{sparse BLIN} model as elements of $\hat{\btheta}$ are forced to the zero.  The estimation objective function is
\begin{align}
\hat{\btheta} =\argmin_{\btheta} (\y - \mathbb{X}_B \btheta)^T  (\y - \mathbb{X}_B \btheta) + \lambda ||\btheta||_1, \label{eqLasso}
\end{align}
where $\lambda$ is a tuning parameter, 
and larger values of $\lambda$ correspond to more regularization. Since the BLIN model is linear, the vector of parameters $\btheta$ may be  regularized with any of a host of existing penalty terms \citep{hoerl1970ridge,friedman2001elements}.
We note that, as presented in \eqref{eqLasso}, the sparse BLIN estimator in  loses some of the scalability of the full BLIN estimator presented in Algorithm~\ref{alg:BLIN_it}. However, it may be possible to implement an estimator of the sparse BLIN model in the spirit of Algorithm~\ref{alg:BLIN_it}, that is, with alternating updates of $\A$ and $\B$. As we find no issue estimating \eqref{eqLasso} in the data analysis in Section~\ref{section_temporal_SID} with $S=L=25$ and $T=543$, we leave this implementation for future work.

\subsection{Reduced-Rank Coefficients}
\label{sec_reduced}
Thus far we discussed sparsity-inducing penalties for the vector of regression model coefficients $\btheta$.
However, several other 
penalties on the singular value decomposition (SVD) of coefficient matrices $\A$ and $\B$ have been proposed.  These penalties result in coefficient estimates of $\A$ and $\B$ with reduced-rank, or approximately reduced-rank. 
\cite{yuan2007dimension} propose a nuclear norm penalty, which is an $L^1$ penalty of the singular values of $\A$.
Similarly, \cite{bunea2011optimal} recommend a rank selection criteria penalty that is proportional to the rank of $\A$, i.e. a $L^0$ penalty on the singular values $\A$. This second approach provides simultaneous shrinkage on $\A$ and consistent estimation of its (reduced) rank. 

Here we consider estimation of reduced-rank $\A$ and $\B$, i.e., ${\rm rank}(A) = k < S$ and ${\rm rank}(B) = m < L$. This assumption may be appropriate when there is lower-dimensional structure inherent in $\A$ and $\B$: for example, the influences among countries may be grouped by region. This approach is employed in reduced-rank regression,  first developed by \cite{anderson1951estimating}, and has connections to principal component analysis, as shown in~\cite{izenman1975reduced}. For any ranks $k < S$ and $m < L$ of $\A$ and $\B$, respectively, we may define a \emph{reduced-rank BLIN model} by writing $\A^T = \U \V^T$ for %$k$-rank
$\U,\V \in \R^{S \times k}$ and $\B = \bR \S^T$ for
$\bR, \S \in \R^{L \times m}$. These decompositions are not identifiable up to a full-rank transformation of the decompositions, e.g. $\{\U,\V \}$ and $ \{ \U \mathbf{G},  \mathbf{G}^{-1} \V\}$ result in the same influence matrix $\A$ for any invertible matrix $\mathbf{G}$. However, the estimands $\A$ and $\B$ remain identifiable up to an additive constant along the diagonal, as discussed in Section~\ref{sec_model}. 

We estimate a reduced-rank BLIN model by minimizing the least-squares criteria in~\eqref{eq_LS_matrix}  with $\A$ replaced by $\U\V^T$, $\B$ replaced by $\bR \S^T$, and minimizing over $\{\U,\V,\bR,\S\}$. This optimization problem is easily solved using a block coordinate descent algorithm 
similar to Algorithm~\ref{alg:BLIN_it} for the full BLIN model and with similar computational complexity
(see Section~\ref{sec:MLE_reduced_rank} of the Supplementary Material for details).
In what follows, we refer to the  BLIN model with no constraints on the parameters $\A$ and $\B$ as the \emph{full BLIN} model, in order to distinguish it from the sparse and reduced-rank versions.

\section{Estimator Properties}
\label{sec_est_prop}
This section examines properties of the least squares estimators of the full and reduced-rank BLIN models (all proofs are provided in Section~\ref{AppProofs} of the Supplementary Material).  We show that the least squares estimators are unique for a relatively small number of observations $T$ and give some sufficient conditions for their asymptotic normality and efficiency. We then examine the properties of the least squares estimators under misspecification, providing sufficient conditions for their consistency.
As in the previous section, we use the representation of the BLIN model in \eqref{blin_Xt}.

\subsection{Uniqueness and Efficiency of Least Squares Estimators}
\label{sec_unique}
When $\X_t$ and $\Z_t$ in \eqref{blin_Xt} represent past observations of $\Y_t$, the BLIN model is a VAR model as discussed following \eqref{eqbiten_VAR}. For the estimators in Section~\ref{sec_ext_blin} to be consistent, stationarity of the time series is required \citep{brockwell1991time}. When $p = 1$, a sufficient condition for stationarity is that the eigenvalues of $\bTheta_1$ in \eqref{eqbiten_VAR} all have modulus less than one. To define a sufficient condition for stationarity in the general case of $p>1$ and $p_A \neq p_B$, we rewrite the VAR version of the BLIN model in \eqref{eqbiten_VAR} in its {companion form} \citep{zivot2006vector} by augmenting the vector of observations at $t$ with the $p$ previous observations, $\boldsymbol{\xi}_t^T = [\y_t^T, \y_{t-1}^T, \ldots, \y_{y-p}^T]$,  and augmenting the error vector at time $t$ with an appropriate number of zeros, $\mathbf{v}_t^T = [\e_t^T, \mathbf{0}^T, \ldots, \mathbf{0}^T]$. Then, the BLIN model can be expressed
\begin{align}
&\boldsymbol{\xi}_t = \mathbf{F} \boldsymbol{\xi}_{t-1} + \mathbf{v}_t, \hspace{.4in}
\mathbf{F} := \begin{bmatrix}
    \mathbf{1}^T_{q} \otimes \bTheta_1 \,\, ; \,\, \mathbf{1}^T_{p - q} \otimes \bTheta_2 \\
   \I_{SL(p-1)} \,\, ; \,\, \mathbf{0} \\
    \end{bmatrix}, \label{eq_var_companion}
\end{align}
where $q = {\min}(p_A, p_B)$ and $\mathbf{1}_n$ is the vector of $n$ ones. 
Then, a sufficient condition for stationarity of the BLIN model is that moduli of the eigenvalues of the {companion matrix} $\mathbf{F}$ are all less than one \citep{zivot2006vector}.

The optimization problem described in~\eqref{eqopt} is convex, and under stationarity, by the Gauss-Markov theorem, it has a unique solution whenever $\mathbb{X}_B$ is full rank \citep[see, e.g.,][]{graybill1976theory}.
Due to the non-identifiability of the diagonal entries of $\A$ and $\B$, $\mathbb{X}_B$ is never full rank. However, the projection of $\y$ onto the column space of $\mathbb{X}_B$,  i.e., $\hat{\y} = \mathbb{X}_B \hat{\btheta}$, is always unique. Thus, when the column space of $\mathbb{X}_B$ spans the space of possible $\A$ and $\B$ matrices up to their non-identifiability, the BLIN estimator in \eqref{eqABexact} is unique (up to the non-identifiability properties). This is true 
when the rank of $\mathbb{X}_B$ is maximized, that is one less than the number of columns: ${\rm rank}(\mathbb{X}_B) = S^2 + L^2 - 1$. The `$-1$' results from the fact that if a single diagonal entry in  $\{ a_{ii} \}_{i=1}^S$ or $\{ b_{jj} \}_{j=1}^L$ is known, then the rest of the diagonal entries are identifiable.  We now provide a proposition that states conditions under which $\mathbb{X}_B$ is maximal rank; these conditions are satisfied with probability one when, e.g., $\{ \X_t \}_{t=1}^T$ and $\{ \Z_t \}_{t=1}^T$ are distributed array normal. 
\begin{proposition}
\label{prop_rankXb}
Without loss of generality, take $S \le L$. Assume that the $TS \times L$ matrices formed by the column-wise concatenation $[\X_1 ; \X_2 ; \ldots ;\X_t]$ and $[\Z_1 ; \Z_2 ; \ldots ;\Z_t]$ are full rank. Then, the design matrix has ${\rm rank}(\mathbb{X}_B) = {\rm min}(TSL, S^2 + L^2 - 1). $
\end{proposition}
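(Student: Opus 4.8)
The plan is to convert the rank computation into a kernel computation that exploits the meaning of $\mathbb{X}_B \btheta$. By construction of the design matrix in \eqref{eq_blinvec}, the $t$-th block of $\mathbb{X}_B \btheta$ is $\vec{\A^T \X_t + \Z_t \B}$, so $\mathbb{X}_B$ is the matrix of the linear map $\Phi(\A,\B) = (\A^T \X_t + \Z_t \B)_{t=1}^T$ from $\R^{S\times S}\times\R^{L\times L}$ into $(\R^{S\times L})^T$. Hence ${\rm rank}(\mathbb{X}_B) = \dim{\rm Im}(\Phi)$, and since the domain has dimension $S^2+L^2$ while the codomain has dimension $TSL$, rank--nullity gives ${\rm rank}(\mathbb{X}_B) = (S^2+L^2) - \dim\ker\Phi \le TSL$. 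Equivalently one may work with the normal matrix $\H$ displayed in \eqref{eqABexact}, using ${\rm rank}(\mathbb{X}_B) = {\rm rank}(\H)$ and $\ker\H = \ker\mathbb{X}_B$. The whole problem thus reduces to controlling $\dim\ker\Phi$, with the two arguments of the minimum corresponding to the sign of $TSL - (S^2+L^2-1)$.

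I would first prove the upper bound ${\rm rank}(\mathbb{X}_B) \le \min(TSL, S^2+L^2-1)$. The bound by $TSL$ is automatic from the row count. For the bound by $S^2+L^2-1$ I would exhibit the explicit null vector coming from the diagonal non-identifiability of Section~\ref{sec_model}: take $\btheta_0 = [\,\vec{\I_S}^T, -\vec{\I_L}^T\,]^T$ and apply the Kronecker--vec identity $(\U^T\otimes\V)\vec{\M} = \vec{\V\M\U}$ to each block, which gives $\mathbb{X}_B^{(t)}\btheta_0 = \vec{\X_t} - \vec{\Z_t} = \vec{\X_t - \Z_t}$. In the autoregressive regime where $\X_t$ and $\Z_t$ coincide this vanishes for every $t$, so $\btheta_0$ is a nonzero element of $\ker\Phi$ and $\dim\ker\Phi \ge 1$, yielding ${\rm rank}(\mathbb{X}_B)\le S^2+L^2-1$.

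The matching lower bound splits into the two cases of the minimum. When $TSL \ge S^2+L^2-1$, the target is $\dim\ker\Phi = 1$, i.e. every $(\A,\B)$ with $\A^T\X_t + \Z_t\B = \mathbf{0}$ for all $t$ must be a scalar multiple of $(\I_S,-\I_L)$. My route is to read the equation column by column, $\A^T(\X_t)_{\cdot j} = -\Z_t (\B)_{\cdot j}$, use the full column rank of the stacked matrix $[\Z_1;\ldots;\Z_T]$ to show each column of $\B$ is pinned down by $\A$, and then combine the $T$ relations (with the full-rank hypothesis on the stacked $\X$-matrices) to force $\A$ and $\B$ to act as a single common scalar, i.e. $\A = c\I_S$, $\B = -c\I_L$. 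When instead $TSL < S^2+L^2-1$, the target is full row rank $TSL$: I would show the rows of $\mathbb{X}_B$, indexed by $(i,j,t)$ and carrying the entries $(\X_t)_{sj}$ and $(\Z_t)_{i\ell}$ read off \eqref{eqbitenSum}, are linearly independent, again using that the stacked data matrices admit no common null direction.

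The main obstacle is the first case, pinning $\dim\ker\Phi$ to exactly one. The subtlety is that full rank of the concatenations $[\X_1;\ldots;\X_T]$ and $[\Z_1;\ldots;\Z_T]$ only rules out common null directions of the individual blocks and does not by itself prevent extra simultaneous intertwiners of the family $\{\X_t,\Z_t\}$ from entering the kernel; one genuinely needs the blocks to vary across $t$, which is precisely where the array-normal genericity enters to guarantee the hypotheses with probability one. I would therefore show that the full system across all $t$, together with the full-rank conditions, leaves no freedom beyond the additive diagonal shift, so that $\ker\Phi = \R\,\btheta_0$ and the rank equals $S^2+L^2-1$; the complementary row-rank case is then a comparatively routine independence check.
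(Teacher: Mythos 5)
You follow the same two-case skeleton as the paper (a row-rank argument when $TSL < S^2+L^2-1$, a kernel analysis otherwise, with the kernel meant to be spanned by the non-identifiability direction), but the decisive step of your plan --- ``combine the $T$ relations (with the full-rank hypothesis on the stacked $\X$-matrices) to force $\A = c\I_S$, $\B = -c\I_L$'' --- is never carried out, and it cannot be derived from the stated hypotheses. Concretely, take $S=L=2$, $T=2$, $\X_t=\Z_t$ with $\X_1=\I_2$ and $\X_2=\mathrm{diag}(1,2)$. Then $TSL=8\ge 7=S^2+L^2-1$ and both stacked $4\times 2$ matrices are full rank, yet for \emph{every} diagonal $\D$ the pair $(\A,\B)=(\D,-\D)$ satisfies $\A^T\X_t+\Z_t\B=\mathbf{0}$ for $t=1,2$, because diagonal matrices commute; hence $\ker\Phi$ has dimension $2$, not $1$. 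More generally, for $S=L$, $T=2$, and invertible $\X_t=\Z_t$, the kernel is isomorphic to the commutant of $\X_2\X_1^{-1}$, whose dimension is at least $S$. You half-identify this obstruction yourself (``full rank of the concatenations \ldots does not by itself prevent extra simultaneous intertwiners''), but your proposed resolution --- proving the claim ``together with the full-rank conditions'' --- is exactly what the example refutes, and appealing to array-normal genericity does not repair it: genericity is invoked in the paper only as a sufficient condition for the full-rank hypotheses, and in the $T=2$ example generic draws still leave an $S$-dimensional commutant.

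For comparison, the paper closes this step by a different device: it asserts a dichotomy that any kernel vector either utilizes the non-identifiability or has both blocks $u_1,u_2$ annihilated by the stacked Kronecker matrices $[\X_1^T\otimes\I_S;\ldots]$ and $[\I_S\otimes\Z_1;\ldots]$, and then uses the full-rank hypotheses to exclude the second alternative; your elimination scheme (solve for $\B$ in terms of $\A$ via the stacked $\Z$, then intertwine) neither reproduces that dichotomy nor substitutes a proof for it (indeed, the example above is in tension with that dichotomy as well, which indicates the real fix is stronger hypotheses --- enough variation across $t$ --- rather than a cleverer argument from the stated ones). Two smaller omissions: (i) in the row-rank case your ``routine independence check'' silently requires $TS\le L$, since otherwise full rank of a $TS\times L$ matrix says nothing about its rows; the implication $TSL\le S^2+L^2-1 \Rightarrow TS\le L$ (given $S\le L$) is an arithmetic lemma the paper proves and you never state. (ii) Your upper bound $\mathrm{rank}(\mathbb{X}_B)\le S^2+L^2-1$ holds only when $\X_t=\Z_t$ for all $t$, as you note in passing, so that assumption must be made explicit for your case split to establish the claimed equality.
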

A consequence of Prop.~\ref{prop_rankXb} is that the full BLIN model has a unique solution when $TSL \ge S^2 + L^2 - 1$.  A key implication of this is that a unique solution exists for relatively small $T$. For instance, if $S=L$, then the BLIN model has a unique solution (modulo the non-identifiability) when $T=2$. 
Figure~\ref{fig:uniquesol} plots values of $S$, $L$, and $T$ 
for which the full BLIN model has a unique solution. As $T$ grows, the space of values for which the BLIN model has a unique solution rapidly spans all values of $S$ and $L$, except when their values are extremely disparate.

\spacingset{1}
\begin{figure}[ht]
\centering
\begin{tabular}{c c c c} 
\includegraphics[width=.2\textwidth]{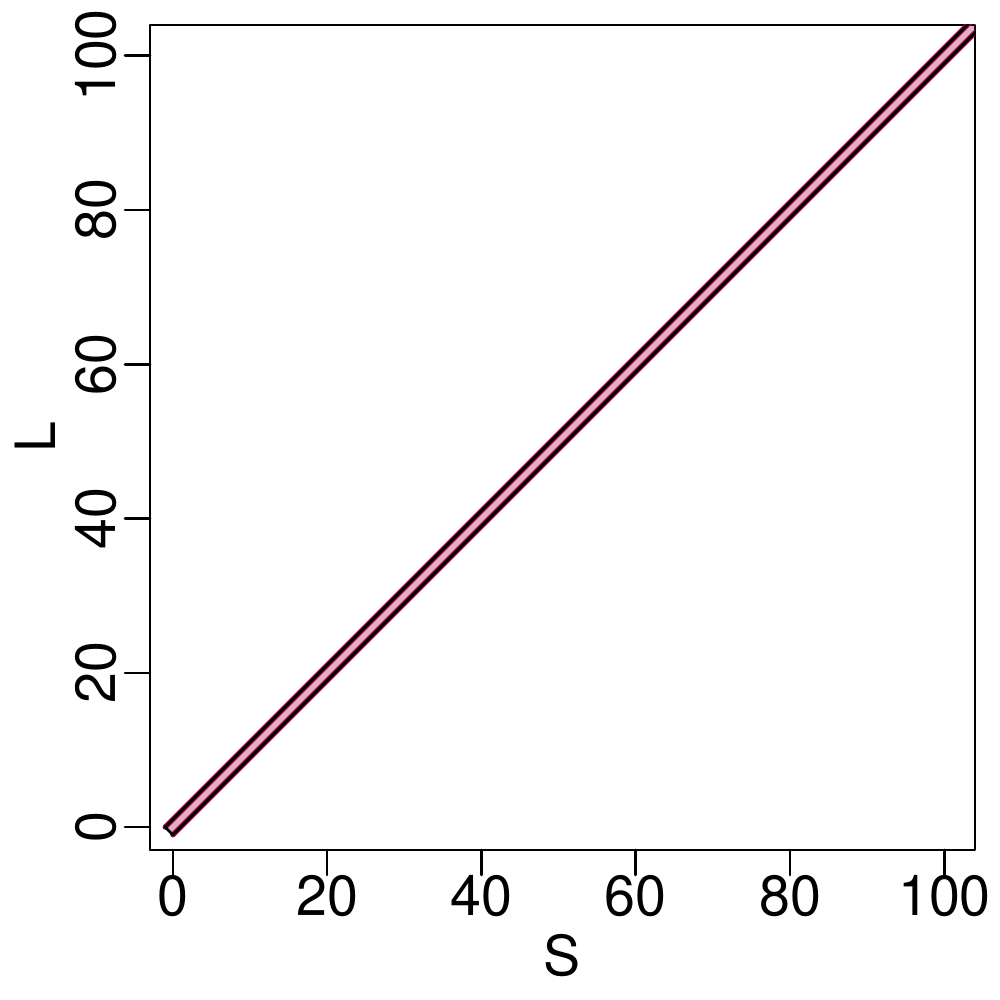} &
\includegraphics[width=.2\textwidth]{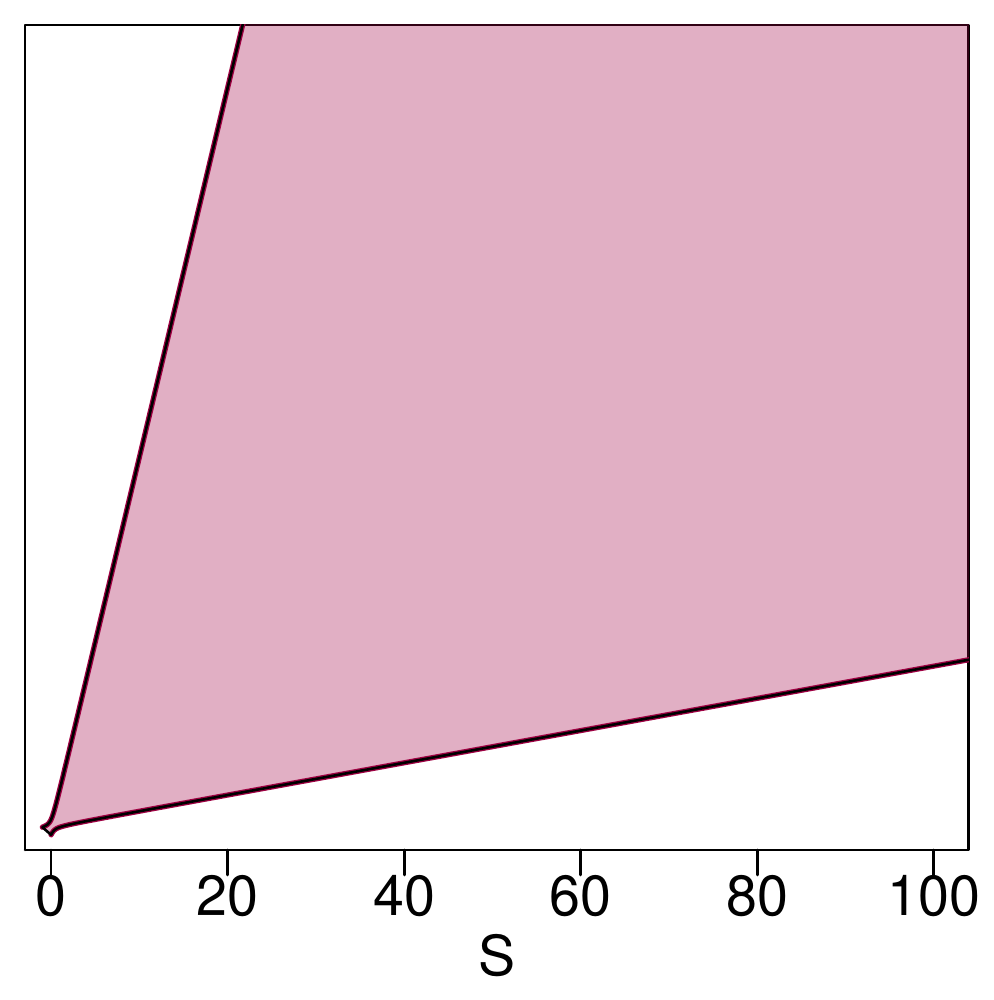} &
\includegraphics[width=.2\textwidth]{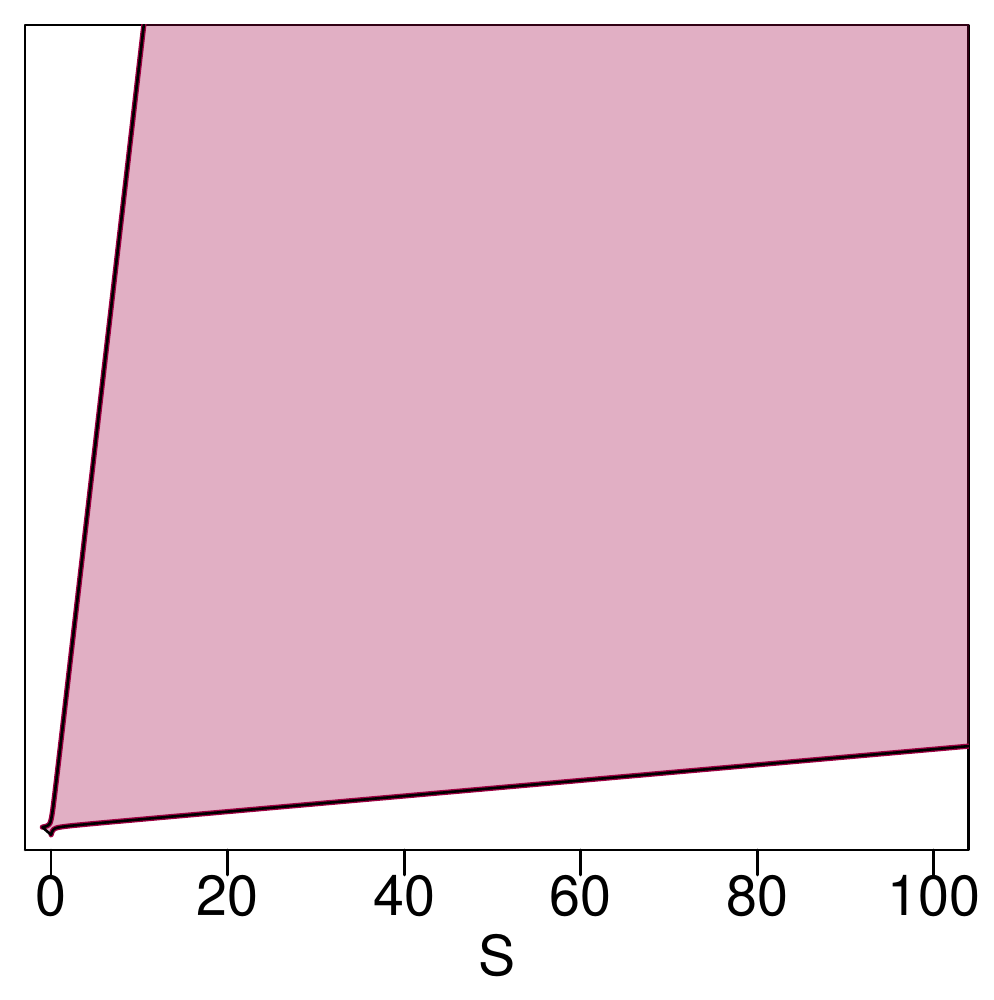} &
\includegraphics[width=.2\textwidth]{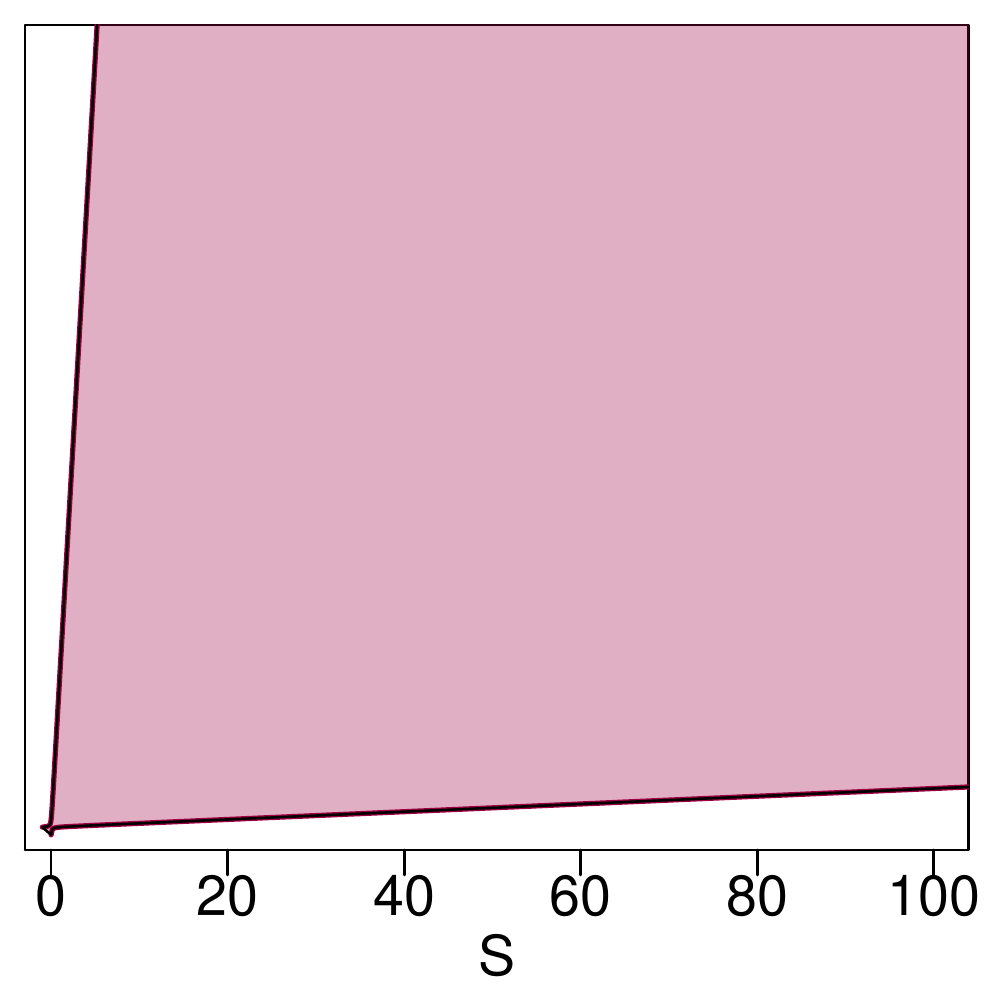}  \\
$T = 2$ &
$T = 5$ &
$T = 10$ &
$T = 20$ 
\end{tabular}
\caption{Dimensions of square matrices $\A$ and $\B$, $S$ and $L$ respectively, when the BLIN model has a unique solution. Shaded areas denote unique solutions. }
\label{fig:uniquesol}
\end{figure}
\spacingset{1.5}

Under the BLIN model in \eqref{eqbiten} when $\X_t$ is independent $\E_t$
and the true coefficients have companion matrix $\mathbf{F}$ in \eqref{eq_var_companion} with eigenvalues in the unit circle, 
the Gauss-Markov theorem \citep{graybill1976theory} states that the BLIN estimator $u^T\hat{\btheta}$, where $\hat{\btheta}$ is any solution to \eqref{eqopt} and  $u^T\btheta$ defines any identifiable linear combination of $\btheta$, is the best linear unbiased estimator (BLUE) of $u^T\btheta$. 
Additionally, these estimates are asymptotically normal at rate $\sqrt{T}$.
Finally, we note that the estimator in \eqref{eqopt} is the maximum likelihood estimator when $\y$ is normally distributed with homogeneous variance.
Thus, under regularity conditions \citep{lehmann2006theory}, the limiting normal random variable for $u^T \hat{\btheta}$, has minimum asymptotic variance.

Recall that the least squares estimates of the reduced-rank BLIN model
can be obtained using an iterative block-coordinate descent algorithm (see Section~\ref{sec:MLE_reduced_rank} of the Supplementary Material).  When every matrix inverse in the update equations is unique, the estimates $\hat{\A}$ and $\hat{\B}$ from this algorithm converge to a local minimum, which is unique up to non-identifiabilities in $\A$ and $\B$ provided that ${\rm rank}(\mathbb{X}_B) = S^2 + L^2 - 1$.  
The reduced-rank estimators are the maximum likelihood estimators for $\A$ and $\B$ with ranks $k$ and $m$, respectively, under the assumptions of normally distributed independent errors $\E_t$ with homogeneous variance.  Therefore, under these conditions, $\hat{\A}$ and $\hat{\B}$ resulting from the iterative block-coordinate descent algorithm are consistent and asymptotically normal with minimum asymptotic variance.

\subsection{Least Squares Estimator Properties under Misspecification}
\label{section_misspec} 
Thus far, we have discussed the attractive properties of the least squares estimators of the BLIN model, $\hat{\A}$ and $\hat{\B}$, under the assumption that the model is correctly specified. It is useful to determine the limiting values of the estimators when the model is misspecified.
The limiting values of $\hat{\A}$ and $\hat{\B}$, which we denote $\tilde{\A}$ and $\tilde{\B}$, respectively, are referred to as pseudo-true parameters in the model misspecification literature, first investigated by~\cite{huber1967behavior}.  
The pseudo-true parameters, by definition, are 
\begin{align}
\{ \tilde{\A}, \tilde{\B} \} = \argmin_{ \{\A, \B \} } \ E \left[ \sum_{t= 1}^T  || \mu(\X_t) - \A^T \X_t - \Z_t \B ||_F^2 \right], \label{eqRR2norm} 
\end{align}
where $\Y_t$ has expectation $E[\Y_t | \X_t, \Z_t] = \mu(\X_t, \Z_t)$, some general function of $\X_t$ and $\Z_t$, and the expectation in \eqref{eqRR2norm} 
is over $\X_t$ and $\Z_t$. Note that this expression holds regardless of the distribution of $\Y_t$, $\X_t$, and $\Z_t$  and the form of $\mu(\X_t, \Z_t)$.  We note that the limiting values $\tilde{\A}$ and $\tilde{\B}$ minimize the Kullback-Leibler divergence from the true distribution of $\Y_t$ to the distribution of $\Y_t$ under the BLIN model with Gaussian errors. 
In this section, we assume that the variance-covariance matrices of $\x_t := {\rm vec}(\X_t)$ and $\z_t := {\rm vec}(\Z_t)$ are Kronecker-structured, that is $E[\x_t  \x_t^T] = \bOmega_X \otimes \bPsi_X$ and $E[\z_t  \z_t^T] = \bOmega_Z \otimes \bPsi_Z$ for $\X_t$ and $\Z_t$ mean zero. 
This is the case if, for example, $\X_t$ and $\Z_t$ are distributed matrix normal \citep{gupta2000}. This assumption is consistent with the theoretical treatment of the bilinear model in \cite{hoff2015multilinear}. We also assume that the errors are additive. Below, we formalize the conditions for the theory to follow.
\begin{conditions} \mbox{}
\label{cond:cond}
\begin{enumerate}
\item Each $\{ \X_t \}_{t=1}^T$ is identically distributed with mean zero and $E[\x_t \x_t^T] = \bOmega_X \otimes \bPsi_X$, and each $\{ \Z_t \}_{t=1}^T$ is identically distributed with mean zero and $E[\z_t \z_t^T] = \bOmega_Z \otimes \bPsi_Z$, for positive-definite matrices $\bOmega_X, \bOmega_Z \in \R^{L \times L}$ and $\bPsi_X, \bPsi_Z \in \R^{S \times S}$ with finite entries. 
\item For all $t \in \{1,2,\ldots, T \}$, $\Y_t = \mu(\X_t, \Z_t) + \E_t$, where $\mu(\X_t, \Z_t)$ is a general function of $\X_t$ and $\Z_t$, and every entry in $\E_t$ is independent and identically distributed with homogeneous, finite variance.
\item The sequence of $\{ \Y_t \}_{t=1}^T$ is weakly stationary. 
\end{enumerate}
\end{conditions}

In the remainder of this section, we examine some properties of the pseudo-true parameters (all proofs are provided in Section~\ref{AppProofs} of the Supplementary Material). 
The $(i,j)$ entry in $\tilde{\A}$, denoted $\tilde{a}_{ij}$, estimates the linear relationship between row $i$ of $\X_t$ and row $j$ of $\Y_t$ across all time. Similarly, the $(i,j)$ entry in $\tilde{\B}$, denoted $\tilde{b}_{ij}$, estimates the linear relationship between column $i$ of 
$\Z_t$ and column $j$ of $\Y_t$ across all time. The first proposition states that when there is no linear relationship, the appropriate pseudo-true parameter is zero. We denote row $i$ of $\X_t$ as $\x_{i \cdot t}$ and column $j$ of $\X_t$ as $\x_{\cdot j t}$, and do the same for $\Y_t$ and $\Z_t$. 

\begin{proposition}
Under Conditions~\ref{cond:cond}, if $\bOmega_X$ is diagonal and $E[\y_{j \cdot t}^T \x_{i \cdot t}] =0$ for all $t$, then
$\tilde{a}_{ij}$ = 0. Alternatively, if $\bPsi_Z$ is diagonal and $E[\z_{\cdot i t}^T \y_{\cdot j t} ] = 0$ for all $t$, then 
$\tilde{b}_{ij}$ = 0. \label{propPseudo1}
\end{proposition}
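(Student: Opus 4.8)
\section*{Proof proposal for Proposition~\ref{propPseudo1}}

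The plan is to characterize the pseudo-true pair through the first-order conditions of the population least-squares objective in \eqref{eqRR2norm} and then inspect a single entry. Because $\E_t$ is additive, independent, and mean zero (Conditions~\ref{cond:cond}), that objective differs from $Q(\A,\B):=E[\sum_t\|\Y_t-\A^T\X_t-\Z_t\B\|_F^2]$ only by the constant $E\sum_t\|\E_t\|_F^2$, so $\{\tilde{\A},\tilde{\B}\}$ minimize the convex $Q$. Setting the two matrix gradients of $Q$ to zero gives the coupled normal equations
\begin{align}
E\Big[\textstyle\sum_t \X_t\Y_t^T\Big] &= E\Big[\textstyle\sum_t \X_t\X_t^T\Big]\tilde{\A} + E\Big[\textstyle\sum_t \X_t\tilde{\B}^T\Z_t^T\Big], \nonumber\\
E\Big[\textstyle\sum_t \Z_t^T\Y_t\Big] &= E\Big[\textstyle\sum_t \Z_t^T\tilde{\A}^T\X_t\Big] + E\Big[\textstyle\sum_t \Z_t^T\Z_t\Big]\tilde{\B}. \nonumber
\end{align}
Reading the $(i,j)$ entry of the first equation, the left-hand side equals $\sum_t E[\x_{i\cdot t}^T\y_{j\cdot t}]=\sum_t E[\y_{j\cdot t}^T\x_{i\cdot t}]$, which vanishes by hypothesis.

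Next I would evaluate the second-moment blocks using the Kronecker structure of Conditions~\ref{cond:cond}: a direct moment computation gives $E[\X_t\X_t^T]=\mathrm{tr}(\bOmega_X)\bPsi_X$ and $E[\Z_t^T\Z_t]=\mathrm{tr}(\bPsi_Z)\bOmega_Z$. Hence the Gram matrix acting on $\tilde{\A}$ is a scalar multiple of $\bPsi_X$, and the decoupling step is to invoke diagonality of this factor: its inverse then does not mix the \emph{row} index, so the $(i,j)$ equation isolates $\tilde{a}_{ij}$ rather than entangling it with $\{\tilde{a}_{sj}\}_{s\neq i}$. This is the role played by the proposition's diagonality hypothesis (and, symmetrically, diagonality of the $\bOmega_Z$ factor for the $\tilde{b}_{ij}$ claim). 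With the left-hand side already zero, this would yield $\tilde{a}_{ij}=0$ were the cross term absent.

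The main obstacle is exactly that cross term, $E[\sum_t\X_t\tilde{\B}^T\Z_t^T]$ --- equivalently the off-diagonal Gram block $E[\sum_t\X_t\otimes\Z_t]$ in the vectorized system \eqref{eq_blinvec} --- which couples $\tilde{\A}$ to $\tilde{\B}$, so $\tilde{\A}$ is the regression of $\Y$ on $\X$ \emph{after adjusting for} $\Z\tilde{\B}$; vanishing of a single entry cannot be read off from the marginal moment condition alone. I would neutralize it by a Schur-complement (Frisch--Waugh--Lovell) argument: after eliminating $\tilde{\B}$ from the first equation, I must check that the adjusted Gram matrix for $\tilde{\A}$ stays diagonal in the row index and that the adjusted right-hand side retains a zero in the $(i,j)$ slot. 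Both hold once the cross-moment block is evaluated under the joint Kronecker (matrix-normal) structure consistent with Conditions~\ref{cond:cond}, and hold trivially when $\X_t$ and $\Z_t$ are uncorrelated, since then the off-diagonal block is $0$ and the decoupling is immediate. I expect verifying that this cross term respects the index structure to be the delicate part. The claim for $\tilde{b}_{ij}$ then follows as the exact transpose-dual of the $\tilde{a}_{ij}$ case, using the second normal equation together with $E[\z_{\cdot it}^T\y_{\cdot jt}]=0$, under the row/$\A$ versus column/$\B$ symmetry of the BLIN model.
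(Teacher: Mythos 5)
Your skeleton matches the paper's proof: the paper likewise characterizes $\{\tilde{\A},\tilde{\B}\}$ via the population normal equations (its expression \eqref{eqA0}), uses diagonality of $E[\X_t\X_t^T]$ to isolate a single entry, arriving at the entrywise formula \eqref{eq_aij}, namely $\tilde{a}_{ij} = \big( E[\x_{i\cdot t}^T \y_{j \cdot t}] - {\rm tr}( E[ \x_{i\cdot t} \z_{j \cdot t}^T] \tilde{\B}^T ) \big)/E[\x_{i \cdot t}^T \x_{i \cdot t}]$ for $i\neq j$, kills the first numerator term by the stated moment hypothesis, and obtains the $\tilde{b}_{ij}$ claim by transposing the model and swapping the roles of $\A,\B$ and $\X_t,\Z_t$ — exactly your transpose-dual step. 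Up to the cross term, your derivation and the paper's are the same argument. (One small point you share with the paper: your own computation gives $E[\X_t\X_t^T]={\rm tr}(\bOmega_X)\bPsi_X$, whose diagonality is governed by $\bPsi_X$ rather than the $\bOmega_X$ named in the statement — and dually you invoke $\bOmega_Z$ where the statement says $\bPsi_Z$. This $\bOmega\leftrightarrow\bPsi$ swap traces to the paper's conventions, not to a substantive error on your part, but it deserves an explicit remark rather than the silent identification you make.)

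The genuine gap is precisely the step you defer as ``the delicate part'': you never show the coupling term vanishes, asserting only that the needed facts ``hold once the cross-moment block is evaluated'' and verifying them in the trivial uncorrelated case. The paper's resolution shows why your Schur-complement/Frisch--Waugh--Lovell elimination of $\tilde{\B}$ is both missing its key computation and unnecessary: under the separable second-moment structure, the matrix inside the trace is itself zero entrywise. Concretely, $E[\x_{i\cdot t}\z_{j\cdot t}^T]$ is proportional to the $(i,j)$ entry of the $S\times S$ (row) covariance factor times the $L\times L$ (column) covariance matrix, and the diagonality hypothesis sets that $(i,j)$ entry to zero for $i\neq j$; hence ${\rm tr}(E[\x_{i\cdot t}\z_{j\cdot t}^T]\tilde{\B}^T)=0$ \emph{for every value of} $\tilde{\B}$, so no elimination of $\tilde{\B}$ is required at all — one simply reads \eqref{eq_aij} with both numerator terms zero. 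By contrast, your FWL route would require inverting the full $(S^2+L^2)$-dimensional Gram operator of the vectorized system \eqref{eq_blinvec}, and the adjusted Gram matrix after partialling out $\tilde{\B}$ does not in general remain diagonal in the row index, so your assertion that ``both hold'' would not come for free; the entrywise vanishing of the cross-moment block is the one computation that makes the proposition go through, and it is the one your proposal omits. You also omit the restriction $i\neq j$ under which the entrywise formula is derived (the diagonal entries are non-identifiable in any case).
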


In the setting of the international policy example, Proposition~\ref{propPseudo1} states that if country $i$'s expenditure is uncorrelated with country $j$'s contribution to the same organization in the following year, then the least squares estimator of $a_{ij}$ in the BLIN model will converge to $\tilde{a}_{ij} = 0$.
  
The following proposition provides alternative conditions under which the pseudo-true parameters are equal to zero.  It allows for more general covariance structure at the cost of assuming that the conditional mean of $\y_t$ is linear in $\x_t$. 
\begin{proposition}
Assume that there exists a linear relationship $E \big[ \y_t | \x_t, \z_t \big] = \bTheta_X \x_t + \bTheta_Z \z_t$ for all $t$ and Conditions~\ref{cond:cond} hold.  If all entries in $\bTheta_X$ relating row $i$ in $\X_t$ to row $j$ of $\Y_t$ are zero and $i \neq j$, then
$\tilde{a}_{ij}$ = 0. If all entries in $\bTheta_Z$ relating column $i$ in $\Z_t$ to column $j$ of $\Y_t$ are zero and $i \neq j$, then 
$\tilde{b}_{ij}$ = 0. \label{propPseudo2}
\end{proposition}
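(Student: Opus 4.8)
The plan is to exploit convexity: the population objective in \eqref{eqRR2norm} is a convex quadratic in $(\A,\B)$, so the pseudo-true parameters $\tilde{\A},\tilde{\B}$ are characterized by the first-order (normal) equations obtained by setting its gradient to zero. These are the population analogues of the explicit least-squares system in \eqref{eqABexact}. Writing $\mu_t := \mu(\X_t,\Z_t)$ and using that $\{\X_t\}$ and $\{\Z_t\}$ are identically distributed (Conditions~\ref{cond:cond}), the sum over $t$ contributes only a common factor, and the equations reduce to
\begin{align}
\tilde{\A}^T E[\X_t\X_t^T] + E[\Z_t\tilde{\B}\X_t^T] &= E[\mu_t\X_t^T], \\
E[\Z_t^T\tilde{\A}^T\X_t] + E[\Z_t^T\Z_t]\,\tilde{\B} &= E[\Z_t^T\mu_t].
\end{align}
The first equation states that row $j$ of the residual is uncorrelated with row $i$ of $\X_t$ for every $(i,j)$; its $(j,i)$ entry is exactly the stationarity condition defining $\tilde{a}_{ij}$.

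The next step is to evaluate the first equation under the Kronecker and linearity assumptions. By Conditions~\ref{cond:cond}, $E[\X_t\X_t^T]=\mathrm{tr}(\bOmega_X)\bPsi_X$, which is invertible, so I can solve for $\tilde{\A}^T = \left(E[\mu_t\X_t^T]-E[\Z_t\tilde{\B}\X_t^T]\right)\bPsi_X^{-1}/\mathrm{tr}(\bOmega_X)$. I then substitute the linear form $\vec{\mu_t}=\bTheta_X\x_t+\bTheta_Z\z_t$. The term $E[\mu_t\X_t^T]$ splits into a part driven by $\bTheta_X$, which through $E[\x_t\x_t^T]=\bOmega_X\otimes\bPsi_X$ carries a factor $\bPsi_X$, and a part driven by $\bTheta_Z$, which carries the cross-covariance $E[\x_t\z_t^T]$. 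Tracking indices, the $\bTheta_X$-part of the $(j,i)$ entry collapses (the $\bPsi_X$ and $\bPsi_X^{-1}$ cancel) to a $\bOmega_X$-weighted average of precisely those entries of $\bTheta_X$ that relate row $i$ of $\X_t$ to row $j$ of $\Y_t$. By hypothesis these entries all vanish, so the $\bTheta_X$-part contributes $0$ to $\tilde{a}_{ij}$.

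The remaining contributions to $\tilde{a}_{ij}$ come from the $\bTheta_Z$-part of $E[\mu_t\X_t^T]$ and from $E[\Z_t\tilde{\B}\X_t^T]$, both of which involve the cross-covariance $E[\X_t\Z_t^T]$ between the two regressor blocks, and I expect this coupling to be the main obstacle. My strategy is to eliminate $\tilde{\B}$ using the second normal equation and show that the net $\Z$-driven contribution does not load onto the off-diagonal entry. The structural fact I would lean on is that the BLIN column term enters the vectorized model as $\tilde{\B}^T\otimes\I_S$, whose row-$i$-to-row-$j$ entries are proportional to $\delta_{ij}$; hence the model can represent a cross-row effect ($i\neq j$) only through $\A$, never through $\B$. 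The genuine difficulty is that, for an arbitrary cross-covariance $E[\X_t\Z_t^T]$, an unmodeled cross-row effect carried by $\bTheta_Z$ can be correlated with row $i$ of $\X_t$ and thereby contaminate $\tilde{a}_{ij}$; controlling this confounding is exactly where Conditions~\ref{cond:cond} (and any structure ruling out such correlation between $\X_t$ and $\Z_t$) must be invoked to force the two cross terms to cancel. The symmetric conclusion $\tilde{b}_{ij}=0$ follows by solving the second normal equation for $\tilde{\B}$, using $E[\Z_t^T\Z_t]=\mathrm{tr}(\bPsi_Z)\bOmega_Z$, and applying the same argument with the roles of rows and columns, $\X_t$ and $\Z_t$, and $\bTheta_X$ and $\bTheta_Z$ interchanged.
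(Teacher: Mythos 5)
Your first half reproduces the paper's own computation: your two population normal equations are exactly \eqref{eqA0} and its counterpart for $\tilde{\B}$, and your index-tracking conclusion that the $(j,i)$ entry of the $\bTheta_X$-driven part of $E[\mu_t \X_t^T]\,E[\X_t\X_t^T]^{-1}$ collapses to an $\bOmega_X$-weighted average of the entries of $\bTheta_X$ relating row $i$ of $\X_t$ to row $j$ of $\Y_t$ is precisely the first term $\sum_{k,\ell}\s{C}_{k\ell}\,\omega_{k\ell}/{\rm tr}(\bOmega)$ of \eqref{eqA1}. The problem is the second half: you never prove anything about the $\Z$-driven terms. You announce a strategy (``eliminate $\tilde{\B}$ using the second normal equation and show that the net $\Z$-driven contribution does not load onto the off-diagonal entry'') and then concede that some unspecified structure ruling out correlation between $\X_t$ and $\Z_t$ ``must be invoked to force the two cross terms to cancel,'' without exhibiting any cancellation or identifying that structure in Conditions~\ref{cond:cond}. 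Since this coupling is the only non-routine content of the proposition, what you have is a plan with an acknowledged hole at its center rather than a proof.

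The paper closes that step by a different and much shorter observation, which needs neither the second normal equation nor a cancellation between the two cross terms. In the derivation of \eqref{eqA1}, the Kronecker covariance structure makes the correction $E[\Z_t\tilde{\B}\X_t^T]\,E[\X_t\X_t^T]^{-1}$ collapse to a scalar multiple of the identity --- the term $-{\rm tr}(\bOmega\tilde{\B})\,\I_S/{\rm tr}(\bOmega)$ --- because each $E[\z_{\cdot k t}\,\x_{\cdot \ell t}^T]$ is proportional to the common row factor $\bPsi$, which then cancels against $E[\X_t\X_t^T]^{-1}\propto \bPsi^{-1}$. A diagonal matrix contributes nothing to $\tilde{a}_{ij}$ for $i\neq j$ \emph{whatever the value of $\tilde{\B}$}, so $\tilde{\B}$ never needs to be solved for or eliminated; and passing from $\Z_t=\X_t$ to $\Z_t\neq\X_t$ alters only the $\bOmega$-type column-covariance weights appearing in \eqref{eqA1}, not this diagonal structure --- which is exactly the paper's one-sentence dismissal of the coupling (``this change only enters \eqref{eqA1} through $\bOmega$, which value is immaterial''). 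In other words, the missing lemma in your write-up is concrete: under the shared Kronecker row structure, every $\Z$-driven term in your solved expression for $\tilde{\A}^T$ is diagonal, so the confounding you anticipate for an arbitrary cross-covariance is excluded structurally, not by a cancellation between the two cross terms. With that lemma in place, the off-diagonal entry $\tilde{a}_{ij}$ is exactly the weighted average of the hypothesized-zero entries of $\bTheta_X$, and the claim for $\tilde{b}_{ij}$ follows, as you say and as the paper does, by transposing the model and swapping the roles of $\A$ and $\B$ and of $\X_t$ and $\Z_t$.
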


\subsection{Comparison of BLIN and Bilinear Least Squares Estimators}
\label{sec_comp_LS}
We now compare least squares estimates of the BLIN model to those of the bilinear model. 
As the bilinear model accommodates only a single lag for $\A$ and $\B$, in all that follows we fix $p_A = p_B = p$ so that $\Z_t = \X_t$, $\bOmega_X = \bOmega_Z =\bOmega$,  and $\bPsi_X = \bPsi_Z =\bPsi$. 
Both the BLIN and bilinear 
models aim to quantify the influences of the rows (columns) of $\X_t$ on the rows (columns) of $\Y_t$, but with different emphasis on the type of influence quantified (recall the discussion in the Introduction and Section~\ref{sec_bilinear_prop}). We provide a theorem and proposition stating that, when data are generated from the bilinear model, the least squares BLIN estimators of the off-diagonal entries in $\A$ and $\B$ converge to the corresponding $\A$ and $\B$ values used in the bilinear generative model, up to numerical constants. The analogous result holds when switching the roles of the bilinear and BLIN models. 
See Section~\ref{AppProofs} of the Supplementary Material for proofs.

\begin{theorem}
\label{thm:BLIN_bilinear_offdiag}
Under Conditions~\ref{cond:cond},
\begin{enumerate}
\item
If $\{ \Y_t \}_{t=1}^T$ are generated from the bilinear model in \eqref{eqBilinearMat}, then for all $i \not= j$ and $k \not= \ell$
\[\tilde{a}_{ij}  =\frac { {\rm tr}(\bOmega \B)} {{\rm tr}(\bOmega) } a_{ij},  \text{ \ \  and \ \ } \tilde{b}_{k \ell}  = \frac { {\rm tr}(\bPsi \A)} {{\rm tr}(\bPsi) } b_{k \ell}. \]

\item
If $\{ \Y_t \}_{t=1}^T$ are generated from the BLIN model in \eqref{eqbiten}, then for all $i \not= j$ and $k \not= \ell$
\[\bar{a}_{ij}  = \frac {{\rm tr}(\bOmega \bar{\B} )} {{\rm tr}(\bOmega \bar{\B} \bar{\B}^T)}a_{ij}  \text{ \ \  and \ \ } \bar{b}_{k \ell}  = \frac {{\rm tr}(\bPsi \bar{\A} )} {{\rm tr}(\bPsi \bar{\A} \bar{\A}^T)} b_{k \ell},  \]
where $\bar{\A} =\{\bar{a}_{ij}\}$ and $\bar{\B}=\{\bar{b}_{k \ell}\}$ are the pseudo-true parameters of least-squares estimation of the bilinear model.  
\end{enumerate}
\end{theorem}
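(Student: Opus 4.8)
The plan is to characterize the pseudo-true parameters directly through the first-order stationarity conditions of the population least-squares objective in \eqref{eqRR2norm}, and then to evaluate the resulting matrix moments using the Kronecker covariance structure in Conditions~\ref{cond:cond}. For part~1 (BLIN fit to bilinear data), the objective is convex and quadratic in $\{\A,\B\}$, so I would differentiate with respect to each of $\A$ and $\B$ and set the gradients to zero. After vectorizing and applying the mixed-product rule for Kronecker products, the normal equations reduce to the two matrix identities $E\big[\sum_t \R_t \X_t^T\big] = \mathbf{0}$ and $E\big[\sum_t \X_t^T \R_t\big] = \mathbf{0}$, where $\R_t := \A^T \X_t \B - \tilde{\A}^T \X_t - \X_t \tilde{\B}$ is the population residual and $\A^T\X_t\B$ is the true bilinear mean from \eqref{eqBilinearMat}. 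The whole computation then hinges on two moment identities that follow from $E[\x_t \x_t^T] = \bOmega \otimes \bPsi$ with $\bOmega,\bPsi$ symmetric positive-definite: for conformable $\M$,
\[ E[\X_t \M \X_t^T] = {\rm tr}(\bOmega \M)\, \bPsi \qquad \text{and} \qquad E[\X_t^T \M \X_t] = {\rm tr}(\bPsi \M)\, \bOmega, \]
which I would establish entrywise from $E[(\X_t)_{ik}(\X_t)_{j\ell}] = \bOmega_{k\ell}\bPsi_{ij}$ together with the symmetry of the covariance factors.

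Applying the first moment identity to each term of the first normal equation collapses it to $\big[{\rm tr}(\bOmega\B)\,\A^T - {\rm tr}(\bOmega)\,\tilde{\A}^T - {\rm tr}(\bOmega\tilde{\B})\,\I_S\big]\,\bPsi = \mathbf{0}$. Because $\bPsi$ is positive definite it cancels, and transposing leaves $\tilde{\A} = \frac{{\rm tr}(\bOmega\B)}{{\rm tr}(\bOmega)}\A - \frac{{\rm tr}(\bOmega\tilde{\B})}{{\rm tr}(\bOmega)}\I_S$. The term proportional to $\I_S$ is precisely the diagonal non-identifiability discussed in Section~\ref{sec_model}, so it touches only the diagonal; reading off the entries with $i\neq j$ gives the claimed formula for $\tilde{a}_{ij}$. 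The companion equation, handled identically with the second moment identity, yields $\tilde{b}_{k\ell}$.

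For part~2 (bilinear fit to BLIN data) the objective is no longer jointly convex in $\{\A,\B\}$, but the stationarity conditions still hold at the minimizer. Differentiating the bilinear objective gives $E\big[\sum_t \R_t \bar{\B}^T \X_t^T\big]=\mathbf{0}$ and $E\big[\sum_t \X_t^T \bar{\A} \R_t\big]=\mathbf{0}$, now with residual $\R_t = \A^T\X_t + \X_t\B - \bar{\A}^T\X_t\bar{\B}$ built from the true BLIN mean of \eqref{eqbiten}. Expanding the first condition and applying the same moment identity (with $\M$ equal to $\bar{\B}^T$, $\B\bar{\B}^T$, and $\bar{\B}\bar{\B}^T$ in the three terms) reduces it, after cancelling $\bPsi$ and transposing, to $\bar{\A}\,{\rm tr}(\bOmega\bar{\B}\bar{\B}^T) = {\rm tr}(\bOmega\bar{\B})\,\A + {\rm tr}(\bOmega\B\bar{\B}^T)\,\I_S$; the off-diagonal entries deliver $\bar{a}_{ij}$, and the symmetric argument gives $\bar{b}_{k\ell}$. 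These are genuinely fixed-point expressions, since $\bar{\A}$ is pinned down only in terms of $\bar{\B}$ and conversely, but the off-diagonal proportionality is exactly as stated.

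The main obstacle I anticipate is the bookkeeping in the moment identities, and in particular keeping the transposes straight so that ${\rm tr}(\bOmega\M)$ rather than ${\rm tr}(\bOmega\M^T)$ appears, which is where the symmetry of $\bOmega$ and $\bPsi$ is used. The conceptual crux, however, is verifying that the cross term $E[\X_t\tilde{\B}\X_t^T]$ contributes only a multiple of $\I_S$: this structural fact, that the coupling between the two influence matrices lands entirely on the diagonal, is what makes the off-diagonal entries decouple and is the heart of the argument, with the positive-definiteness of $\bOmega$ and $\bPsi$ then licensing their cancellation to isolate $\A$ and $\B$.
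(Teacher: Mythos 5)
Your proposal is correct and takes essentially the same route as the paper: both characterize the pseudo-true parameters via the population normal equations and collapse the resulting cross-moments under the Kronecker covariance, arriving at the identical fixed-point identities (your expressions for $\tilde{\A}$ and $\bar{\A}$, including the diagonal nuisance terms proportional to $\I_S$, match the paper's term for term). The only differences are presentational: you derive the bilinear stationarity condition from scratch and work with matrix-level identities such as $E[\X_t \M \X_t^T]={\rm tr}(\bOmega\M)\,\bPsi$, whereas the paper expands entrywise through the partition blocks $\s{C}_{k\ell}$ of the generic coefficient matrix $\bTheta$ (machinery it reuses for later propositions) and cites \cite{hoff2015multilinear} for the bilinear normal equation.
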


We now address the diagonals of the $\A$ and $\B$ matrices. We provide conditions under which the diagonals (up to their non-identifiabilities) are asymptotically equivalent. To be clear, we compare the estimated influence of $x_{ij}^{t}$ on $y_{ij}^{t}$ from the two models. Under the BLIN model, this influence is $a_{ii} + b_{jj}$, whereas under the bilinear model the influence is $a_{ii}b_{jj}$. 
\begin{proposition}
\label{prop:BLIN_bilinear_diag}
Suppose the true $\A$ and $\B$ matrices are constant along the diagonal with nonzero values  $\alpha$ and $\beta$, respectively, and $\alpha + \beta \neq 0$. Then, under Conditions~\ref{cond:cond},
\begin{enumerate}
\item
If  $\{ \Y_t \}_{t=1}^T$ are generated from the bilinear model in \eqref{eqBilinearMat}, then the pseudo-true parameter $\tilde{a}_{ii} + \tilde{b}_{jj}$  of least-squares estimation of the BLIN model is equal to the true diagonal specification $\alpha \beta$, and
\item
If  $\{ \Y_t \}_{t=1}^T$ are generated from the BLIN model in \eqref{eqbiten}, then the pseudo-true parameter $\bar{a}_{ii}\bar{b}_{jj}$ of least-squares estimation of the bilinear model does \underline{not} equal the true diagonal specification $\alpha +  \beta$ in general.
\end{enumerate}
\end{proposition}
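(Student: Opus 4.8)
The plan is to characterize the pseudo-true parameters through the first-order (orthogonality) conditions of the population least-squares problem in \eqref{eqRR2norm}, specialized to $\Z_t = \X_t$, and then to evaluate the resulting expectations in closed form using the Kronecker moment identities $E[\X_t M \X_t^T] = {\rm tr}(\bOmega M)\bPsi$ (for $M \in \R^{L \times L}$) and $E[\X_t^T N \X_t] = {\rm tr}(\bPsi N)\bOmega$ (for $N \in \R^{S \times S}$), which follow from the Kronecker structure of Conditions~\ref{cond:cond}. Differentiating the objective separately in $\A$ and $\B$ and setting the gradients to zero gives the two matrix normal equations $\sum_t E[\X_t R_t^T]=0$ and $\sum_t E[\X_t^T R_t]=0$, where $R_t := \mu(\X_t) - \tilde{\A}^T\X_t - \X_t\tilde{\B}$ is the population residual.

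For part~1 I would substitute the bilinear mean $\mu(\X_t) = \A^T\X_t\B$ into these equations. Each expectation collapses, via the moment identities, to a trace multiplying $\bPsi$ (first equation) or $\bOmega$ (second equation); for instance $E[\X_t\B^T\X_t^T\A] = {\rm tr}(\bOmega\B^T)\bPsi\A$ and $E[\X_t\X_t^T] = {\rm tr}(\bOmega)\bPsi$. Pre-multiplying the first equation by $\bPsi^{-1}$ and the second by $\bOmega^{-1}$ strips off the covariance factors and shows the pseudo-true matrices are affine images of the truth, $\tilde{\A} = c_A\A + d_A\I_S$ and $\tilde{\B} = c_B\B + d_B\I_L$, with $c_A = {\rm tr}(\bOmega\B)/{\rm tr}(\bOmega)$ and $c_B = {\rm tr}(\bPsi\A)/{\rm tr}(\bPsi)$ (recovering the off-diagonal coefficients of Theorem~\ref{thm:BLIN_bilinear_offdiag}), together with the scalar coupling $d_A + d_B = -c_A c_B$ obtained by back-substituting one affine form into the definition of the other. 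Reading off the identifiable diagonal sum then gives $\tilde{a}_{ii} + \tilde{b}_{jj} = c_A\alpha + c_B\beta + (d_A + d_B) = c_A\alpha + c_B\beta - c_A c_B$. The final step is to recognize that the target equality $\tilde{a}_{ii} + \tilde{b}_{jj} = \alpha\beta$ is algebraically equivalent to $(c_A - \beta)(c_B - \alpha) = 0$, and to verify this vanishing from the constant-diagonal hypothesis; this reduction is exactly where the structural assumption must bite, since it is what forces the weighted traces of the off-diagonal parts of $\A$ and $\B$ to drop out.

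For part~2 the data come from the BLIN model, so the relevant conditional mean is $\mu(\X_t) = \A^T\X_t + \X_t\B$ and the fitted model is the bilinear one; the pseudo-true parameters solve $\argmin_{\A',\B'} E\sum_t\|\A^T\X_t + \X_t\B - \A'^T\X_t\B'\|_F^2$. Writing the stationarity conditions in $\A'$ and $\B'$ and evaluating with the same moment identities yields a system that, unlike the BLIN case, is nonlinear in the unknowns: $\bar{\A}$ and $\bar{\B}$ enter multiplicatively and through normalizers of ${\rm tr}(\bOmega\bar{\B}\bar{\B}^T)$ type, exactly as in Theorem~\ref{thm:BLIN_bilinear_offdiag}. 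Rather than solve this fixed-point system, I would prove the negative claim by exhibiting one explicit instance (e.g. $S=L=2$, constant-diagonal $\A,\B$ with nonzero off-diagonals and a chosen $\bOmega,\bPsi$) for which $\bar{a}_{ii}\bar{b}_{jj}$ is computable in closed form and differs from $\alpha+\beta$. The conceptual reason the equality fails is that the additive diagonal effect $a_{ii}+b_{jj}=\alpha+\beta$ must be reproduced by the multiplicative coefficient $\bar{a}_{ii}\bar{b}_{jj}$: once genuine off-diagonal influence is present, the bilinear fit compromises across all entries and the diagonal product is pulled away from $\alpha+\beta$.

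The main obstacle is part~2: the bilinear pseudo-true parameters are defined only implicitly through a non-convex objective, so there is no clean closed form and a \emph{generic} inequality is most safely argued by counterexample or by isolating a single non-cancelling off-diagonal cross-term. A secondary point, present throughout, is the non-identifiability of both models --- the additive shift $\{\A,\B\}\to\{\A+c\I_S,\B-c\I_L\}$ for BLIN and the rescaling $\{\A,\B\}\to\{c\A,\B/c\}$ for the bilinear model --- which I handle by stating every conclusion in terms of the identifiable combinations $\tilde{a}_{ii}+\tilde{b}_{jj}$ and $\bar{a}_{ii}\bar{b}_{jj}$, and by noting that the normal equations pin down the coupling $d_A+d_B$ rather than $d_A$ and $d_B$ separately.
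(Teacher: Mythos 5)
Your part~1 follows the same route as the paper: the affine pseudo-true forms $\tilde{\A} = c_A \A + d_A \I_S$, $\tilde{\B} = c_B \B + d_B \I_L$ with $c_A = {\rm tr}(\bOmega\B)/{\rm tr}(\bOmega)$, $c_B = {\rm tr}(\bPsi\A)/{\rm tr}(\bPsi)$, the coupling $d_A + d_B = -c_A c_B$, and the reduction of the target identity to $(c_A - \beta)(c_B - \alpha) = 0$; all of that algebra is correct and is exactly the pivot of the paper's computation. The gap is your final step: that product does \emph{not} vanish under the hypotheses you have. Writing $\B = \beta\I_L + \B_0$ and $\A = \alpha\I_S + \A_0$ with zero-diagonal $\B_0, \A_0$, you get $c_A - \beta = {\rm tr}(\bOmega\B_0)/{\rm tr}(\bOmega)$ and $c_B - \alpha = {\rm tr}(\bPsi\A_0)/{\rm tr}(\bPsi)$. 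These traces pair the \emph{off-diagonal} entries of $\bOmega, \bPsi$ with the off-diagonal entries of $\B_0, \A_0$, and they are generically nonzero: take $S=L=2$, $\alpha=\beta=1$, $\A_0 = \B_0$ the matrix with zero diagonal and both off-diagonal entries equal to one, and $\bOmega = \bPsi$ with diagonal entries $2$ and off-diagonal entries $1$; then $(c_A-\beta)(c_B-\alpha) = 1/4$, so $\tilde{a}_{ii} + \tilde{b}_{jj} = \alpha\beta - 1/4 \neq \alpha\beta$. What forces these traces to drop out is diagonality of $\bOmega$ and $\bPsi$, not constancy of the diagonals of $\A$ and $\B$. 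The paper's own proof quietly inserts exactly this extra hypothesis --- it says ``we use the condition of matrices $\bOmega$ and $\bPsi$ proportional to the identities of appropriate size,'' a condition that appears nowhere in the proposition statement. So your blind attempt has in fact exposed a statement--proof mismatch in the paper; but as a proof of the proposition under Conditions~\ref{cond:cond} alone, your last step fails, and you must add (or at least flag) the covariance restriction for part~1 to go through.

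For part~2 your counterexample strategy coincides with the paper's, and the obstacle you correctly identify --- no closed form for the bilinear pseudo-true parameters --- is resolved there in a way worth knowing: rather than solving the non-convex fixed-point problem for an explicit instance, the paper manipulates the implicit relation \eqref{eqA_pseudo_bilin_final} and its trace to obtain $\bar{a}_{ii}\bar{b}_{jj} = \alpha+\beta+ c_A c_B/(\alpha+\beta+c_A+c_B)$, where $c_A = {\rm tr}(\bar{\A}\A_0^T)/{\rm tr}(\bar{\A})$ and $c_B = {\rm tr}(\bar{\B}\B_0^T)/{\rm tr}(\bar{\B})$. The claim thus reduces to exhibiting $c_A c_B \neq 0$, which the paper does by taking $\A_0, \B_0$ with single nonzero entries $a_{12}, b_{12}$, whence $c_A c_B$ is proportional to $a_{12}^2 b_{12}^2 \neq 0$. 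This is precisely your ``isolate a single non-cancelling off-diagonal cross-term'' alternative, carried out on the implicit normal equations instead of on an explicitly solved instance --- the more robust of your two proposed routes, and the one you should develop.
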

The conditions for equivalence of the diagonals are more restrictive than those for the equivalence of the off-diagonal elements of $\A$ and $\B$.  Furthermore, we see the BLIN model diagonals are consistent in misspecification situations when the bilinear diagonals are not consistent. 
In Section~\ref{section_sim_conv} of the Supplementary Material, we evaluate the convergence of the bilinear and full BLIN estimators in support of Theorem~\ref{thm:BLIN_bilinear_offdiag} and Proposition~\ref{prop:BLIN_bilinear_diag}.

Although much of this section  describes the similarities between the BLIN and bilinear models,  we emphasize that these similarities lie in estimation of the coefficient matrices $\A$ and $\B$. The estimated mean function of the BLIN and bilinear models are very different (even asymptotically), which has implications for model fit and prediction. To illustrate the difference in mean functions between the BLIN and bilinear models, we show that equivalence of $\A$ and $\B$ between the BLIN and bilinear models implies equivalence in the estimated mean $\hat{\Y}_t$ only when $\A$ and $\B$ are diagonal. 
\begin{proposition}
\label{prop:BLIN_bilinear_mean}
Let the estimators from the BLIN and bilinear models have diagonals that are equal up to a constant, as in Theorem \ref{thm:BLIN_bilinear_offdiag}. Additionally, let the estimators of the diagonals be constant
as in Proposition~\ref{prop:BLIN_bilinear_diag}, part 1. Then, 
under the conditions of Proposition~\ref{prop:BLIN_bilinear_diag} with $\bOmega$ and $\bPsi$ diagonal, equivalence of $\hat{\Y}_t$ for both models implies that the BLIN and bilinear estimators, $\{ \tilde{\A}, \tilde{\B}\}$ and $\{ \bar{\A}, \bar{\B}\}$, respectively, are all diagonal.
\end{proposition}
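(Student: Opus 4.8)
The plan is to convert the hypothesis that the two estimated means agree into a single linear-operator identity and then extract structural constraints block by block. Writing equality of the fitted means as $\tilde{\A}^T \X + \X \tilde{\B} = \bar{\A}^T \X \bar{\B}$ for every $\X$ (the matrix-normal design in Conditions~\ref{cond:cond} has full support, so pointwise agreement is equivalent to agreement of the two linear maps), I would vectorize via ${\rm vec}(\M_1 \X \M_2) = (\M_2^T \otimes \M_1){\rm vec}(\X)$ to obtain
\begin{align}
\I_L \otimes \tilde{\A}^T + \tilde{\B}^T \otimes \I_S = \bar{\B}^T \otimes \bar{\A}^T. \nonumber
\end{align}
This recasts the question as: when can a Sylvester-type operator (a sum of two Kronecker terms, each carrying one identity factor) coincide with a single Kronecker product?

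Next I would exploit the block structure. Viewing each side as an $L \times L$ array of $S \times S$ blocks, the $(p,q)$ block reads $\delta_{pq}\tilde{\A}^T + \tilde{b}_{qp}\I_S = \bar{b}_{qp}\bar{\A}^T$. For $p \neq q$ this is $\tilde{b}_{qp}\I_S = \bar{b}_{qp}\bar{\A}^T$; substituting the off-diagonal identity $\tilde{b}_{qp} = c_B \bar{b}_{qp}$ from Theorem~\ref{thm:BLIN_bilinear_offdiag} (with $c_B = {\rm tr}(\bPsi\bar{\A})/{\rm tr}(\bPsi) = \alpha$, since $\bPsi$ is diagonal and $\bar{\A}$ has constant diagonal $\alpha$) forces $\bar{\A}^T = \alpha \I_S$ whenever some off-diagonal entry of $\bar{\B}$ is nonzero. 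Applying the commutation (perfect-shuffle) matrix $K$, with $K(\M_1 \otimes \M_2)K^{-1} = \M_2 \otimes \M_1$, produces the dual identity $\tilde{\A}^T \otimes \I_L + \I_S \otimes \tilde{\B}^T = \bar{\A}^T \otimes \bar{\B}^T$; its off-diagonal $(i,j)$ block gives $\tilde{a}_{ji}\I_L = \bar{a}_{ji}\bar{\B}^T$, and the companion relation $\tilde{a}_{ji} = c_A \bar{a}_{ji}$ with $c_A = \beta$ forces $\bar{\B}^T = \beta\I_L$ whenever some off-diagonal entry of $\bar{\A}$ is nonzero.

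The diagonal blocks then close the loop. The $p=q$ block gives $\tilde{\A}^T = \beta \bar{\A}^T - \tilde{b}_{pp}\I_S$, and combined with the constant-diagonal hypothesis and the sum relation $\tilde{a}_{ii} + \tilde{b}_{jj} = \alpha\beta$ from Proposition~\ref{prop:BLIN_bilinear_diag}, this pins $\tilde{\A}$ (and symmetrically $\tilde{\B}$) to the same diagonal pattern as $\bar{\A}$ (resp. $\bar{\B}$); hence once a bilinear matrix is shown diagonal, its BLIN counterpart inherits diagonality. Assembling the two implications, I would argue that if both $\bar{\A}$ and $\bar{\B}$ carried nonzero off-diagonal entries, then $\bar{\A}$ would be forced to the scalar matrix $\alpha\I_S$, contradicting the presence of its own off-diagonal entries; the influence networks therefore cannot both be non-diagonal, and the conclusion that all four matrices are diagonal follows once the remaining one-sided configuration is excluded.

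I expect that final exclusion to be the main obstacle. Each Kronecker identity, taken alone, only forces a \emph{scalar} structure on the opposite matrix, so either decomposition is individually compatible with a degenerate family in which exactly one network is non-diagonal (for instance $\bar{\A} = \alpha\I_S$ with $\bar{\B}$ unrestricted makes both means equal to $\alpha\X\bar{\B}$, and the induced $\tilde{\B}$ then also carries off-diagonal mass). Upgrading the derived statement ``at least one matrix is scalar'' to the claimed ``all four are diagonal'' is exactly the delicate step, and it is where the constant-diagonal conditions, the specific constants $c_A = \beta$ and $c_B = \alpha$, and the non-triviality of the influence networks must be brought to bear together; I would spend most of the effort verifying that this combination genuinely rules out the one-sided family rather than merely constraining it.
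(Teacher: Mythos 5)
Your reduction is sound as far as it goes, and it is in fact the operator-level version of the paper's own computation: your $(p,q)$-block identity $\delta_{pq}\tilde{\A}^T + \tilde{b}_{qp}\I_S = \bar{b}_{qp}\bar{\A}^T$ is exactly what one gets by vectorizing the paper's matrix equation $\tilde{\A}^T\X_t + \X_t\tilde{\B} = \bar{\A}^T\X_t\bar{\B}$. The genuine gap is the one you name yourself: the proposal ends without excluding the one-sided family in which exactly one influence network is non-diagonal, and an argument that concludes ``at least one matrix is scalar'' does not prove the proposition. The paper closes this step by a different, more concrete route: it substitutes the fully pinned parametrizations forced by the hypotheses, namely $\tilde{\A} = \alpha\I_S + \A_0$, $\tilde{\B} = \beta\I_L + \B_0$ with $\A_0,\B_0$ hollow, and $\bar{\A} = \mathrm{sign}(\alpha+\beta)\sqrt{|\alpha+\beta|}\,\I_S + \beta\A_0$, $\bar{\B} = |\alpha+\beta|\,\I_L + \alpha\B_0$ (the off-diagonal scalings $\beta$ and $\alpha$ coming from Theorem~\ref{thm:BLIN_bilinear_offdiag}, matching your $c_A$, $c_B$), expands the assumed mean equality, cancels the matching linear terms, and is left with $\A_0^T\X_t\B_0 = \mathbf{0}$ for all $t$, from which it concludes $\A_0 = \B_0 = \mathbf{0}$.

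Two remarks on that closing step, since it is precisely where you stalled. First, working with the pinned parametrizations is what your block analysis lacks: your blocks constrain $\bar{\A}$, $\bar{\B}$ only through the presence of nonzero off-diagonal entries of the \emph{other} matrix, whereas substituting the explicit forms makes every scalar coefficient (of $\X_t$, of $\A_0^T\X_t$, of $\X_t\B_0$, and of the cross term $\A_0^T\X_t\B_0$) visible and matchable. Second, your suspicion about the delicacy here is not misplaced: the identity $\A_0^T\X_t\B_0 = \mathbf{0}$ for generic $\X_t$ by itself yields only $\A_0 = \mathbf{0}$ \emph{or} $\B_0 = \mathbf{0}$ (a Kronecker product $\B_0 \otimes \A_0$ vanishes when either factor does), so ruling out the one-sided configurations additionally requires matching the coefficients of the surviving linear terms, which under the standing assumptions $\alpha \neq 0$, $\beta \neq 0$, $\alpha + \beta \neq 0$ is what pins the one-sided family down; the paper's write-up passes over this quickly. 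So your approach identifies the right mechanism and the right weak point, but without carrying out the coefficient matching on the pinned forms it does not constitute a proof of the stated claim.
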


\section{Simulation Study}
\label{section:simulation}

As discussed in the previous section, although the BLIN and bilinear estimators of $\A$ and $\B$ are asymptotically equivalent under certain conditions, their mean functions are only equivalent under strict conditions (see Proposition~\ref{prop:BLIN_bilinear_mean}). Thus, the ability of these models to represent the variation in a bipartite relational data set will heavily depend on whether the true mean is of BLIN or bilinear form. To compare the ability of each model to represent mean structure under model misspecification, we 
conducted a simulation study. 

We generated from a lag-1 vector-autoregressive model \begin{align}
    \y_t = \bTheta \y_{t-1} + \e_t, \hspace{.25in} t \in \{1,2,\ldots,T \}, \label{eq_varsim}
\end{align}
where $\y_t = \text{vec}(\Y_t)$ is the columnwise vectorization of the $10 \times 10$ matrix of bipartite relations $\Y_t$, $\e_t$ consists of i.i.d. standard normal entries, and the number of time periods $T \in \{10, 20, 50 \}$. Recall that both the BLIN and bilinear models may be cast as VAR models of the form of \eqref{eq_varsim}, where $\bTheta = \A^T \otimes \I_L + \I_S \otimes \B^T $ for the BLIN model and $\bTheta = \B^T \otimes \A^T$ for the bilinear model. We created weighted, directed $\A$ and $\B$ matrices and used these to generate data from both models.  

The process for specifying $\A$ and $\B$ was motivated by a desire to construct matrices with network structure that we might expect in an influence network.  Initially, $\A$ and $\B$  were  randomly generated as matrices of rank 1, which may be viewed as latent factor models of rank 1 \citep{hoff2008modeling, li2011generalized}.   The diagonal entries and smallest $q=0.9$ fraction of off-diagonal entries were set to zero, such that the matrices were approximately low rank and sparse. Finally, we scaled the $\bTheta$ matrix of each generating model to control the signal-to-noise ratio, such that the true model had an $R^2$ of approximately 0.75 for $T$ approaching infinity. For further details on the simulation study, including investigation of $q=0.5$ and $q=0.0$, please see Section~\ref{section_sim_details} of the Supplementary Material.  

We generated 100 data sets from both the BLIN and bilinear model for $T=50$, and evaluated the out-of-sample predictions from the models in a 10-fold cross validation study. To compare performance of the estimators on data sets containing varying amounts of time periods, we evaluated performance on the complete data sets with $T=50$, as well as the performance when the data were trimmed to include only the last $10$ time periods and only the last $20$ time periods.  In each of these three scenarios, i.e. $T \in \{10, 20, 50 \}$,  we performed a 10-fold cross validation.  The time periods were randomly partitioned into 10 sets (the partitions were the same for all data sets of a given size $T$ for the sake of equal comparison). Models were fit to the data in nine of the ten sets, and then predictions for the values in the left-out time period set were obtained.  This fitting procedure was repeated 10 times: once for each of the partitions.   Models were evaluated based on the $R^2$ value between the ten sets of predicted values and the true values, for each data set and each $T$.  $R^2$ is a natural measure of model fit as the data are normally distributed and hence high $R^2$ values corresponds to large likelihood values.  For each of the two generative models (BLIN and bilinear) and each of the three data set sizes ($T \in \{10, 20, 50 \}$), we compare the performance of four models: the bilinear model and the full, reduced rank (with rank set to 1), and sparse BLIN models.

We plot the resulting $R^2$ values in Figure~\ref{fig:misspec2} for sparsity of the generating coefficients $q=0.9$. 
A dotted horizontal line is drawn at $R^2=0$, which denotes the expected performance of fitting no model at all, that is, predicting $\hat{\Y}_t = \mathbf{0}$ for all $t$. An additional dotted horizontal line is drawn at $R^2=0.75$, the expected large-sample $R^2$ value when the true model is known. When generating from the BLIN model (left panel of Figure~\ref{fig:misspec2}), the estimated full and sparse BLIN models perform well for all values of $T \in \{10,20,50 \}$. This is as expected, since the true matrices $\A$ and $\B$ are sparse. The reduced rank BLIN model is not able to represent this sparse structure, and thus its out-of-sample performance falls short of the full and sparse BLIN models.
When generating from the BLIN model, the bilinear model results in extremely poor predictions when $T=10$ and marginal performance when $T=20$, yet it is on par with the BLIN models for $T=50$.

\spacingset{1}
\begin{figure}[ht!]
\centering
\begin{tabular}{r c c }
  &\multicolumn{2}{c}{\textbf{Generating model}} \\
  &&\\
  & { BLIN } 
  \hspace{-.2in} & \hspace{-.2in} 
  Bilinear \\
\begin{sideways} \hspace{.45in} Out-of-sample $R^2$ \end{sideways} &
 \includegraphics[width=.46\textwidth]{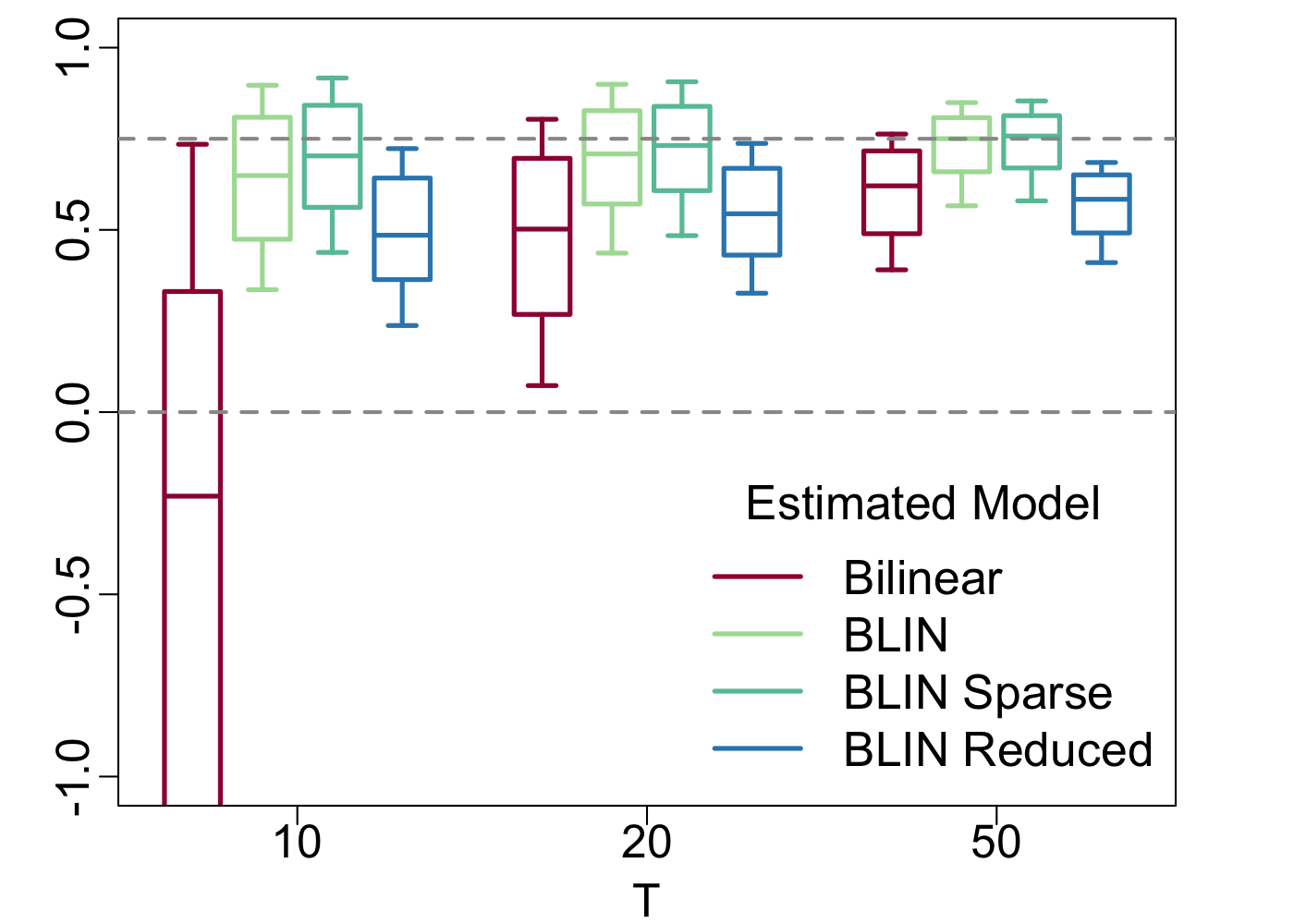}  
\hspace{-.2in} & \hspace{-.2in} 
 \includegraphics[width=.46\textwidth]{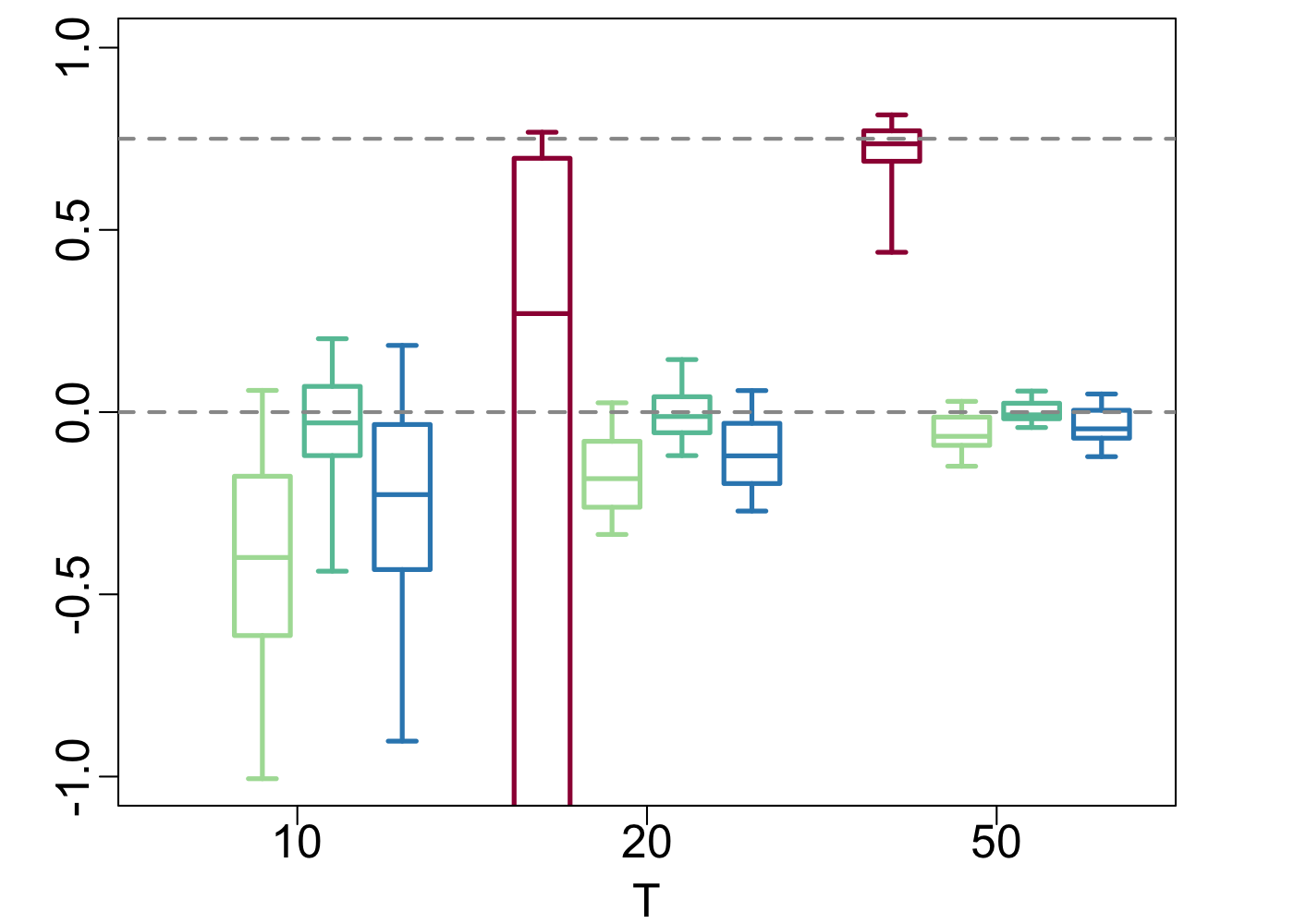} \\
\end{tabular}
\caption{Out-of-sample $R^2$ values for each estimation procedure applied to 100 data realizations generated from the BLIN model (left panel) and generated from the bilinear model (right panel).
The centers of the boxplots represent the median $R^2$ value, the boxes represent the middle 80\% of $R^2$ values, and the whiskers correspond to the maximum and minimum $R^2$ values across 100 simulated data sets. 
Plots are truncated such that $R^2$ values less than $-1$ are not shown.}
\label{fig:misspec2}
\end{figure}
\spacingset{1.5}

When generating from the bilinear model (right panel of Figure~\ref{fig:misspec2}), as expected, the bilinear model performs best for the largest number of replications $T=50$.  Surprisingly, only for this value does the bilinear model consistently outperform predicting simply $\hat{\Y}_t = \boldsymbol{0}$, even though the data are generated from the bilinear model.  
Overall, the BLIN models perform poorly when generating from the bilinear model. However, we note that the sparse BLIN model guards against poor performance (the typical $R^2$ is always about $0$), whereas the full and reduced rank BLIN models do not.

We investigated the source of errors when the data were generated from the bilinear model by examining the likelihood surface of the bilinear model near the estimated and true values of the influence matrices (Section~\ref{section_sim_details} in the Supplementary Material). These investigations showed that, when $T \in \{ 10, 20 \}$, the likelihood surface of the BLIN model is multimodal and that the highest mode may be ``far'' from the mode that is nearest the true parameter values. This means that estimating the model on one portion of data may not generalize well to other portions of the data, which is exactly what we observe in the negative $R^2$ values of the bilinear estimator when the data are generated from the bilinear model for $T \in \{10,20 \}$ (Figure~\ref{fig:misspec2}). 
A similar analysis for the BLIN model confirmed the unimodality implied by Propostion~\ref{prop_rankXb}. The results of the simulation study suggest that the bilinear model may be difficult to estimate unless a large number of replications $T$ are observed. In addition, although the coefficients estimated by the BLIN and bilinear models may be similar for large $T$ (as suggested by theory and confirmed by simulation in Section~\ref{section_sim_details} in the Supplementary Material), neither model may be a good predictive substitute for the other in small samples.

\section{Temporal State Interaction Data Analysis}
\label{section_temporal_SID}

Power, defined as the ability of an actor to influence another to do something they may not otherwise do, is perhaps the most essential concept in the study of politics broadly, and interstate relations specifically \citep{morgenthau1948struggle, dahl1957concept, waltz1979theory, lukes2004power}. 
Thus, both network researchers and political scientists are frequently interested in inferring patterns of influence in interactions between states \citep{minhas2017influence}.  Examining country behaviors, we may use the proposed approach to directly infer the most powerful and influential actors in modern international relations and also gain insights that may yield valuable predictions of international policy changes, such as ratification of environmental treaties  \citep{campbell2018latent}. Inferring these influences resolves a significant shortcoming in the international relations literature, which has been forced to measure power in terms of the relative military or economic strength because of the empirical difficulties associated with measuring power as influence \citep{singer1972capability, hart1976three, johnson2017external}. 

To infer country influences, we analyzed
country interactions at weekly intervals from 2004 to mid-2014, giving $T=543$ weeks of data. The relations were obtained from the Integrated Crisis Early Warnings System (ICEWS) \citep{28075_2015}, previously analyzed in \cite{hoff2015multilinear}, which automatically identifies and encodes interaction intensities (between -10 and 10) from news stories. Each relation is one of four interaction types from a source state to a target state: material negative actions (\emph{mn}), material positive actions (\emph{mp}), verbal negative actions (\emph{vn}), and verbal positive actions (\emph{vp}). We analyzed only the 25 most active states. An example of each interaction type is boycotting for leadership change (\emph{mn}),  providing humanitarian aid (\emph{mp}), denying accusations by a target country (i.e. charges of genocide or other human rights violations) (\emph{vn}), and expressing intent to negotiate (\emph{vp}).

 We denote the intensity of relation in week $t$ as $y_{ijk}^t$, where $i$ is the source state, $j$ is the target state, $k$ is the relation type, and $t$ is the week of the observation. Each time series $\{ y_{ijk}^t \}_{t=1}^T$ is centered and standardized.  Here we analyze the \emph{change} in relations using the differences $d_{ijk}^{t} := y_{ijk}^{t} - y_{ijk}^{t-1}$. We note that the relations in the ICEWS data are not classically bipartite, as the source states and target states consist of the same set of actors, namely, the 25 most active states. However, we use the BLIN model to make the desired estimates of influences among source and target states separately. 
 
 The BLIN model defined in Section \ref{sec_model} is for longitudinal bipartite data, however the ICEWS data set contains tripartite data, such that observations are indexed by (source, target, type) triples.  For this reason, here we introduce an extension of the BLIN model for tripartite data, which we abbreviate the TLIN (tripartite longitudinal influence network) model, and point interested readers to Section~\ref{sec_multi} of the Supplementary Material for details on an extension of the BLIN model specification for arbitrary multipartite data.  In the TLIN model, the change in relation intensity in each week $t$, $d_{ijk}^{t}$, depends on the previous actions of states that influence source state $i$ through source influence matrix $\A$, the previous actions of states that influence target state $j$ through target influence matrix $\B$, and interaction types that influence interaction type $k$ through the interaction type influence matrix $\C$. For the ICEWS data the TLIN model can be expressed 
\begin{align}
d_{ijk}^{t} &= \sum_{s=1}^{25} a_{si} \left(\sum_{r=1}^{p_A}  d_{sjk}^{t - r} \right) + 
\sum_{\ell=1}^{25} b_{\ell j}  \left( \sum_{r=1}^{p_B} d_{i \ell k}^{t - r} \right) 
+ \sum_{u=1}^{4} c_{u k}  \left( \sum_{r=1}^{p_C} d_{i j u}^{t - r} \right)
+ e_{ijk}^t, \label{eq_blin_icews}
\end{align}
where each $e_{ijk}^t$ is an independent, mean zero random error. The TLIN model in \eqref{eq_blin_icews} states that, if there is a positive source influence from the Russia to France in $\A$, then an observed increase in, say, boycott intensity from Russia to China implies that we should expect France to increase its boycotting of China in the following weeks. 
For $\B$, the  model states that if there exists a negative target influence from Lebanon to Pakistan, then an increase in US humanitarian aid sent to Lebanon would indicate a decrease in aid sent from US to Pakistan in the following weeks.  Finally, a positive influence of \emph{vn} on \emph{mn} in $\C$ suggests that verbal negative interactions precede material negative interactions; that is, if the US threatens to increase boycotts on North Korea, then we might expect the US to increase tariffs on North Korea in the following weeks. 
Finally, we contend that imposing a boycott is fundamentally different than having a boycott imposed upon one's state, and thus,
although the country sets of sources and targets are the same, we treat these nodes as separate types and infer separate influence networks among source and target countries.   

To choose the lag values $\{p_A, p_B, p_C\}$, corresponding to influences among source countries, target countries, and interaction types, respectively, we fit a sparse TLIN model to the ICEWS data for a range of lags using an $\ell^1$ penalty on the entries in the influence networks. We then chose the lags that gave the best balance of model fit and model parsimony based on a comparison of likelihood values of the estimated models, penalized by twice the number of nonzero estimated parameters, in the vein of Akaike's Information Criteria \citep{akaike1998information}.  This procedure resulted in lag values
$\{p_A = 5, p_B = 3, p_C=1 \}$.  We note that the choice to difference the responses is a departure from the analysis in \cite{hoff2015multilinear}. For more details of the data analysis, see Section~\ref{section_SID_details} of the Supplementary Material.

The estimated source country ($\A$), target country ($\B$), and interaction type ($\C$) influence networks are depicted in Figure~\ref{fig:mp_ICEWS}, where we focus on the entries that constitute the largest 5\% of magnitudes across all networks (see Figure~\ref{fig:coef_ordered} of the Supplementary Material).
Across all networks, there are generally larger and more positive entries than negative entries
(Figure~\ref{fig:mp_ICEWS}). This fact suggests that positive influence is more consequential than negative influence: e.g., increases in aid generally lead to other increases in aid, rather than decreases. For example, in the target influence network $\B$, there is a positive influence from North Korea (PRK) to China (CHN) and a smaller, negative influence from North Korea to Afghanistan (AFG). This suggests that an increase in boycotts by Great Britain on North Korea leads one to expect an increase in boycotts by Great Britain on China in the following three weeks, and a smaller decrease in boycotts by Great Britain on Afghanistan.

\spacingset{1}
\begin{figure}[ht!]
\centering
\includegraphics[width=.99\textwidth]{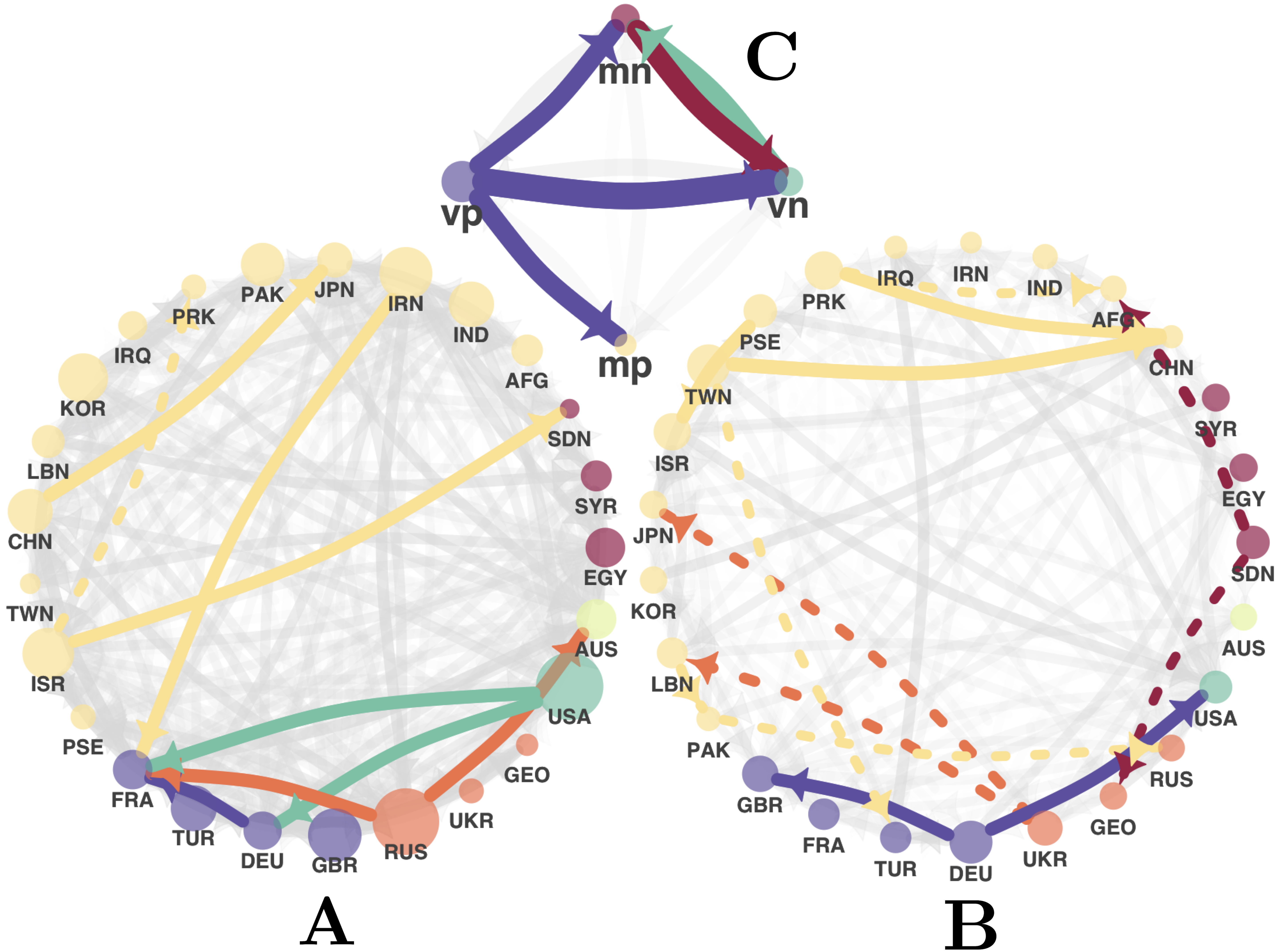} 
\caption{The largest 5\% of edges (in magnitude) are shown in color; the smaller relations are shown in grayscale proportional to their magnitude. The source country network is in the bottom-left panel ($\A$), the target country network is in the bottom-right panel ($\B$), and the interaction type network is in the top-center panel ($\C$). Nodes are sized proportional to the sum of the magnitudes of outgoing relations (comparable across networks) and nodes in $\A$ and $\B$ are colored according to continent. Edges are sized proportional to absolute edge weight (comparable across networks) and colored according to the originating node. Solid lines denote positive relations and dotted lines denote negative relations. 
}
\label{fig:mp_ICEWS}
\end{figure}
\spacingset{1.5}

One of the largest positive influences in the $\A$ network is that of the US on Germany (DEU). This means that if the US increases humanitarian aid sent to Syria, for example, then we expect Germany to increase the amount of humanitarian aid it sends to Syria in the following $p_A = 5$ weeks.  
This result matches conventional wisdom as the US and Germany have been among the closest North Atlantic Treaty Organization (NATO) allies following the Cold War, also sharing many common geopolitical interests outside this particular alliance.  
Within $\A$, we note that the US and Russia have the strongest ties with other countries and the most ties across continents.  This also may have been expected since great powers should possess the most influence relationships and the most substantively important influence relationships \citep{morgenthau1948struggle, waltz1979theory, mearsheimer2001tragedy}. 

In the target country $\B$ influence network, we observe that the most influential country is Germany (DEU), suggesting again that Germany is very central in the international community.  We see that the $\B$ network has many more large negative entries than $\A$ and $\C$. This fact is natural, as, for some relations, target states are competing for resources and hence, influences may be  limited. 
The material positive and material negative interactions are zero-sum, that is, a dollar sent to Sudan cannot also be sent to Afghanistan. 
For example, there is a large negative influence in $\B$ from the Sudan (SDN) to Afghanistan (AFG).  This influence indicates that, when Sudan receives an increase in, say, humanitarian aid from the US, Afghanistan is expected to have reduced aid from the US in the coming weeks. Hence, these countries are essentially competing for the resources of source countries, of which resources there are finite amounts.  We do not observe this phenomenon in the $\A$ network as the Germany following the US in aiding Sudan employs both the resources of the German and US people.

From $\C$ we glean understanding of the relationships between interaction types. For example, we observe positive relations between material negative and verbal negative interaction types, which means, for example, that increases (or decreases) in verbal negative interactions from the US to China may signal increases (decreases) in material negative interactions from the US to China. We also observe large positive influences from \emph{vp} to the other relation types. This fact indicates that changes in verbal positive interactions are often followed by changes in other interaction types, but  that a change in verbal positive relations is relatively uninformative to what type of interaction may change afterwards.  Finally, since $\C$ is entirely positive, this indicates positive feedback loops among the interaction types and a lack of negative feedback loops.

With the development of the BLIN model and multipartite extensions, international relations scholars no longer must rely upon the widely used proxy measures of power and influence, such as national capabilities or economic development. 
Instead, they are now permitted to measure power directly, as the influence exercised by one state over another. 
This approach has already been adopted and shown to be powerful in modeling the influence that countries exercise over one another in the ratification of environmental treaties \citep{campbell2018latent}.

\section{Discussion}
\label{sec:conc}
In this paper, we present the Bipartite Longitudinal Influence Network model, a novel generative model for the evolution of bipartite relational data over time. The BLIN model allows for the estimation of the weighted and directed influence networks among each of the two actor types in bipartite data, 
each with their own separate time scale of influence. The BLIN model can be expressed as a generalized linear model, lending itself to use with a litany of off-the-shelf tools for estimation and to straightforward parameter interpretation.

In the BLIN model, the entries in the influence networks $\A$ and $\B$ may be interpreted as ``average'' influences over the entire data set.  There are scenarios where the influences among countries may evolve over time. For example, major international trade agreements or wars might change the structures of the influence networks.  To determine whether $\A$ and $\B$ evolve over time, we might consider testing for changepoints in the influence networks.  Of course, this requires the development of a more flexible model with time-varying influence networks, such as modeling $\A$ and $\B$ as linear functions of known covariates as in \cite{minhas2017influence}.  \\

\vspace{.1in}
\if0\blind
{
\noindent\textbf{Acknowledgements  }  
We would like to thank the anonymous reviewers, the associate editor,  and the editor, whose helpful suggestions greatly improved the quality of the paper. 
This work was partially supported primarily by the National Science Foundation under Grant SES-1461493 to Cranmer and Grant SES-1461495 to Fosdick. Cranmer was also supported by NSF grants SES-1357622 and SES-1514750 as well as NIH R-34, DA043079 and the Alexander von Humboldt Foundation's fellowship for experienced researchers. This work utilized the RMACC Summit supercomputer, which is supported by the National Science Foundation (awards ACI-1532235 and ACI-1532236), the University of Colorado Boulder, and Colorado State University. The RMACC Summit supercomputer is a joint effort of the University of Colorado Boulder and Colorado State University. 
} \fi

\spacingset{1}
\bibliography{BLIN.bib}
\bibliographystyle{apalike}
\spacingset{1.5}

\newpage
\appendix
\bigskip
\begin{center}
{\Large\bf Supplementary Material}
\end{center}

\pagenumbering{arabic}   % reset page numbering

%%%%
\section{Least Squares Estimation of Reduced-Rank BLIN Model}
\label{sec:MLE_reduced_rank}
In this section,
we provide a procedure for obtaining the maximum likelihood estimator of the BLIN model assuming reduced rank coefficient matrices $\A$ and $\B$, where the respective ranks are known. The log-likelihood of the data $\{ \Y_t \}_{t=1}^T$ is simply the sum of the log-likelihood at each time period since we assume the errors $\E_t$ are independent of each other.  The log-likelihood, in terms of the unknown matrices $\{\U, \V, \bR, \S \}$, is proportional to
\begin{align}
\ell &\propto \frac{-1}{2}\sum_{t=1}^T || \Y_t - \U \V^T \X_t - \Z_t \bR \S^T ||_F^2, \label{eq_ll_rr}
\end{align}
where $\U \V^T$ is a rank $k$ decomposition of $\A$, and $\bR \S^T$ is a rank $m$ decomposition of $\B$.

We introduce a block coordinate descent algorithm in Algorithm~\ref{alg:BLIN_it_rr} to obtain the maximum likelihood estimator of the unknown matrices $\{\U, \V, \bR, \S \}$.
Algorithm~\ref{alg:BLIN_it_rr} estimates each unknown matrix $\{\U, \V, \bR, \S \}$ 
in turn until convergence, in an analogous procedure to the estimation of the full BLIN model in Algorithm~\ref{alg:BLIN_it}. The update equation for each unknown matrix is derived by differentiating \eqref{eq_ll_rr} with respect to the unknown matrix, e.g. $\U$, setting this derivative to zero, and solving for the unknown matrix.

\spacingset{1}
\begin{algorithm}
\caption{Block coordinate descent LS estimation of reduced-rank BLIN model}
\label{alg:BLIN_it_rr}
\begin{enumerate}
  \setcounter{enumi}{-1}
\item Set threshold for convergence $\eta$. Set number of iterations $\nu = 1$. Initialize $\{ \hat{\U}^{(0)}, \hat{\V}^{(0)}, \hat{\bR}^{(0)}, \hat{\S}^{(0)} \}$ with independent standard normal entries and $Q_0 = \sum_t ||\Y_t||^2_F$.
\item 
Compute
{\footnotesize
\[(\hat{\U}^{(\nu)})^{T} = \left( (\hat{\V}^{(\nu-1)})^T \sum_{t=1}^T  \left(\X_{t} \X_{t}^T \right)\hat{\V}^{(\nu-1)} \right)^{-1}  (\hat{\V}^{(\nu-1)})^T \left( \sum_{t=1}^T  \X_{t}\Y_{t}^T - \sum_{t=1}^T \X_{t} \hat{\S}^{(\nu-1)} (\hat{\bR}^{(\nu-1)})^T \Z_{t}^T  \right)\]
}
\item Compute 
{\footnotesize
\[\hat{\V}^{(\nu)} = \left( \sum_{t=1}^T \X_t  \X_t ^T \right)^{-1} \left( \sum_{t=1}^T  \X_t \Y_t ^T - \sum_{t=1}^T \X_t  \S^{(\nu - 1)}(\bR^{(\nu-1)})^T \Z_t ^T  \right) \hat{\U}^{(\nu)} \left((\hat{\U}^{(\nu)})^T \hat{\U}^{(\nu)} \right)^{-1}\]
}
\item Compute 
{\footnotesize
\[\hat{\bR}^{(\nu)} = \left( \sum_{t=1}^T  \Z_t ^T  \Z_t \right)^{-1} \left[  \sum_{t=1}^T  \Z_t ^T \Y_t  - \sum_{t=1}^T  \Z_t ^T  \U^{(\nu)} (\V^{(\nu)})^T  \X_t \right] \S \left(\S^T  \S \right)^{-1}\]
}
\item Compute 
{\footnotesize
\[(\hat{\S}^{(\nu)})^T = \left( (\bR^{(\nu)})^T  \left(\sum_{t=1}^T  \Z_t ^T  \Z_t  \right) \bR^{(\nu)} \right)^{-1} (\bR^{(\nu)})^T  \left[ \sum_{t=1}^T  \Z_t ^T \Y_t   - \sum_{t=1}^T  \Z_t ^T  \U^{(\nu)} (\V^{(\nu)})^T  \X_t \right]\]
}
\item Compute the least squares criterion 
{\footnotesize
\[Q_{\nu} = \sum_t ||\Y_t - \U^{(\nu)} ( \V^{(\nu)} )^T \X_t - \Z_t \bR^{(\nu)} (\S^{(\nu)})^T ||^2_F.\]
}
If  $|Q_{\nu} - Q_{\nu-1}| > \eta$, increment $\nu$ and return to 1.  
\end{enumerate}
\end{algorithm}
\spacingset{1.5}

\clearpage
\section{Proofs of Theoretical Results}
\label{AppProofs}

We begin with the proof of Proposition~\ref{prop_rankXb}.
We then prove Theorem~\ref{thm:BLIN_bilinear_offdiag}, as expressions in this proof support the proofs of
the remaining Propositions~\ref{propPseudo1},  \ref{propPseudo2}, \ref{prop:BLIN_bilinear_diag}, and \ref{prop:BLIN_bilinear_mean}. We take $\X_t$ and $\Y_t$ mean zero without loss of generality.

\vspace{.2in}
\begin{proof}[Proof of Proposition~\ref{prop_rankXb} \\]
Noting that $\mathbb{X}_B$ is of dimension $TSL \times S^2 + L^2 - 1$, it is sufficient to show that $\mathbb{X}_B$ is full rank under the assumptions. We treat two cases: (1) $TSL \le S^2 + L^2 - 1$ and (2) $TSL > S^2 + L^2 - 1$. 

\noindent \textbf{Case (1): \\}
\noindent We first show that $T S L \le (S^2 + L^2 - 1)$ implies $TS \le L$. 
Assume towards a contradiction that $TS > L$ and let $S = \alpha L$ for $\alpha \in (0,1]$. Then,
\begin{align}
TSL &\le S^2 + L^2 - 1, \nonumber \\
T\alpha L^2 & \le (1+\alpha^2)L^2 - 1, \nonumber \\
T &\le 1/\alpha + \alpha -1/L^2. \label{eq:T_ub}
\end{align}
If $TS > L$, then $T > 1/ \alpha$. As there is no integer $T$ that satisfies \eqref{eq:T_ub} and $T > 1/ \alpha$, we have that $TS \le L$.

Now, as $TSL \le S^2 + L^2 - 1$, we have that ${\rm rank}(\mathbb{X}_B) \le TSL$. Assume towards a contradiction that ${\rm rank}(\mathbb{X}_B) < TSL$. Then, for some nonzero $u \in \R^{TSL}$, 
the assumption implies that $u^T \mathbb{X}_B = 0$. Consider the columns $S^2 +1$ through $S^2 +1 +L$ of $\mathbb{X}_B$. The assumption implies that, for some nonzero $v \in \mathbb{R}^{TS}$, that $v^T [\Z_1 ; \Z_2 ; \ldots ;\Z_t]$. For $TS \le L$, this is a contradiction of the assumption that $[\Z_1 ; \Z_2 ; \ldots ;\Z_t]$ is full rank.
Thus, we have that $\mathbb{X}_B$ is full rank in case (1).

\noindent \textbf{Case (2): \\}
\noindent Now we take $TSL > S^2 + L^2 - 1$, such that ${\rm rank}(\mathbb{X}_B) \le S^2 + L^2 - 1$. Assume towards a contradiction that ${\rm rank}(\mathbb{X}_B) <  S^2 + L^2 - 1$.
Then, there exists some $u_1 \in \mathbb{R}^{S^2}$ and $u_2 \in \mathbb{R}^{L^2}$ such that $\mathbb{X}_B u = 0$, for $u^T = [u_1 , u_2]$, nonzero and not utilizing the nonidentifiability exclusively.
By the single nonidentifiability of the BLIN model, for $\mathbb{X}_B u = 0$, either $u$ utilizes the nonidentifiability or both $u_1$ and $u_2$ are in the null spaces of $[\X_1^T \otimes \I_L ; \X_2^T \otimes \I_L; \ldots ;\X_t^T \otimes \I_L]$ and $[\I_S \otimes \Z_1 ; \I_S \otimes \Z_2 ; \ldots ; \I_S \otimes \Z_t]$, respectively. However, 
by the discussion under case (1), we have that $TS \ge L$ such that, by assumption, $[\X_1 ; \X_2 ; \ldots ;\X_t]$  and $[\Z_1 ; \Z_2 ; \ldots ;\Z_t]$ are of rank $L$. This also implies that the $TL \times S$ matrix  $[\X_1^T ; \X_2^T ; \ldots ;\X_t^T]$ is full rank, which is rank $S$ as $TL \ge S$. Then, again using the discussion in case (1), the two matrices $[\I_S \otimes \Z_1 ; \I_S \otimes \Z_2 ; \ldots ; \I_S \otimes \Z_t]$ and $[\X_1^T \otimes \I_L ; \X_2^T \otimes \I_L; \ldots ;\X_t^T \otimes \I_L]$ are full rank, which are $S^2$ and $L^2$, respectively.  
So, the only way that $\mathbb{X}_B u = 0$ and $u$ is nonzero is utilizing the nonidentifiability. This is a contradiction, as this implies that ${\rm rank}(\mathbb{X}_B) = S^2 + L^2 - 1$, and $\mathbb{X}_B$ is full rank under case (2).

\end{proof}

\vspace{.25in}
\begin{proof}[Proof of Theorem~\ref{thm:BLIN_bilinear_offdiag} \\]
Recall that we work in the setting of $p=p_A=p_B$, as the bilinear model cannot accommodate different lags for the different influence types. Thus, we have $\Z_t = \X_t$ for all $t \in \{1,2,\ldots, T \}$ and the covariance matrices $\bOmega_X = \bOmega_Z =\bOmega$ and $\bPsi_X = \bPsi_Z =\bPsi$.

For either generative model, we may write the useful form
\begin{align}
E[\y_t|\x_t] = \bTheta \x_t, \label{eq_bigtheta}
\end{align}
such that $\bTheta = [\I_L \otimes \A ; \B^T \otimes \I_S]$ for the BLIN model and $\bTheta = \B^T \otimes \A$ for the bilinear model.
We examine the impacts of specifying either the BLIN or bilinear model as the generating model in the proof. First, however, we write the pseudo-true parameters for least squares estimators of the BLIN and bilinear models, respectively, under the general generative structure in \eqref{eq_bigtheta}. 

For least squares estimation of the BLIN model, we may write the pseudo-true parameters for $\A$ as 
\begin{align}
\tilde{\A}^T &= E\left[ \Y_t \X_t^T - \X_t \tilde{\B} \X_t ^T \right] E \left[\X_t \X_t^T \right]^{-1}, \label{eqA0}  \\
&= E \left[\sum_{j = 1}^L E [\y_{\cdot jt} \ | \ \x_t] \x_{\cdot jt}^T - \sum_{i = 1}^L \sum_{j = 1}^L\tilde{b}_{ij} \x_{\cdot it} \x_{\cdot jt}^T \right] E \left[\X_t \X_t^T \right]^{-1}, \label{eqA0_sum}
\end{align}
where we obtain \eqref{eqA0} by maximizing the expression for the BLIN pseudo-true parameters in \eqref{eqRR2norm} with respect to $\A$. Then, we may exploit the assumption that $E [\y_{\cdot jt} \ | \ \x_t]$ is a linear function of $\x_t$. Letting $\s{C}_{k \ell}$ be the $S \times S$ partition of $\bTheta$ relating column $\ell$ of $\X_t$ to column $k$ of $\Y_t$, substituting into \eqref{eqA0_sum}, we obtain
\begin{align}
 \tilde{\A}^T &= \left( \sum_{k=1}^L \sum_{\ell=1}^L \s{C}_{k \ell} \omega_{k \ell} - {\rm tr}(\bOmega \tilde{\B})  \I_S \right) \Big/ {\rm tr}(\bOmega), \label{eqA1}
\end{align}
where and $ \omega_{k \ell }$ is the $(k, \ell)$ entry in $\bOmega$ (and we use the symmetric property of $\bOmega$) and the terms concerning $\bPsi$ cancel. We note that the BLIN pseudo-true parameters $\tilde{\A}$ exist by applying Conditions~\ref{cond:cond} to the explicit expression for $\hat{\A}$ in \eqref{eqABexact}, using the law of large numbers. 
\cite{hoff2015multilinear}
writes the pseudo-true parameters for the bilinear estimator of $\A$, which we denote $\bar{\A}$, under least squares estimation as:
\begin{align}
\bar{\A}^T &= E\left[ \Y_t \bar{\B} \X_t \right] E \left[\X_t \bar{\B} \bar{\B}^T \X_t^T \right]^{-1}, \\
&= {\rm tr}(\bOmega \bar{\B} \bar{\B}^T)^{-1} \sum_{j=1}^L \sum_{k=1}^L  \left( \sum_{\ell = 1}^L \bOmega_{k \ell} \bar{b}_{j\ell} \right) \s{C}_{jk}, \label{eq:pseudo_bil}
\end{align}
where $\bar{b}_{j \ell}$ is the $(j, \ell)$ entry in  $\bar{\B}$, the pseudo-true parameters for the bilinear estimator of $\B$. The bilinear pseudo-true parameters $\bar{\A}$ exist as Conditions~\ref{cond:cond} satisfy those given in \cite{hoff2015multilinear}.

We have derived the pseudo-true parameters for the least squares estimators of the BLIN and bilinear models in \eqref{eqA1} and \eqref{eq:pseudo_bil}, respectively, whenever $E[\y_t|\x_t] = \bTheta \x_t$. Now, we address the specific BLIN and bilinear generating models. That is, we specify the $S \times S$ matrices $\{ \s{C}_{k \ell}\}_{k, \ell}$ that partition $\bTheta$ under each generating model and examine the resulting pseudo-true parameters. 
 
When the data $\{ \Y_t \}_{t=1}^T$ are generated by the BLIN model, 
the matrix $\s{C}_{jk} = \A^T + b_{jk}\I_s$ when $k=\ell$ and $\s{C}_{jk}$ is diagonal otherwise. Substituting into \eqref{eq:pseudo_bil}, we see:
\begin{align}
\bar{\A}^T &= {\rm tr}(\bOmega \bar{\B} \bar{\B}^T)^{-1} \sum_{j=1}^L \left( \sum_{\ell = 1}^L \omega_{j \ell} \bar{b}_{j\ell} \right) \A^T + c_1 \I_s, \\
&= \frac {{\rm tr}(\bOmega \bar{\B} )} {{\rm tr}(\bOmega \bar{\B} \bar{\B}^T)} \A^T + \frac {{\rm tr}(\bOmega \bar{\B} \B^T)} {{\rm tr}(\bOmega \bar{\B} \bar{\B}^T)} \I_s. \label{eqA_pseudo_bilin_final}
\end{align}
Thus, the off-diagonal pseudo-true parameters 
$\bar{\A}$ are equal to the off-diagonal entries in $\A$, up to a multiplicative constant, when the data are generated by the BLIN model.

When the data $\{ \Y_t\}_{t=1}^T$ are generated by the bilinear model, the matrix $\s{C}_{jk} = {b}_{jk} \A^T$ for all $(j,k)$. Substituting into \eqref{eqA1}, we see
\begin{align}
\tilde{\A}^T &= \left( \sum_{k=1}^L \sum_{\ell=1}^L b_{k \ell} \omega_{\ell k} \A^T - {\rm tr}(\bOmega \tilde{\B})  \I_S \right) \Big/ {\rm tr}(\bOmega), \\
&= \frac { {\rm tr}(\bOmega \B)} {{\rm tr}(\bOmega) } \A^T - \frac { {\rm tr}(\bOmega  \tilde{\B})} {{\rm tr}(\bOmega) } \I_s, \label{eqA_pseudo_final}
\end{align}
and ${\rm tr}(\bOmega) \neq 0$ by assumption of $\bOmega$ positive definite.
Thus, the off-diagonal pseudo-true parameters for $\tilde{\A}$
are equal to the off-diagonal entries in $\A$, up to a multiplicative constant, when the data are generated by the bilinear model. 

These results hold for the off-diagonal pseudo-true parameters as estimated for the BLIN and bilinear models, $\tilde{\B}$ and $\bar{\B}$, respectively. This fact can be seen by transposing the BLIN and bilinear models and swapping the roles $\B$ for $\A$ in the above proof, i.e. writing the BLIN model as $E[\Y_t^T | \X_t] = \B^T \X_t^T + \X_t^T \A$ and applying the above arguments for $\A$ to $\B^T$. 
\end{proof}

\vspace{.25in}
\begin{proof}[Proof of Proposition~\ref{propPseudo1} \\]

The pseudo-true parameter under least squares estimation of the BLIN model, $\tilde{\A}$, is given in \eqref{eqA0}.
Under the assumptions in Proposition~\ref{propPseudo1}, the matrix $E \left[\X_t \X_t^T \right]^{-1}$ is diagonal with entries $1/ E[\x_{i \cdot t}^T \x_{i \cdot t}]$. Then, the off-diagonal $(i,j)$ entry of $\tilde{A}$ is
\begin{align}
\tilde{a}_{ij} = \frac{E[\x_{i\cdot t}^T \y_{j \cdot t}] - {\rm tr}( E[ \x_{i\cdot t} \z_{j \cdot t}^T] \tilde{\B}^T )}{E[\x_{i \cdot t}^T \x_{i \cdot t}]}, \ \ i\neq j. \label{eq_aij}
\end{align}
By the assumptions given in Proposition~\ref{propPseudo1}, both terms in the numerator are zero as is the coefficient. 

The result for $\tilde{b}_{ij}$ when $i \neq j$ follows from considering the transpose of the BLIN model as in the proof of Theorem~\ref{thm:BLIN_bilinear_offdiag}, that is, swapping the roles of $\A$ and $\B$ and the roles of $\X_t$ and $\Z_t$ in \eqref{eq_aij}. 
\end{proof}

\vspace{.25in}

\begin{proof}[Proof of Proposition~\ref{propPseudo2} \\]

Refer to the expression for the pseudo-true parameter under least squares estimation of the BLIN model, $\tilde{\A}$, in \eqref{eqA1}.
The entries relating row $i$ of $\X_t$ to row $j$ of $\Y_t$ are the $(j,i)$ entries in $\s{C}_{k \ell}$ for all $(k, \ell)$. These are zero by assumption when $i \neq j$. Thus, the assumptions imply $\tilde{a}_{ij} = 0$ when $i \neq j$. There is no issue when $\Z_t \neq \X_t$, as this change only enters \eqref{eqA1} through $\bOmega$, which value is immaterial to the argument. 

Again, the result for $\tilde{b}_{ij}$ when $i \neq j$ follows from considering the transpose of the BLIN model as in the proof of Theorem~\ref{thm:BLIN_bilinear_offdiag}.
\end{proof}

\vspace{.25in}
\begin{proof}[Proof of Proposition~\ref{prop:BLIN_bilinear_diag}]
As in the proof of Theorem~\ref{thm:BLIN_bilinear_offdiag}, we work in the setting of $p=p_A=p_B$, Thus, we have $\Z_t = \X_t$ for all $t \in \{1,2,\ldots, T \}$ and the covariance matrices $\bOmega_X = \bOmega_Z =\bOmega$ and $\bPsi_X = \bPsi_Z =\bPsi$. 

Under least squares estimation of the BLIN model, using \eqref{eqA_pseudo_final}, we may write the pseudo-true diagonal specification as
\begin{align}
\tilde{a}_{ii} + \tilde{b}_{jj} &= \frac{{\rm tr}(\B)}{L} a_{ii} + \frac{{\rm tr}(\A)}{S} b_{jj} - \frac{{\rm tr}(\A) {\rm tr}(\B)}{SL},
\end{align}
where we use the condition of matrices $\bOmega$ and $\bPsi$ proportional to the identities of appropriate size. Then, the traces of $\A$ and $\B$ are simply $\alpha S$ and $\beta L$, respectively. Substituting, we find 
\begin{align}
\tilde{a}_{ii} + \tilde{b}_{jj} &= \alpha \beta + \alpha \beta - \frac{SL \alpha \beta}{SL} = \alpha \beta \ \forall \ i,j,
\end{align}
which is the specification of the diagonal entries under the bilinear model, $\alpha \beta = a_{ii} b_{jj} \ \forall \ i,j$.

We now turn to least squares estimation of the bilinear model. It is sufficient to provide a counterexample to prove the claim. Set $\A = \alpha \I_S + \A_0$, where $\A_0$ has diagonal of all zeros, and the same for $\B = \beta \I_L + \B_0$, for some $\alpha + \beta \neq 0$. 
Using the expression for the bilinear pseudo-true parameters in  \eqref{eqA_pseudo_bilin_final}, we may write
pseudo-true diagonal specification under least squares estimation of the bilinear model as
\begin{align}
\bar{a}_{ii} \bar{b}_{jj}  &= \frac{ {\rm tr}(\bar{\A}) {\rm tr}(\bar{\B}) }{ {\rm tr}(\bar{\A}^T \bar{\A} ) {\rm tr}(\bar{\B}^T \bar{\B} ) } 
\left( 
\alpha+\beta + \frac{{\rm tr}(\bar{\A} \A_0^T)}{{\rm tr}(\bar{\A})} \right)
\left( 
\alpha+\beta + \frac{{\rm tr}(\bar{\B} \B_0^T)}{{\rm tr}(\bar{\B})} \right), \label{eq_diag_bil_1}
\end{align}
for any pair $(i,j)$.
Again using the expression for the bilinear pseudo-true parameters in  \eqref{eqA_pseudo_bilin_final}, we have that 
\begin{align}
\frac{ {\rm tr}(\bar{\A})  }{ {\rm tr}(\bar{\A}^T \bar{\A} )  } \frac{ {\rm tr}(\bar{\B})  }{ {\rm tr}(\bar{\B}^T \bar{\B} )  }
&= \alpha+\beta+ \frac{{\rm tr}(\bar{\A} \A_0^T)}{{\rm tr}(\bar{\A})}+\frac{{\rm tr}(\bar{\B} \B_0^T)}{{\rm tr}(\bar{\B})}. \label{eq_diag_bil_2}
\end{align}
Substituting \eqref{eq_diag_bil_2} into \eqref{eq_diag_bil_1}, we see that 
\begin{align}
\bar{a}_{ii} \bar{b}_{jj}  &=  \alpha + \beta + \frac{c_A c_B}{\alpha + \beta + c_A + c_B}, \label{eqDiag_pseudo_bilinear} \\
\text{ where \ } c_A :&= \frac{{\rm tr}(\bar{\A} \A_0^T)}{{\rm tr}(\bar{\A})}, \nonumber \\
c_B :&= \frac{{\rm tr}(\bar{\B} \B_0^T)}{{\rm tr}(\bar{\B})}. \nonumber
\end{align}
This expression is equal to the true diagonal specification $\alpha + \beta$ if and only at least one of $c_A$ or $c_B$ is equal zero. For example, take all entries in $\A_0$ and $\B_0$ to be zero except for $a_{12}$ and $b_{12}$. Then, 
\begin{align}
c_A c_B &= \frac{a_{12}^2 b_{12}^2}{SL \left( \alpha + \beta + c_A\right) \left( \alpha + \beta + c_B\right) } \neq 0.
\end{align}
This case establishes a counterexample for $\bar{a}_{ii} \bar{b}_{jj} \neq \alpha + \beta$.
\end{proof}

\vspace{.25in}
\begin{proof}[Proof of Proposition~\ref{prop:BLIN_bilinear_mean}]
By assumption, the diagonals of the BLIN estimators $\{ \tilde{\A}, \tilde{\B} \}$ are constant; we let these values be $\alpha$ and $\beta$, respectively. Thus, we have
\begin{align}
\tilde{\A} &= \alpha \I_S + \A_0, \hspace{.4in}    \tilde{\B} = \beta \I_L + \B_0, \label{eq_AB_blin_equiv}
\end{align}
where $\A_0$ and $\B_0$ are matrices with zeros along the diagonal. 

Now, the assumption of equivalences of the estimators states that the bilinear estimator, for example $\bar{\A}$, has off-diagonal entries that are equivalent to $\A_0$ multiplied by the diagonal entry $\beta$, by Theorem~\ref{thm:BLIN_bilinear_offdiag}. The same is true for the off-diagonal entries of $\bar{\B}$, which are equivalent to $\alpha \B_0$. The diagonal entries of the bilinear estimators $\bar{\A}$ and $\bar{\B}$ are not unique, but $\bar{a}_{ii} = \text{sign}(\alpha + \beta)\sqrt{|\alpha + \beta|}$ and $b_{jj} = |\alpha + \beta|$ satisfies the equivalence relation $\bar{a}_{ii} \bar{b}_{jj} = \tilde{a}_{ii} + \tilde{b}_{jj}$ for all $i \in \{1,2,\ldots, S \}$ and $j \in \{1,2,\ldots, L \}$. Thus, we have the blinear estimators
\begin{align}
\bar{\A} &= \text{sign}(\alpha + \beta)\sqrt{|\alpha + \beta|} \I_S + \beta \A_0, \hspace{.4in}    \tilde{\B} = |\alpha + \beta| \I_L + \alpha \B_0, \label{eq_AB_bilinear_equiv}
\end{align}

Now assume towards a contradiction that the estimated means of the BLIN and bilinear models are equal, that is,
\begin{align}
\tilde{\A}^T \X_t + \X_t \tilde{\B} = \bar{\A}^T \X_t \bar{\B}, \ \ \ t \in \{1,2,\ldots, T \}, \label{eq_BLIN_blinear_prop6_equiv}
\end{align}
Then, substituting the matrices in \eqref{eq_AB_blin_equiv} and \eqref{eq_AB_bilinear_equiv}, \eqref{eq_BLIN_blinear_prop6_equiv} implies that
\begin{align}
\beta {\A_0}^T \X_t + \alpha \X_t {\B_0} &= \beta {\A_0}^T \X_t + \alpha \X_t {\B_0} + {\A_0}^T \X_t {\B_0}, \\
\mathbf{0} &= {\A_0}^T \X_t {\B_0}, \ \ \ t \in \{1,2,\ldots, T \},
\end{align}
which is true for general $\X_t$ with probability zero unless both $\A_0$ and $\B_0$ are zero. 
\end{proof}

\clearpage
\section{Details of Simulation Studies}
\label{section_sim_details}
We provide details of the simulation studies performed to compare the bilinear and BLIN models. We first detail the cross-validation study. We then discuss a convergence study verifying Theorem~\ref{thm:BLIN_bilinear_offdiag}~and Proposition~\ref{prop:BLIN_bilinear_diag}.

\subsection{Cross-Validation Study}
We began by generating weighted networks of rank 1, that is,
\begin{align*}
&\A_0 = \mathbf{u} \mathbf{v}^T,\hspace{.4in} \B_0 = \mathbf{r} \mathbf{s}^T, 
\end{align*}
with each vector $\mathbf{u}$, $\mathbf{v}$, $\mathbf{r}$, and $\mathbf{s}$ of length 10 and consisting of independent and identically distributed standard normal random variables. 
Then, to remove self-loops in the networks, we set the diagonal entries of $\A_0$ and $\B_0$ to zero, thus arriving at true influence networks $\A$ and $\B$.
To generate values of $\A$ and $\B$ with fractions of zeros $q=0.5$ and $q = 0.9$, we simply set the absolute smallest 50\% and 90\% of entries in $\A$ and $\B$ to zero, respectively.

To control the signal-to-noise ratio of the models, we set the large-sample $R^2$ of each generative model to be about $0.75$. To do so, we scaled the true $\A$ and $\B$ by a multiplicative constant based on the generating model and sparsity level. We now derive the formulas for these constants for the general lag 1 autoregressive model in \eqref{eq_varsim}. By definition, the in-sample $R^2$ of any estimator of the mean of $\y_t$, denoted here $\{ \hat{\y}_t \}_{t=1}^T$, is
\begin{align}
R^2 =  \frac{\frac{1}{T}\sum_{t=1}^T || \y_t ||_2^2 - \frac{1}{T}\sum_{t=1}^T || \hat{\y}_t - \y_t ||_2^2}{\frac{1}{T}\sum_{t=1}^T || \y_t ||_2^2}. \label{eq_r2_def}
\end{align}
Now, we assume that $\bTheta$ is known, such that $\hat{\y}_t = \bTheta \y_{t-1}$ for all $t$. Then, for large samples and independent errors, the Law of Large Numbers implies that the sample averages in \eqref{eq_r2_def} converge (in probability) to their expectations. Thus, we have that, for large samples,
\begin{align}
R^2 &\approx \frac{\text{tr} \left( \bTheta \bTheta^T E[\y_{t-1} \y_{t-1}^T] \right)}{\text{tr} \left( \bTheta \bTheta^T E[\y_{t-1} \y_{t-1}^T] \right) + E[\e_t^T \e_t]} = \frac{\text{tr} \left( \bTheta \bTheta^T E[\y_{t-1} \y_{t-1}^T] \right)}{\text{tr} \left( \bTheta \bTheta^T E[\y_{t-1} \y_{t-1}^T] \right) + SL},  
\end{align}
with $S=L=10$ in this simulation. Further, it can be shown that for any stationary mean-zero VAR model of the form of \eqref{eq_varsim} that
\begin{align}
\text{vec} \left( E[\y_{t-1} \y_{t-1}^T] \right) = \left(\I_{S^2 L^2} - \bTheta \otimes \bTheta \right)^{-1} \text{vec} \left(\I_{SL} \right).
\end{align}
For the BLIN model, we scaled both $\A$ and $\B$ by a constant $k_{BLIN}$ and defined $\bTheta_{k,q} = k_{BLIN} [\I_L \otimes \A_q ; \B_q^T \otimes \I_S]$, where the subscript `$k$' emphasizes that $\bTheta_{k,q}$ depends on $k_{BLIN}$ and the subscript `$q$' signifies the $\A_q$ and $\B_q$ influence networks correspond to one of the sparsity levels $q\in\{0.0,0.5,0.9\}$.  We then defined 
\begin{align}
g(k_{BLIN}, q) &= \text{tr} \left( \bTheta_{k,q} \bTheta_{k,q}^T E[\y_{t-1} \y_{t-1}^T] \right) = \text{tr} \left( \bTheta_{k,q} \bTheta_{k,q}^T E[\y_{t-1} \y_{t-1}^T] \right) \\ 
&=  \text{vec} \left(\bTheta_{k,q} \bTheta_{k,q}^T \right)  \left(\I_{S^2 L^2} - \bTheta_{k,q} \otimes \bTheta_{k,q} \right)^{-1} \text{vec} \left(\I_{SL} \right).
\end{align}
 For each $q \in \{0.0, 0.5, 0.9 \}$, we selected $k_{BLIN}$ such that $ 0.75 \approx g(k_{BLIN}, q) / (g(k_{BLIN}, q) + SL)$, where the `$\approx$' simply indicates that $k_{BLIN}$ was selected by discrete search. 
We repeated the same procedure to select $k_{bilinear}$, although the $\bTheta_{k,q}$ in this case is $\bTheta_{k,q} = k_{bilinear}^2 \B_q^T \otimes \A_q$. 
Every generating model selected was stationary.

For each combination of $q \in \{0.0, 0.5, 0.9 \}$ and generating model, e.g. BLIN or bilinear, ($3 \times 2 = 6$ total simulation settings), we generated 100 data realizations. To ensure stationarity was reached, we simulated 100 time periods of each realization and used the final $T$ observations, for $T \in \{10, 20, 50\}$, as simulated data realizations. To compute out-of-sample $R^2$ for each simulated data set, we estimated the sparse, (reduced) rank 1, full BLIN models, and the bilinear model on each of 10 training data sets within a 10-fold cross-validation. This procedure is described in detail in Section~\ref{section:simulation}. We focused on $q = 0.9$ in the text in Section~\ref{section:simulation}, however, the out-of-sample results for $q \in \{0.0, 0.5 \}$ are given in Figure~\ref{fig:misspec_appx}. 

\spacingset{1}
\begin{figure}[ht!]
\centering
\begin{tabular}{r c c }
&\multicolumn{2}{c}{\textbf{Generating model}} \\
  &&\\
  & { BLIN } 
  \hspace{-.2in} & \hspace{-.2in} 
  Bilinear \\
\begin{sideways} \hspace{.55in} $\mathbf{q = 0.5}$ \end{sideways} &
 \includegraphics[width=.4\textwidth]{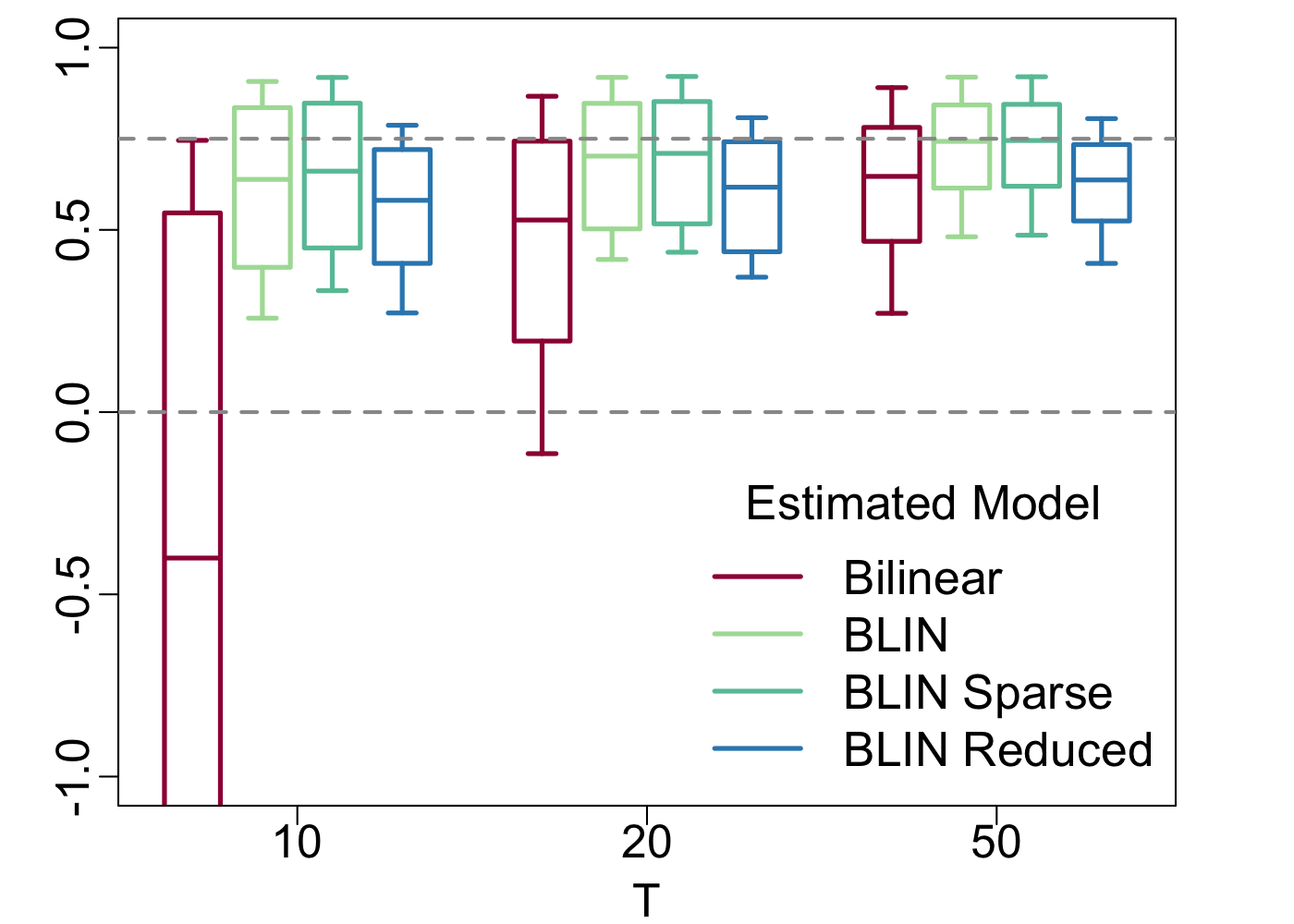}  
\hspace{-.2in} & \hspace{-.2in} 
 \includegraphics[width=.4\textwidth]{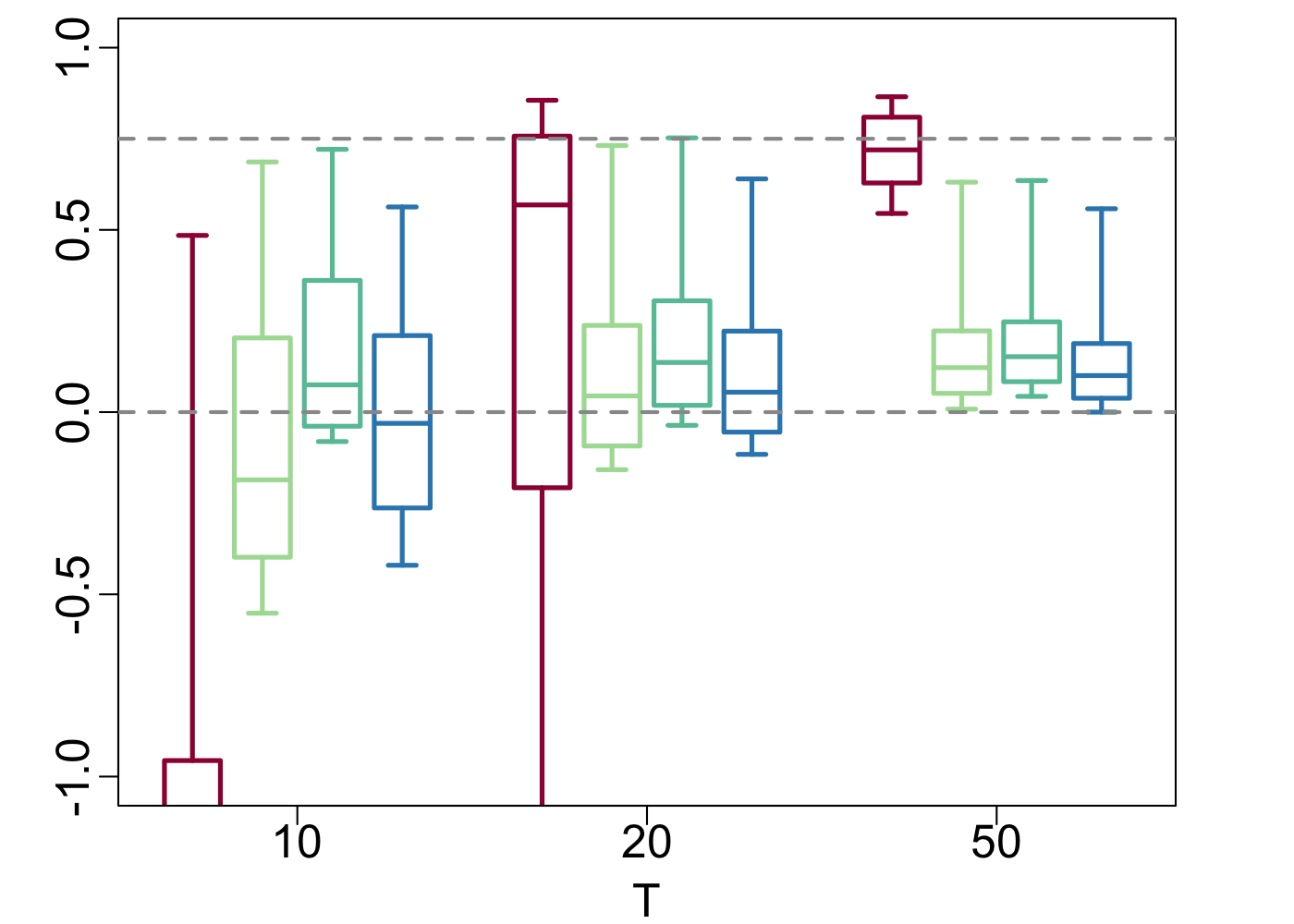} \\
 
 \begin{sideways} \hspace{.55in} $\mathbf{q = 0.0}$ \end{sideways} &
 \includegraphics[width=.4\textwidth]{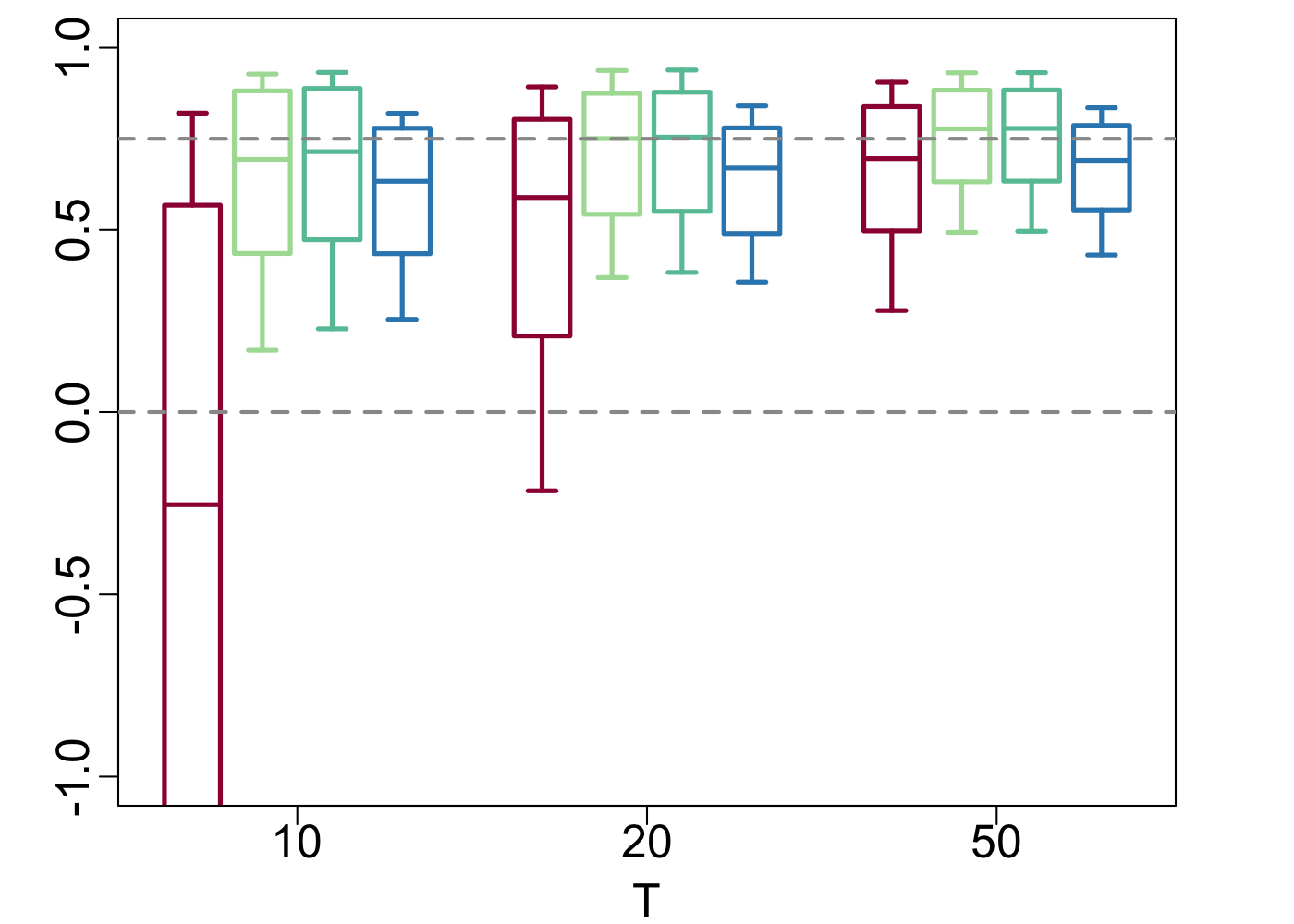}  
\hspace{-.2in} & \hspace{-.2in} 
 \includegraphics[width=.4\textwidth]{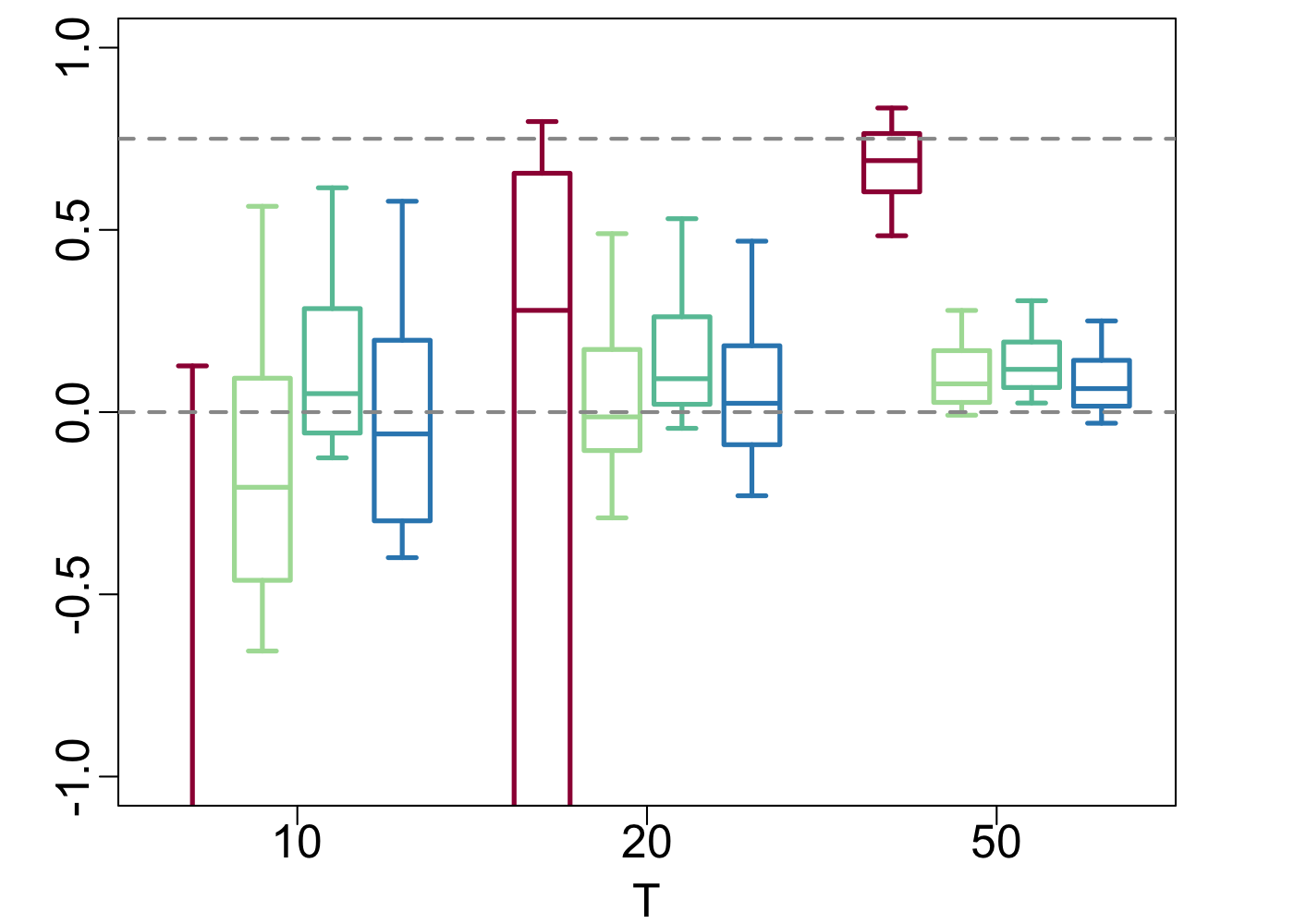} \\
  \end{tabular}
\caption{Out-of-sample $R^2$ values for each estimation procedure applied to 100 data realizations, with true coefficient matrices of sparsity $q \in \{0.0, 0.5 \}$. The left panel is results for data generated from the BLIN model and the right panel shows that for data generated from the bilinear model.
The centers of the boxplots represent the median $R^2$ value, the boxes represent the middle 80\% of $R^2$ values, and the whiskers correspond to the maximum and minimum $R^2$ values across 100 simulated data sets. Plots are truncated such that $R^2$ values less than $-1$ are not shown.  }
\label{fig:misspec_appx}
\end{figure}
\spacingset{1.5}

For completeness, we examined in-sample $R^2$ values for the generated data as well. In this case, we performed no cross-validation, but simply estimated each model on each complete data realization, i.e. for a given generating model, $q$, and $T$. The in-sample $R^2$ values are shown in Figure~\ref{fig:misspec_R2in}.

\spacingset{1}
\begin{figure}[ht!]
\centering
\begin{tabular}{r c c }
&\multicolumn{2}{c}{\textbf{Generating model}} \\
  &&\\
  & { BLIN } 
  \hspace{-.2in} & \hspace{-.2in} 
  Bilinear \\
\begin{sideways} \hspace{.55in} $\mathbf{q = 0.9}$ \end{sideways} &
 \includegraphics[width=.4\textwidth]{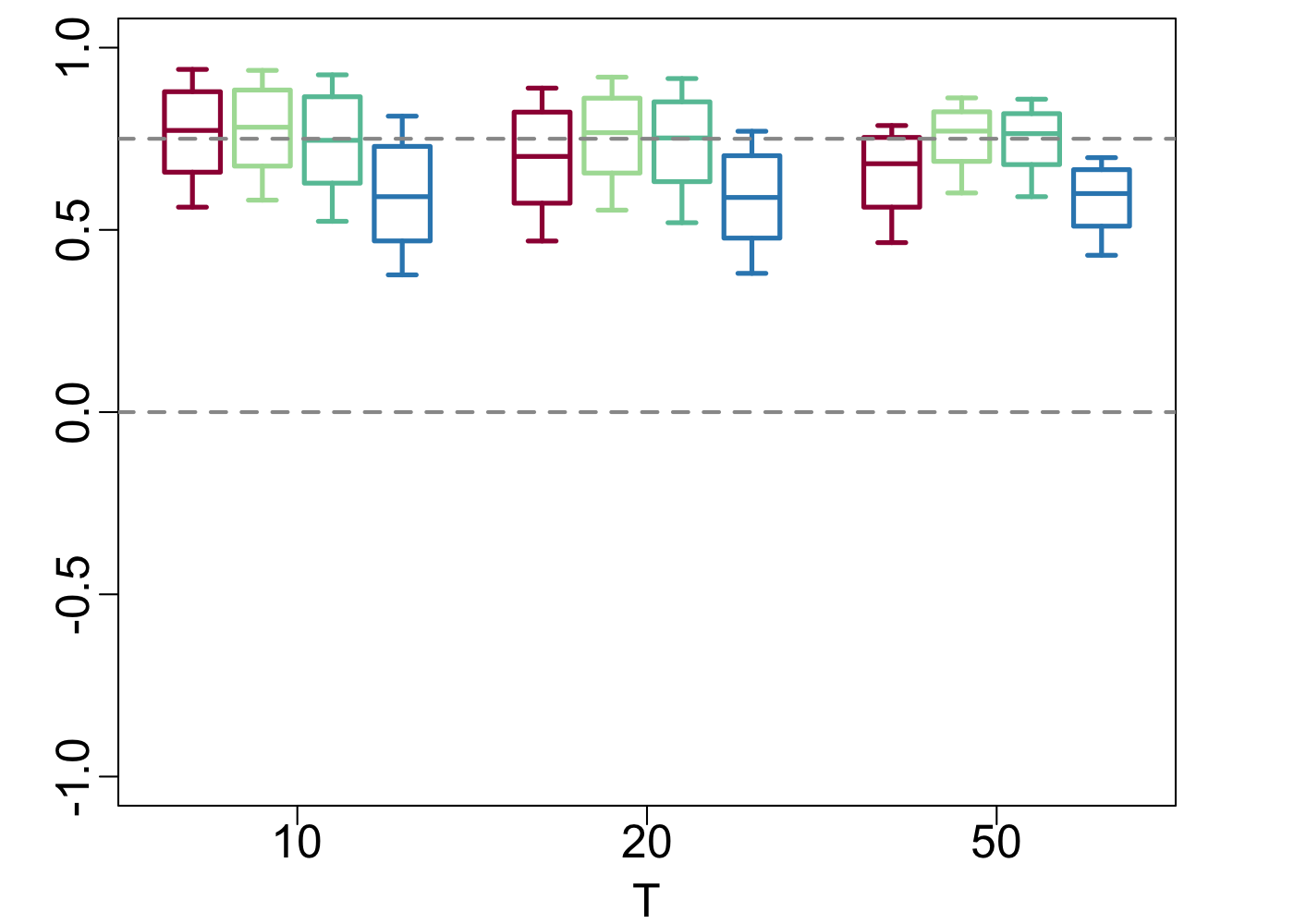}  
\hspace{-.2in} & \hspace{-.2in} 
 \includegraphics[width=.4\textwidth]{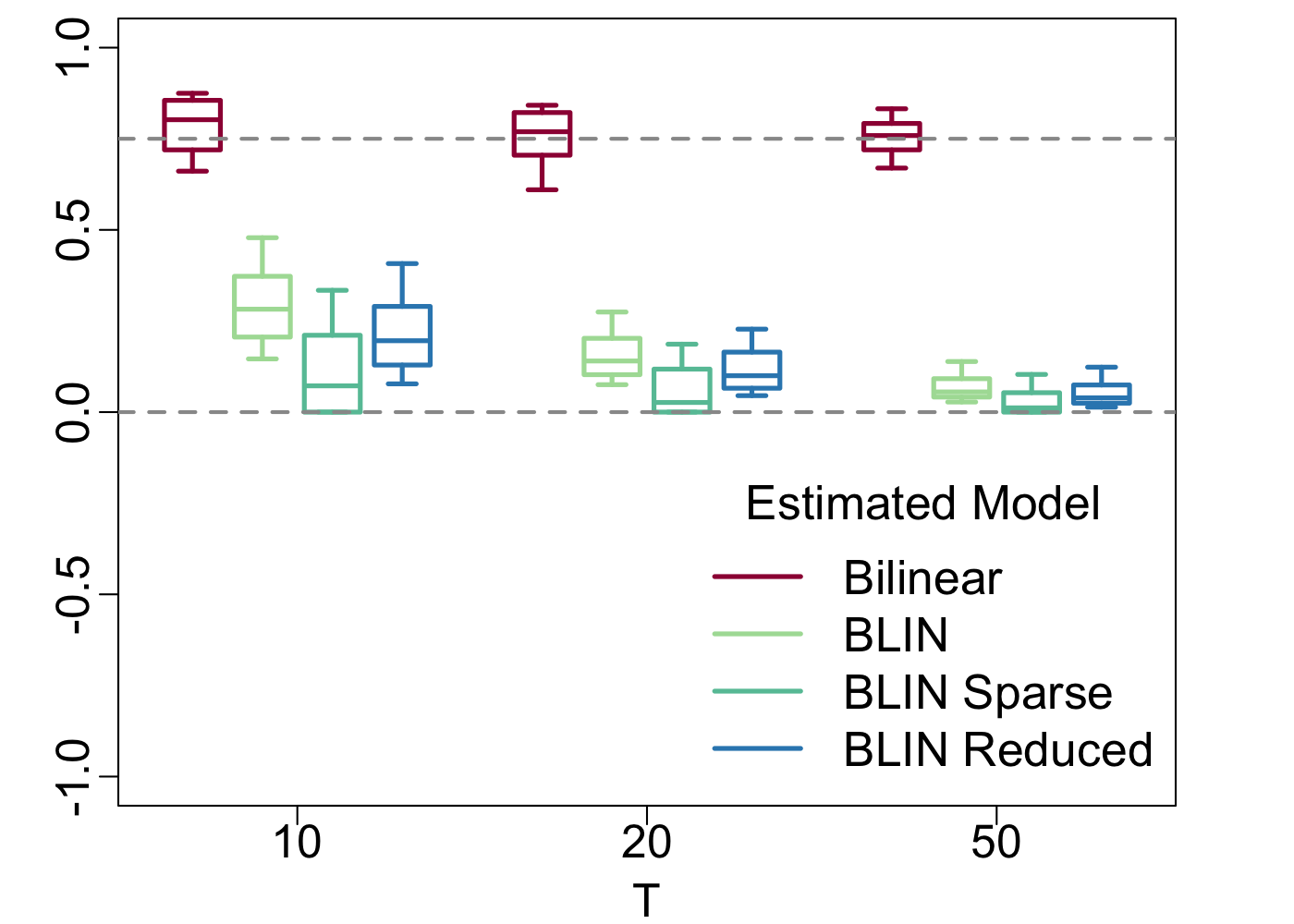} \\
 
\begin{sideways} \hspace{.55in} $\mathbf{q = 0.5}$ \end{sideways} &
 \includegraphics[width=.4\textwidth]{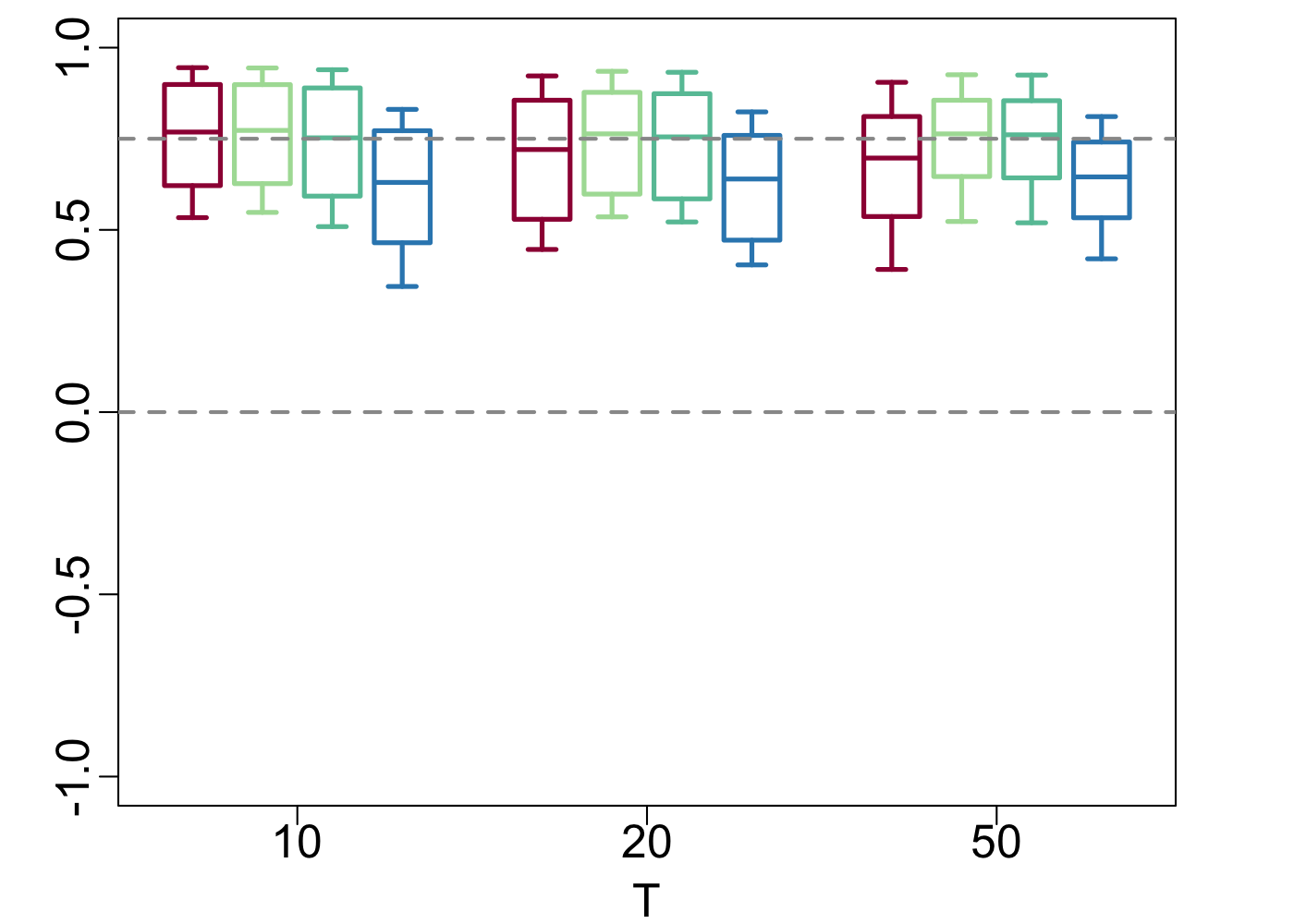}  
\hspace{-.2in} & \hspace{-.2in} 
 \includegraphics[width=.4\textwidth]{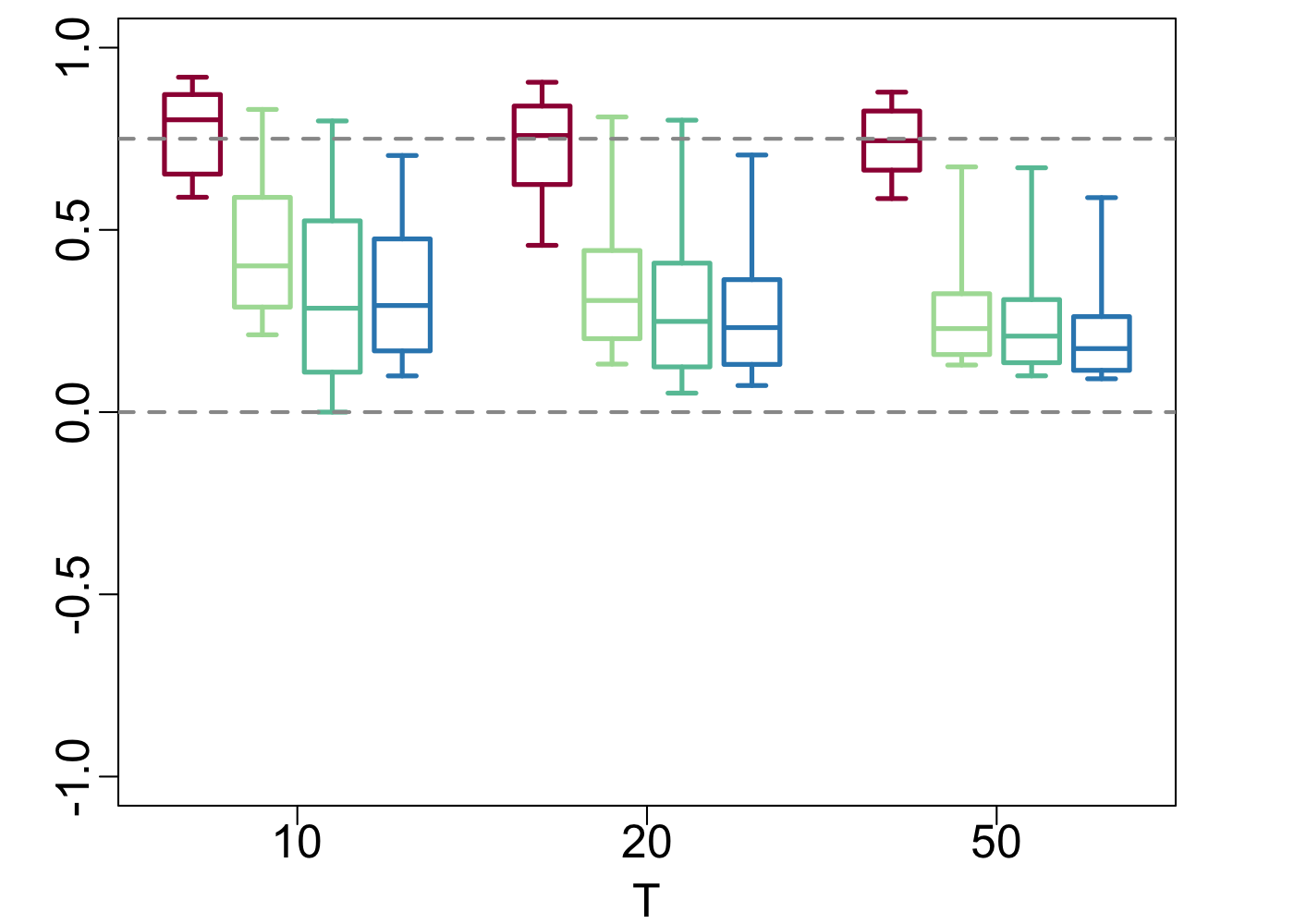} \\
 
 \begin{sideways} \hspace{.55in} $\mathbf{q = 0.0}$ \end{sideways} &
 \includegraphics[width=.4\textwidth]{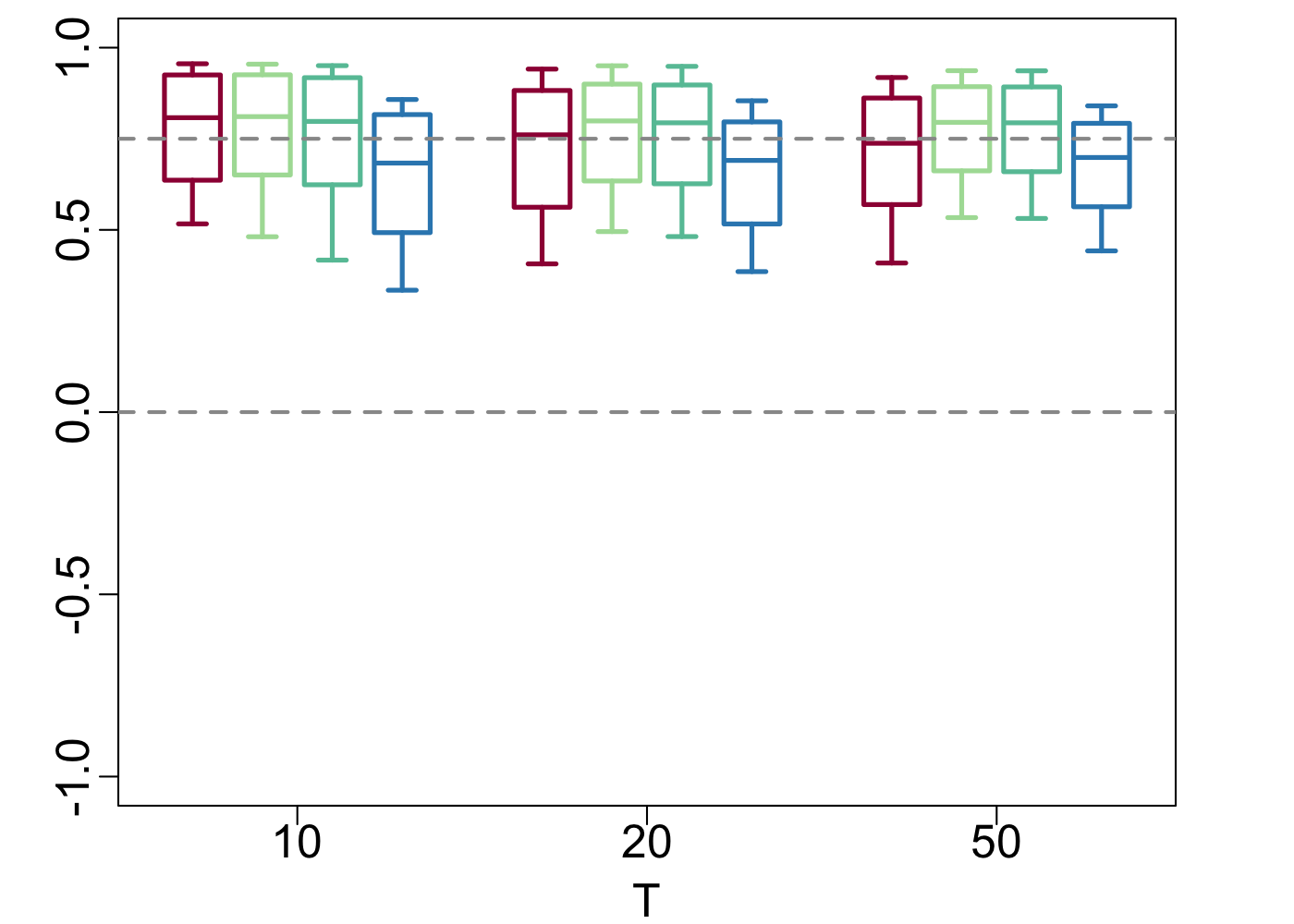}  
\hspace{-.2in} & \hspace{-.2in} 
 \includegraphics[width=.4\textwidth]{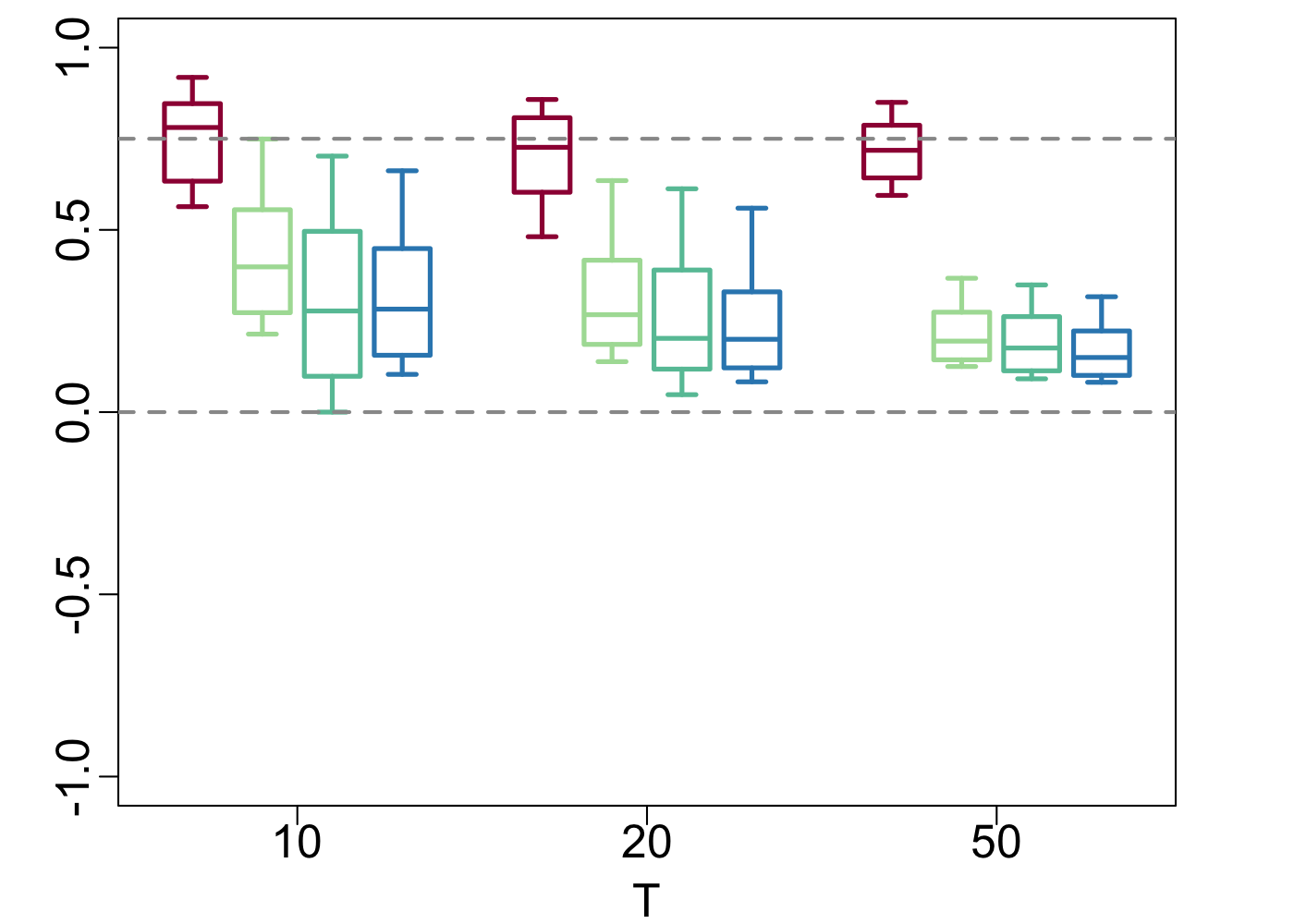} \\
  \end{tabular}
\caption{In-sample $R^2$ values for each estimation procedure applied to 100 data realizations, with true coefficient matrices of sparsity $q \in \{0.0, 0.5, 0.9 \}$. The left panel is results for data generated from the BLIN model and the right panel shows that for data generated from the bilinear model.
The centers of the boxplots represent the median $R^2$ value, the boxes represent the middle 80\% of $R^2$ values, and the whiskers correspond to the maximum and minimum $R^2$ values across 100 simulated data sets. }
\label{fig:misspec_R2in}
\end{figure}
\spacingset{1.5}

\subsection{Likelihood of bilinear model}

In this section, we conduct an investigation into the source of poor performance of the bilinear estimator when data is generated from this same model. To do so, we examine in-sample and out-of-sample $R^2$ values at various coefficient values. These coefficient values are weighted averages of true and estimated coefficients.
%We did the same for the BLIN model. 
Recall that, for normally-distributed data, large $R^2$ values correspond to large likelihood values and vice versa, so we use the terms, e.g., ``high $R^2$'' and ``large likelihood'' interchangeably for this discussion.

Recall that the bilinear model was estimated in a 10-fold cross-validation (see Section~\ref{section:simulation}), and thus, the bilinear estimator was computed on a subset of the data containing approximately 90\% of the full simulated data set. 
For a particular subset of a particular simulated data set, we computed in-sample and out-of-sample $R^2$ values  at a mixture of the true coefficients and the coefficients estimated on the particular subset. That is, we computed $R^2$ as a function of $\btheta$, where $\btheta$ is a vector of the entries in $\A$ and $\B$ matrices defined by the mixture
\begin{align}
\btheta^T = (1-\xi)[\text{vec}(\A), \text{vec}(\B)] + \xi[\text{vec}(\hat{\A}), \text{vec}(\hat{\B})], \label{eq_coef_mixture}
\end{align}
where $\{\A, \B \}$ are the true coefficients and $\{ \hat{\A}, \hat{\B} \}$ are the coefficients estimated on the 90\% data subset. Thus, we examined the likelihood surface along a line in the coefficient space connecting the true and estimated coefficients. This likelihood surface may show multiple local optima, should they exist. 

In Figure~\ref{fig:fig_bilinear_misspec}, we show the $R^2$ values when generating from the bilinear model and estimating the bilinear model. We see that, when $\xi=1$, the value of in-sample $R^2$ is high, as it must be, since $\btheta$ is the MLE when $\xi=1$. However, we also see that in-sample $R^2$ is high near $\xi=0$, i.e. near the true coefficients. Between $\xi=0$ and $\xi=1$, the in-sample $R^2$ drops substantially. This is evidence of multiple modes, or multiple local optima, in the likelihood of the bilinear model. Of course, we observe that the out-of-sample $R^2$ near $\xi=0$ is also high, that is, selecting coefficient values near the true coefficients generalize well in prediction out-of-sample. However, near $\xi=1$, even though the in-sample $R^2$ mode is higher than the one near $\xi=0$, the  out-of-sample $R^2$ value is low. Thus, choosing the mode with highest in-sample $R^2$, as is done in maximum-likelihood estimation procedures, does not necessarily generalize well out-of-sample, as shown in this case. This fact, taken with the poor representation of $\A$ by the $\hat{\A}$ matrix in Figure~\ref{fig:fig_bilinear_misspec} (we observe similar issues for $\B$ and $\hat{\B}$), suggests that $\xi=1$ is far from $\xi=0$ in coefficient space. That is, the likelihood of the bilinear model (at least for $T = 10$ and $q=0.50$) has multiple modes that are not near one another. Finding all of the modes is an extremely difficult -- if not impossible -- task, and selecting an incorrect mode may result in poor estimator performance. We find similar patterns to those in Figure~\ref{fig:fig_bilinear_misspec} for $T=20$ and $q \in \{0.0, 0.9 \}$ when generating from the bilinear model and estimating the bilinear model.

A similar analysis for the BLIN model confirms the unimodality implied by Proposition~\ref{prop_rankXb} (see Figure~\ref{fig:fig_blin_misspec}). Since the in-sample $R^2$ value does not dip between $\xi=0$ and $\xi=1$, the estimator (at $\xi=1$) resides in a mode that contains the true coefficient values. The fact that the out-of-sample $R^2$ is flat between $\xi=0$ and $\xi=1$ suggests that any value of $\xi$ in this range will generalize well out-of-sample, that is, any of these estimators will perform well. As $\hat{\A}$ looks remarkably similar to $\A$ in Figure~\ref{fig:fig_blin_misspec} ($\hat{\B}$ also looks similar to $\B$), taken with the flat out-of-sample likelihood, we can be confident the estimator $\hat{\btheta}$ and the true value $\btheta$ are close in the coefficient space.

\spacingset{1}
\begin{figure}[ht!]
\centering
 \includegraphics[width=.9\textwidth]{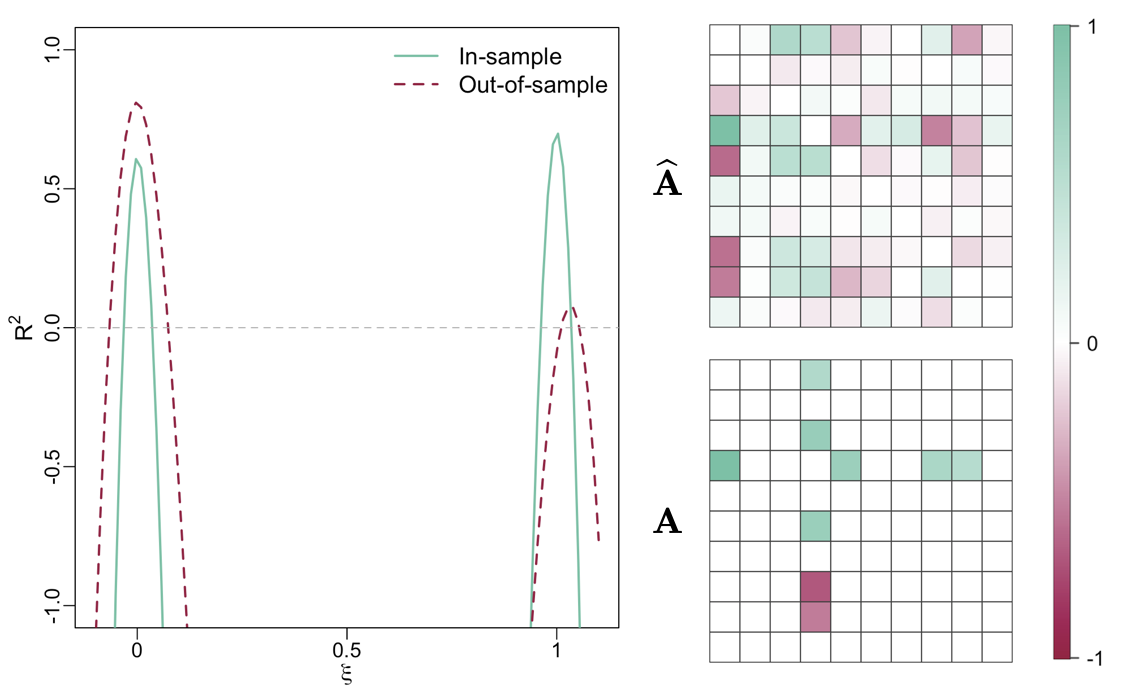} 

\caption{$R^2$ values for a mixture of true coefficients and coefficients estimated for the bilinear model (see \eqref{eq_coef_mixture}) when generating from the bilinear model with $q=0.9$ and $T = 10$, for a single fold of the cross-validation. The estimated $\A$ matrix (top right) is scaled by a constant $c$ (and $\B$ by $c^{-1}$) such that the estimated $\A$ and $\B$ are as close as possible in $L_2$ norm to the true $\A$ and $\B$ matrices. The pair of matrices $\A$ and $\hat{\A}$ are normalized to lie in $[-1,1]$ for sake of plotting.
}
\label{fig:fig_bilinear_misspec}
\end{figure}
\spacingset{1.5}

\spacingset{1}
\begin{figure}[ht!]
\centering
 \includegraphics[width=.9\textwidth]{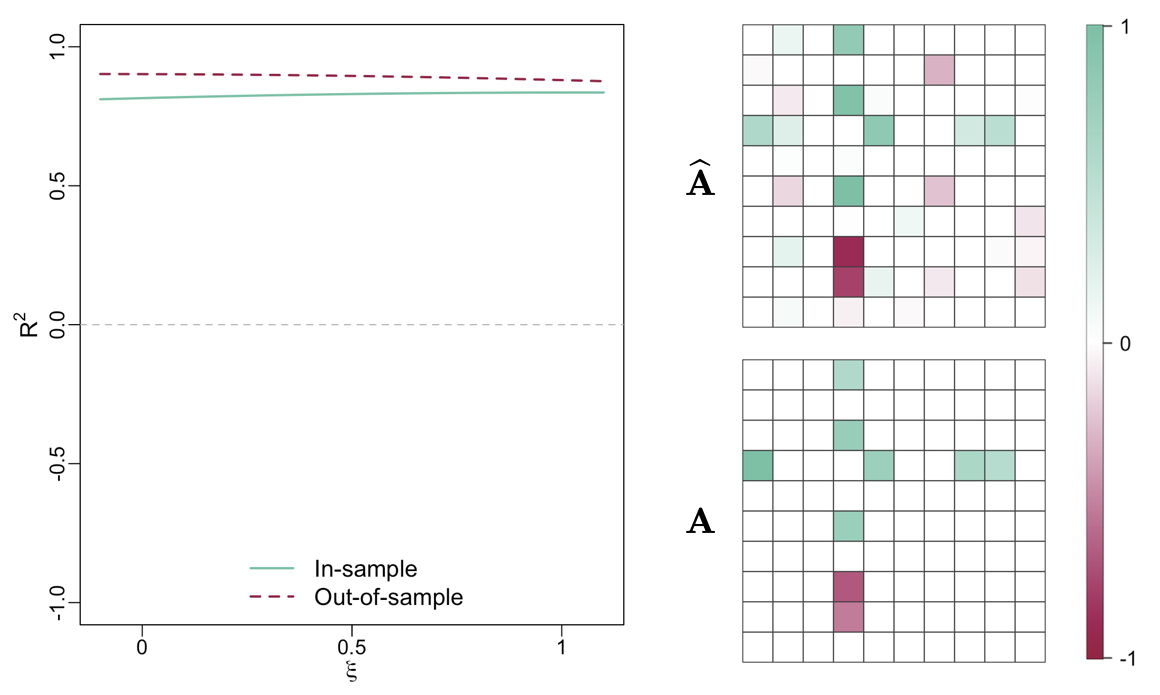} 

\caption{$R^2$ values for a mixture of true coefficients and coefficients estimated for the bilinear model (see \eqref{eq_coef_mixture}) when generating from the bilinear model with $q=0.9$ and $T = 10$, for a single fold of the cross-validation. The estimated $\A$ matrix (top right) is scaled by a constant $c$ (and $\B$ by $c^{-1}$) such that the estimated $\A$ and $\B$ are as close as possible in $L_2$ norm to the true $\A$ and $\B$ matrices. The pair of matrices $\A$ and $\hat{\A}$ are normalized to lie in $[-1,1]$ for sake of plotting.}
\label{fig:fig_blin_misspec}
\end{figure}
\spacingset{1.5}

\subsection{Convergence Study}
\label{section_sim_conv}
We conduct a simulation study (to larger dimension $T$ up to $10^5$) to examine the theoretical results of Section~\ref{section_misspec}. In line with Section~\ref{section_misspec}, we generate from and estimated the bilinear and full BLIN models, using the BLIN representation in \eqref{blin_Xt} with $\X_t = \Z_t$ for all $t \in \{1,2,\ldots, T \}$. For simplicity, we generate every element in $\X_t$ and $\E_t$ with independent and identically distributed (iid) standard normal random variables.
We fix  $\A$ and $\B$ throughout the study, with $\A^T = \U \V^T$ and $\B = \bR \S^T$, where $\{\U, \V\}$ are $S \times S$ matrices and $\{\bR, \S \}$ are $L \times L$ matrices, all with iid standard normal entries. Although these $\A$ and $\B$ matrices resemble those in the reduced rank BLIN model, by construction, they are full rank.
The number of time periods $T$ grows from $10^2$ to $10^5$ in $10^{1/2}$ multiplicative increments. We choose $S=10$, $L=9$, and repeat for 1,000 replications at each combination of $T$ and generating model. 
We compare the estimated coefficients to the truth by normalizing the off-diagonal entries such that, if the off-diagonals of the estimate $\hat{\A} = c\A$ for any $c \in \R$ from any estimation procedure, then we say that the off-diagonal estimates of $\A$ have no error. To do this, we compute the mean squared error between the off-diagonal entries in $\hat{\A}$ divided by their sum and the off-diagonal entries in $\A$ divided by their sum. We do the same for $\B$. Finally, we calculate the mean squared error of the autoregressive parameters as appropriate for the generating and estimating model. For instance, when generating from the bilinear model but estimating the BLIN model, the $(i,j)$ contribution to the mean squared error is  $( \hat{a}_{ii} + \hat{b}_{jj}- a_{ii}b_{jj} )^2$, and the mean square error of the diagonals is the sum over all $i$ and $j$. Lastly, we estimate the convergence rate of each estimator to the truth by fitting a simple linear regression model to ${\rm log}_{10}(MSE)$ against ${ \rm log}_{10}(T)$ for each coefficient and combination of generating and estimating models.
\spacingset{1}
\begin{figure}[ht]
\centering
\begin{tabular}{r c c c }
\begin{sideways} \hspace{.65in}{\small \textbf{Bilinear} } \end{sideways}&
 \includegraphics[width=.3\textwidth]{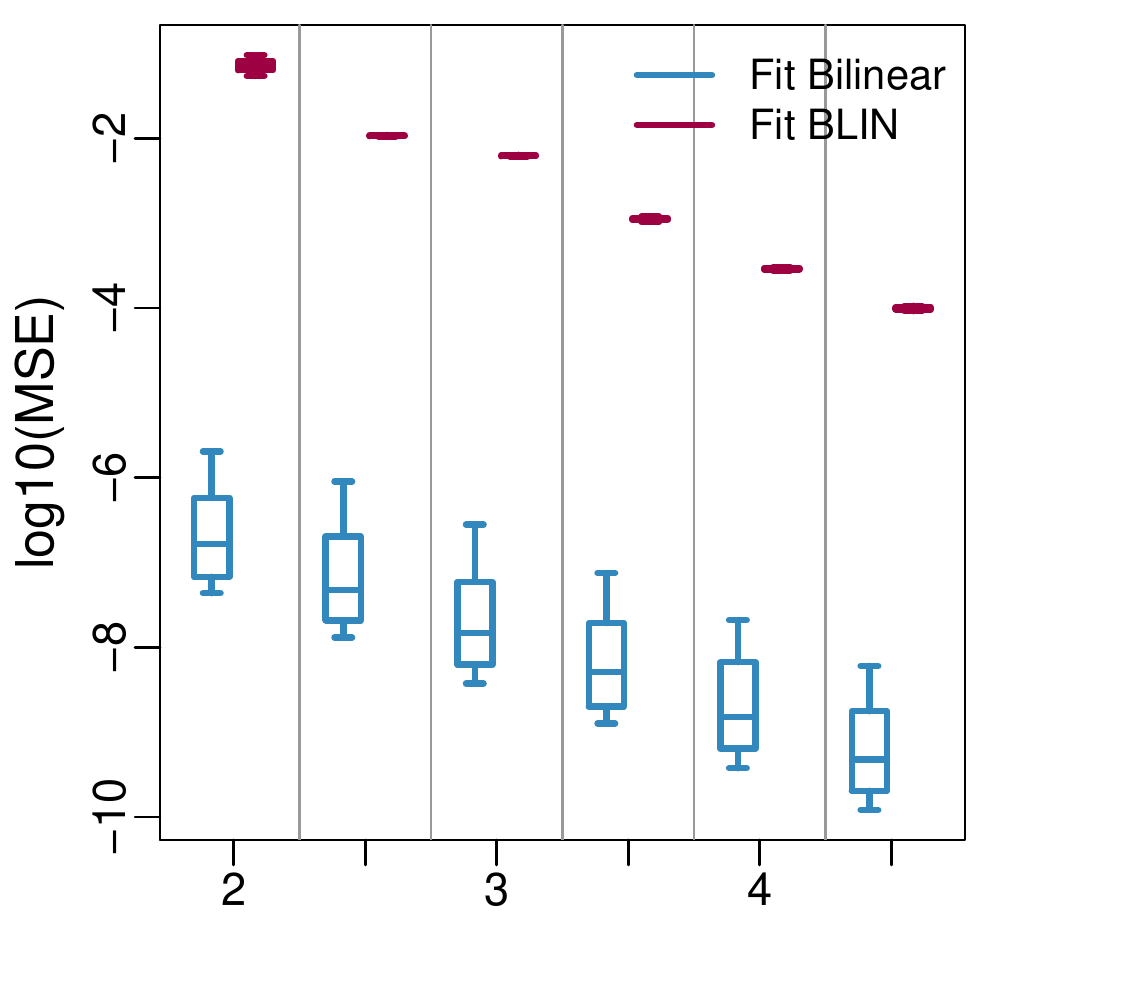}  &
\includegraphics[width=.3\textwidth]{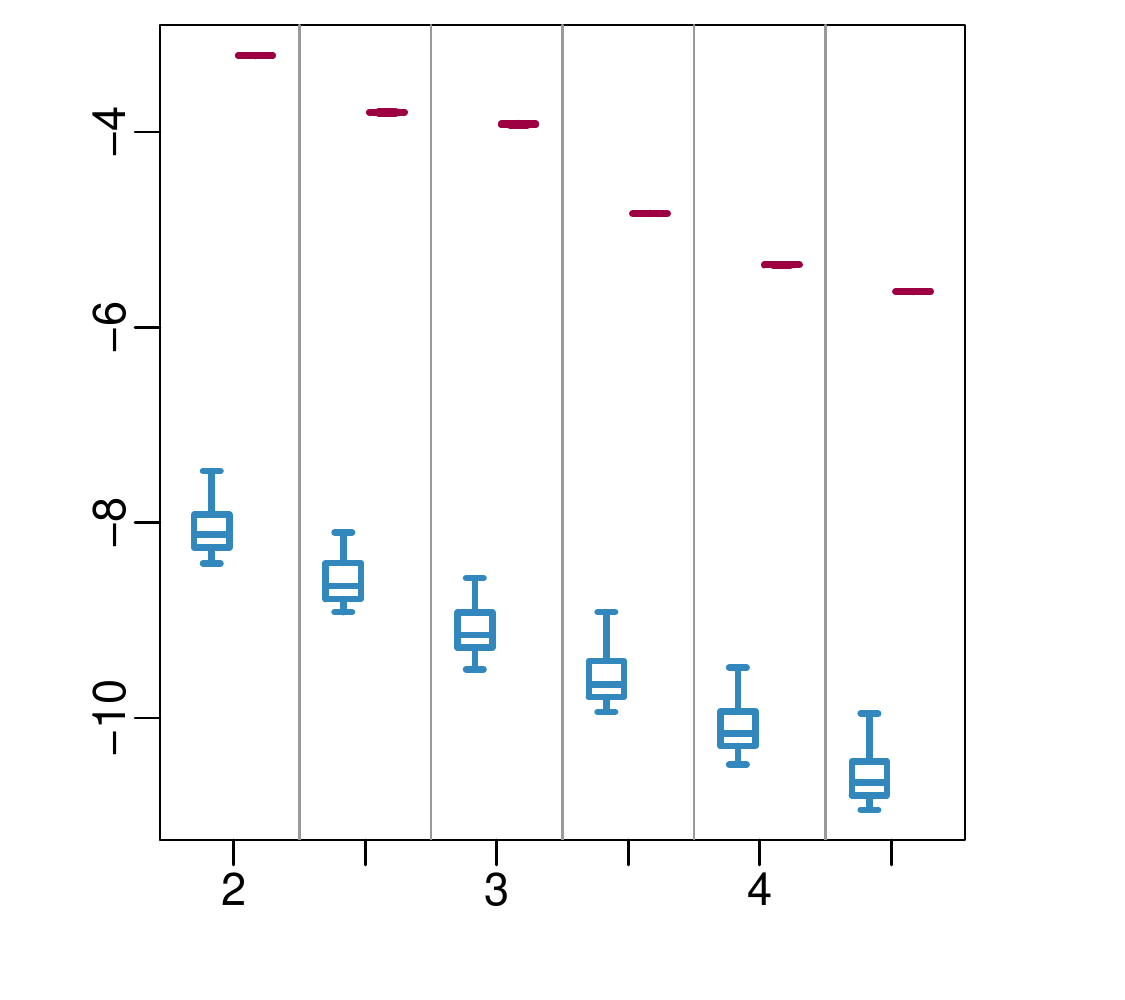}  &
  \includegraphics[width=.3\textwidth]{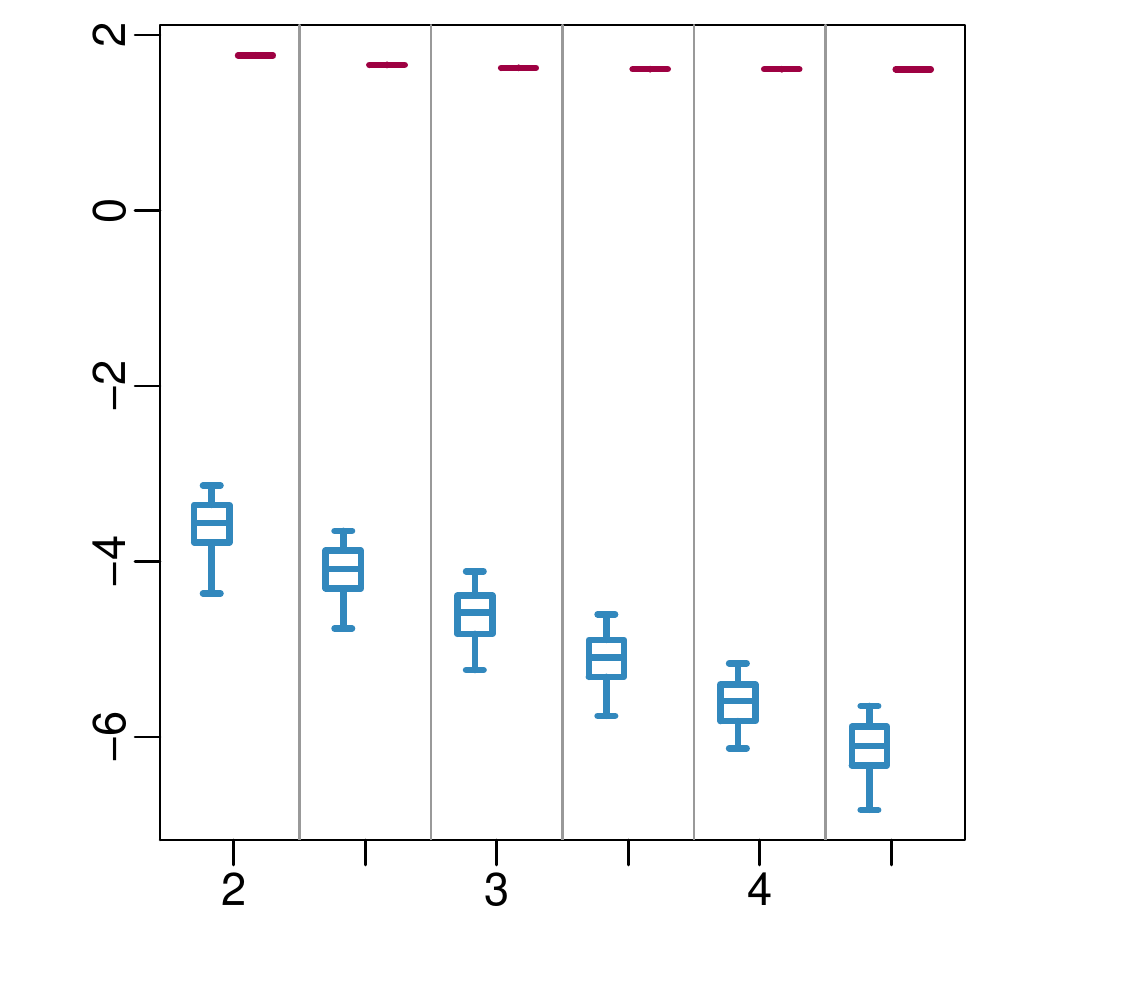}  \\
  \begin{sideways} \hspace{.7in}{ \small \textbf{BLIN} } \end{sideways}&
   \includegraphics[width=.3\textwidth]{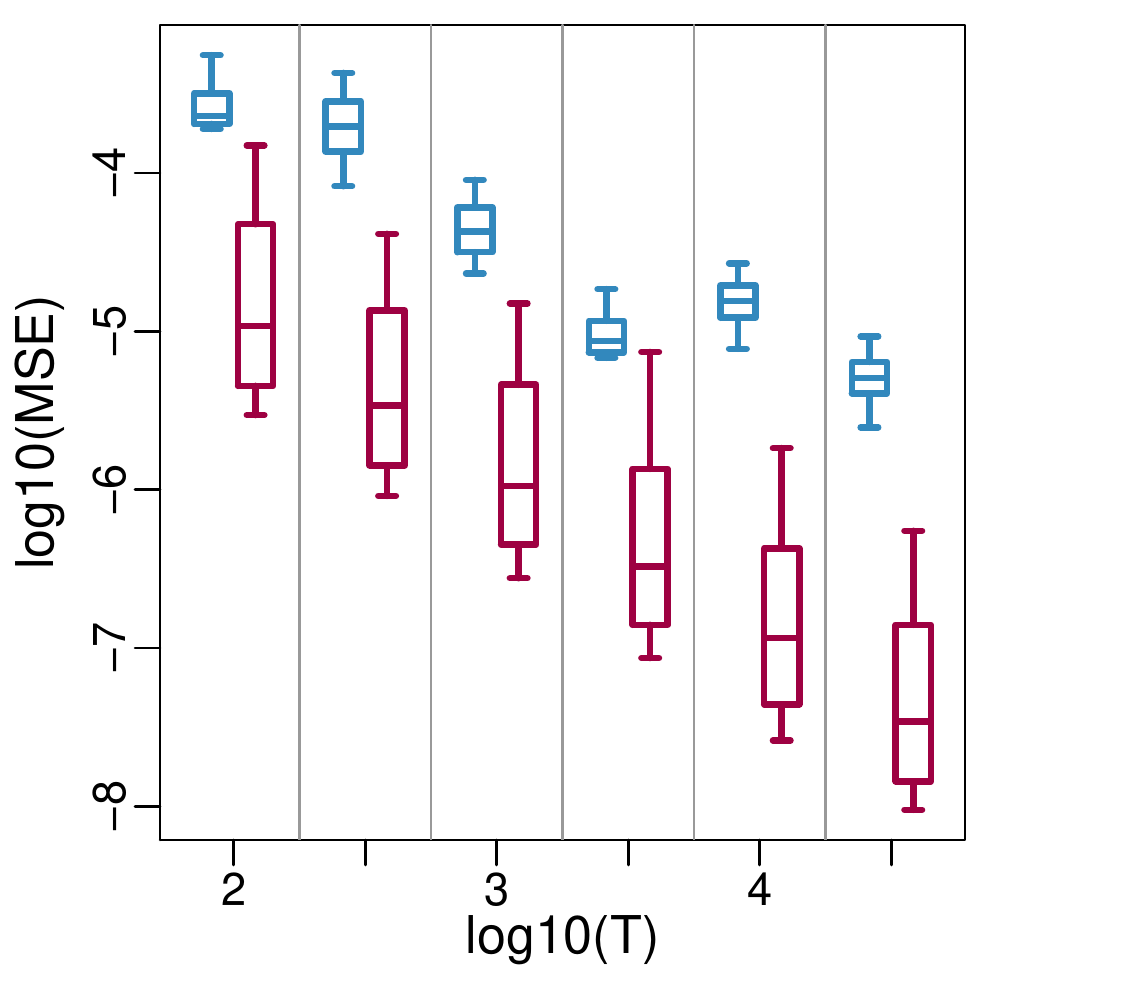}  &
\includegraphics[width=.3\textwidth]{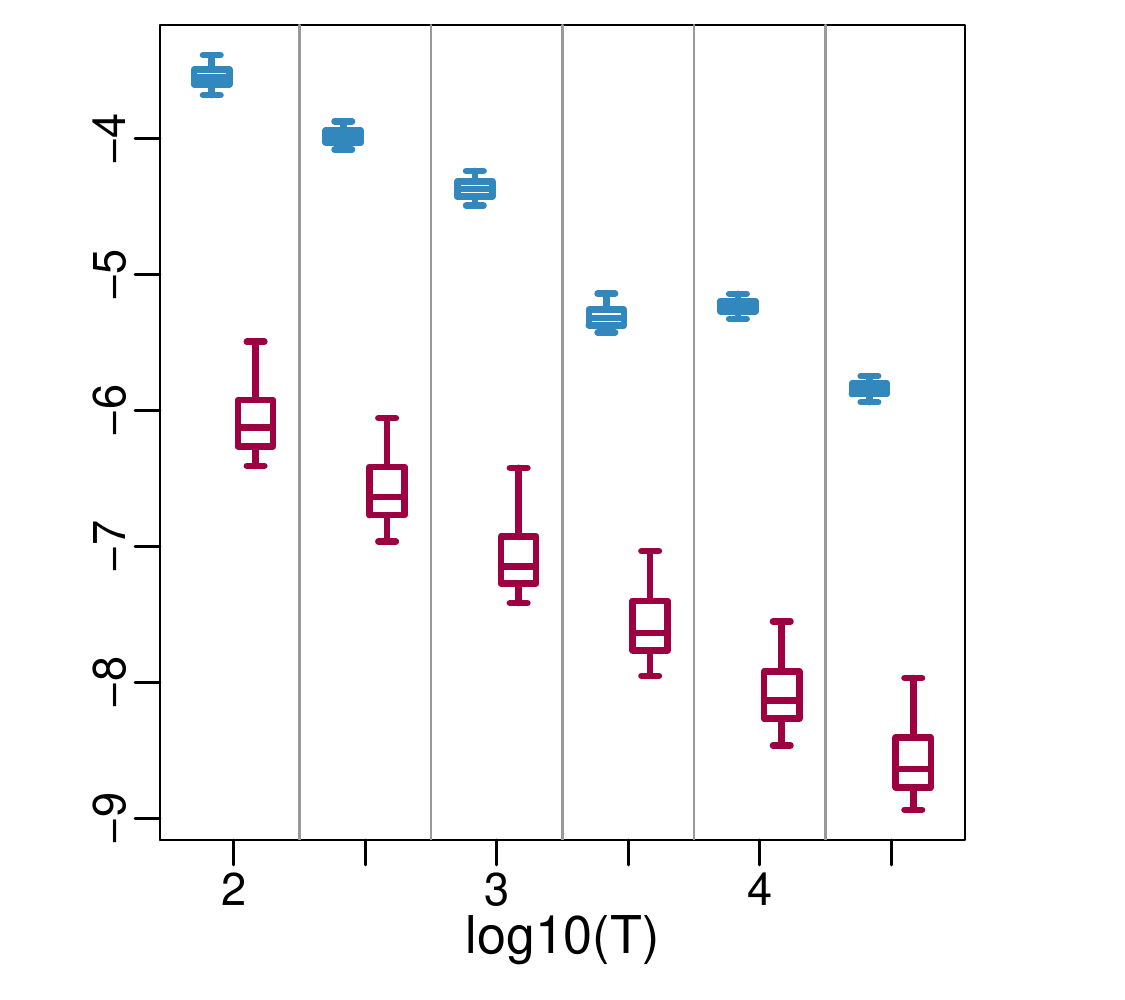}  &
  \includegraphics[width=.3\textwidth]{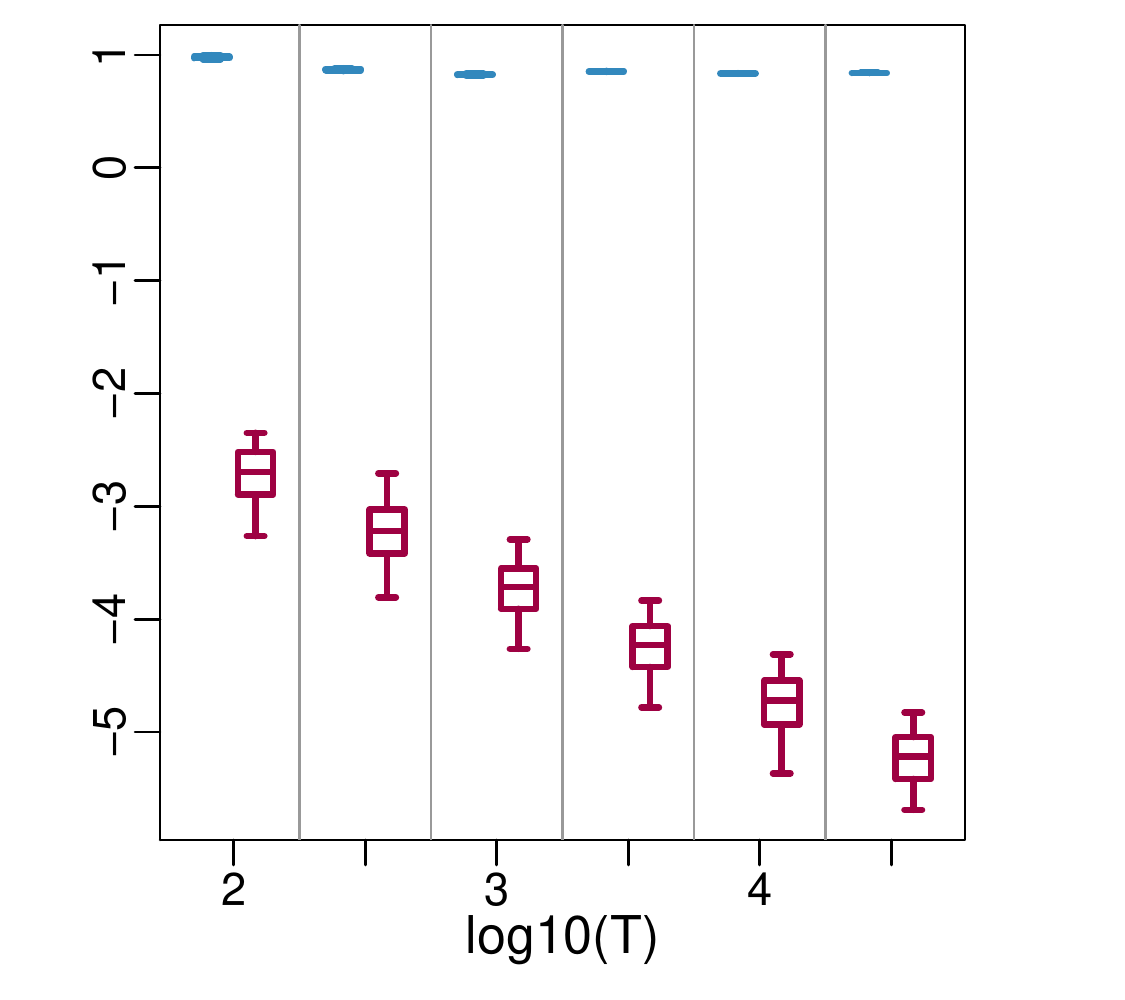}  \\
  &
   {\small \textbf{$\hat{\A}$, off-diag} } & 
  {\small \textbf{$\hat{\B}$, off-diag} } &
  {\small \textbf{diagonals} } \\
  \end{tabular}
\caption{Mean square error between  coefficient estimates and true coefficients used to generate data from the bilinear model (top row) and BLIN model (bottom row). The estimators are the least squares estimators of the BLIN and bilinear models. }
\label{fig:misspec_study2}
\end{figure}
\spacingset{1.5}

In Figure~\ref{fig:misspec_study2}, both the BLIN and bilinear estimators of the off-diagonal estimator of $\A$ and $\B$ have MSEs that tend to zero regardless of the generating model. This suggests that both estimators are consistent under both generating models, confirming Theorem~\ref{thm:BLIN_bilinear_offdiag}. In addition, we observe that neither diagonal estimator appears to converge when the estimation is not of the true model, e.g. the BLIN diagonals do not converge to the true bilinear diagonals. In this simulation, we have that $\bPsi \propto \I_S$ and $\bOmega \propto \I_L$, however, the diagonals of $\A$ and $\B$ are non-constant, verifying that this latter condition is critical in Proposition~\ref{prop:BLIN_bilinear_diag}. 

In Table~\ref{tab_conv_rates}, we 
observe $\sqrt{T}$ convergence for all off-diagonal elements estimated by the BLIN model (when data is generated from either the BLIN or bilinear models), which agrees with the discussion in Section~\ref{section_misspec}. Although the bilinear model attains the same $\sqrt{T}$ convergence rate when correctly specified, the convergence rate is significantly slower when the true generating model is the BLIN model. This suggests that the least squares estimator of the BLIN model is more efficient when misspecified than that of the bilinear model.  
\spacingset{1}
\begin{table}[ht]
\centering
\caption{Estimates of decrease in (base 10 logarithm of) mean squared error of least squares estimates of BLIN and bilinear models, under simulated data from each model, when $T$ increases by a factor of 10. Bolded and starred slopes are those which are significantly \emph{not} -1 $(p < 0.05)$, where the $p$-value is computed from a simple linear regression of $\log_{10}MSE$ on $\log_{10}T$.\vspace{.1in}}
\begin{tabular}{r | l l} 
\textbf{ Estimator } &
\textbf{Bilinear} &
\textbf{BLIN} \\
\hline
BLIN $\A$ &  -1.13 & -1.00 \\
BLIN $\B$ &  -1.10 & -1.00 \\
BLIN diag. & \textbf{-0.05$^*$} & -1.00 \\
bilinear $\A$ &  -1.00 & \textbf{-0.71$^*$} \\
bilinear $\B$ &  -1.01 & \textbf{-0.92$^*$} \\
bilinear diag. &  -1.01 & \textbf{-0.04$^*$} 
\end{tabular}
\label{tab_conv_rates}
\end{table}
\spacingset{1.5}

\section{Multipartite Relational Data}
\label{sec_multi}
In this paper, we primarily discuss the setting where $\Y_t$ is a matrix, equivalently a 2-mode array, with replications over time, such that $\{ \Y_t \}_{t=1}^T$  may be considered a tensor, or a 3-mode array. 
The $S \times L \times T$ array $\Y$ may be constructed by concatenating the $T$ matrices $\{\Y_t \}_{t=1}^T$, each of dimension $S \times L$. That is, $\Y = [\Y_1 ; \Y_2 ; \ldots ;\Y_T]$, where we let `$;$' denote this concatenation operation. 
The BLIN model is easily extendable to model $K$-mode arrays.
A $K$-mode BLIN model is appropriate when there are more than two actor types in the relational dataset, which may be termed \emph{multipartite} as opposed to bipartite. In the ICEWS data example, we consider interaction type, in addition to source country and target country, as the third mode. In this example, each relation is one of four interaction types from a source state to a target state. 
Then, each $\Y_t$ is a 3-mode array with entries $y_{ijk}^t$ measuring relation intensity, where $i$ is the source state, $j$ is the target state, $k$ is the relation type, and $t$ is the week of the observation. Please see Section~\ref{section_temporal_SID} for more detail.

We are interested in making inference on influence networks $\{\B_1,...,\B_K\}$, where each influence network is associated with a specific mode of the array.  For the ICEWS data example, our motivation is inference on influence networks of source countries $(\B_1)$, target countries $(\B_2)$, and interaction types $(\B_3)$. As discussed in Section~\ref{sec_model}, the $(i,j)$ component of $\B_k$ reflects the influence of the $i^{\text{th}}$ ``slice'' of $\X_t$ on the $j^{\text{th}}$ slice of $\Y_t$, where the slice is along mode $k$. For example, the $(i,j)$ entry of $\B_3$ estimates the influence of $\x_{\cdot \cdot i}$ on $\y_{\cdot \cdot j}$, when controlling for row and column dependencies between $\Y_t$ and $\X_t$. In the example above, this entry characterizes the influence of interaction type $i$ on interaction type $j$ when controlling for source country and target country influences.

We lean on the Tucker product (\cite{tucker1964extension}) and related results to write the array BLIN model for general $K$-mode arrays \citep[see also][]{de2000multilinear,kolda2009tensor,hoff2011separable}. First, we rewrite the expectation of the BLIN model of~\eqref{blin_Xt} using the Tucker product notation as
\begin{align}
E[\Y \, | \, \X, \Z] &= \X \times \{\A^T, \I_L, \I_T \} + \Z \times \{\I_S, \B^T, \I_T \}, \label{eqbitenTucker}
\end{align}
where the $S \times L \times T$ arrays $\X$ and $\Z$ are the results of concatenating $\{\X_t \}_{t=1}^T$ and $\{\Z_t \}_{t=1}^T$, respectively, as $\{\Y_t \}_{t=1}^T$ is concatenated to form $\Y$.

We now extend \eqref{eqbitenTucker} to general $K$-mode arrays, allowing a different predictor for each mode (i.e. $\X$ and $\Z$ for the 2-mode approach in \eqref{eqbitenTucker}).
Let $\Y_t$ be an $m_1 \times \ldots \times m_K$ array observed over $t \in \{1,2,\ldots,T \}$ time periods, with $K$ corresponding predictor arrays $\{ \X_t^{(k)} \}_{k=1}^K$ each of dimension $m_1 \times \ldots \times m_K$ as well. To form the full model, we concatenate the arrays, as above in \eqref{eqbitenTucker}, such
that $\Y$ is of dimension $m_1 \times m_2 \times \ldots \times m_K \times T$, as is every one of the $K$ predictor arrays $\{ \X^{(k)} \}_{k=1}^K$. Then, the expectation of the multipartite extension of the BLIN model is as follows:
\begin{align}
E \left[\Y \, | \, \{ \X^{(k)} \}_{k=1}^K \right] &= \sum_{k = 1}^K \X^{(k)} \times \{\I_{m_1}, \I_{m_2}, \ldots, \I_{m_{k-1}}, \B_k^T, \I_{m_{k+1}}, \ldots, \I_{m_K}, \I_R  \} \label{eqbitenK} \\
E\left[\y \, | \, \{ \x^{(k)} \}_{k=1}^K \right] &=  \sum_{k = 1}^K  \Bigg( \I_{
m_{k+1}m_{k+2}\ldots m_K T} \otimes \B_k \otimes \I_{m_1m_2\ldots m_{k-1}} \Bigg) \x^{(k)}, \label{eqbitenKvec} 
\end{align}
where each $\B_k$ is an $m_k \times m_k$ matrix of coefficients and $\y$ and $\x^{(k)}$ are the vectorizations of $\Y$ and $\X^{(k)}$, respectively. As the array form the BLIN model in \eqref{eqbitenK} is of course linear, an appropriate design matrix as in \eqref{eq_blinvec} may be constructed to estimate the sparse array BLIN model.

We now provide array manipulations such that a block coordinate descent, similar to that in Algorithm~\ref{alg:BLIN_it}, may be derived to iteratively estimate $\{\B_k \}_{k=1}^K$ in the array BLIN model. First, let $\M$ be any array of dimensions $m_1 \times m_2 \times m_3$. The mode-1 matricization of the array $\M_{(1)}$ is defined as an $m_1 \times m_2 m_3$ matrix and if $\M = \X \times\{\C_1,\C_2, \C_3 \}$, then $\M_{(1)} = \C_1 \X_{(1)} (\C_3 \otimes \C_2)^T$. The following mode-1 and mode-2 matricizations of the expectation of the matrix BLIN model are then
\begin{align}
E[\Y_{(1)} \, | \, \X, \Z] &= \A^T \X_{(1)} + \Z_{(1)}\big(\I_T \otimes \B \big), \label{eqbitenMode1}\\
E[\Y_{(2)} \, | \, \X, \Z] &= \X_{(2)} \big(\I_T \otimes \A \big)  +  \B^T \Z_{(2)}. \label{eqbitenMode2}
\end{align}
The mode-$i$ matricization of the expectation of the $K$-mode BLIN model in \eqref{eqbitenK} may be written as
\begin{align}
E \left[\Y_{(i)} \, | \, \{ \X^{(k)} \}_{k=1}^K \right] &= \B_i^T \X_{(i)}^{(i)} + \sum_{k \neq i} \X_{(i)}^{(k)}  \Bigg(  \I_{n_1^k} \otimes \B_k \otimes \I_{n_2^k} \Bigg), \label{eqbitenKmat} \\
& \text{ where } n_1^k = T \prod_{\substack{j \ge k+1 \\ j \neq i}}m_j \text{  and  } n_2^k = \prod_{\substack{j \le k-1 \\ j \neq i}}m_j. \nonumber
\end{align}
Representation the model in this form enables straightforward estimation of  $\B_i$ given $\{ \B_{k}\}_{k\neq i}$.

\section{Details of Data Analysis}
\label{section_SID_details}

In this section, we provide supporting materials for performing the analysis of the ICEWS data set.
In the notation of Section~\ref{sec_multi}, the BLIN model for this analysis    
may be written:
\begin{align} \Y_t
 &=  \left( \sum_{k=1}^{p_A} \Y_{t-k} \right) \times \{\A^T, \I_S, \I_R \}  
 + \left( \sum_{k=1}^{p_B} \Y_{t-k}  \right) \times \{\I_S, \B^T, \I_R \}  \\ \nonumber
& \hspace{1in} + \left( \sum_{k=1}^{p_C} \Y_{t-k}  \right) \times \{\I_S, \I_S, \C^T \} 
 + \E_t, % \label{eq_blin_lag_tri}
 \end{align}
where $\Y_t$ is a $25 \times 25 \times 4$ array representing senders, receivers, and interaction types. 
We also examined replacing $\Y_t$ with $\D_t := \Y_{t} - \Y_{t-1}$, the week-over-week increase in interaction value. To evaluate the performance of the model to represent the undifferenced and differenced data, we estimated the sparse BLIN model on each of differenced and undifferenced data for all combinations of lags $\{p_A, p_B, p_C \}$ between 1 and 5 (inclusive). 

We found that the BLIN model represented the differenced data significantly better, that is, with in-sample $R^2$ values above $0.30$ for differenced data rather than those below $0.10$ for undifferenced data. Thus, we proceeded with the differenced data only. To select the model that best balanced model fit and model parsimony, we used an analog of Akaike's Information Criterion \citep{akaike1998information}:
\begin{align}
    \hat{AIC} = 2\left(||\hat{\A} ||_1 + ||\hat{\B} ||_1 + ||\hat{\C} ||_1 \right) + S^2 R T \ \text{log} \sum_{t=1}^T || \hat{\D}_t - \D_t ||^2_2, \label{AICcrit}
\end{align}
where $||\H ||_1$ is the number of nonzero entries in $\H$, e.g., and $\hat{\D}_t$ is the estimated mean corresponding to the estimated networks $\{ \hat{\A}, \hat{\B}, \hat{\C} \}$. In \eqref{AICcrit}, the first term is a penalty for the number of parameters, i.e. model complexity, and the second term quantifies the goodness of fit. Smaller values of $\hat{AIC}$ are better; the five smallest values are given in Table~\ref{tab_AICs}.

\spacingset{1}
\begin{table}[ht]
\centering
\caption{The five smallest information criteria ($\hat{AIC}$) for selected lags of sparse BLIN estimates of differenced data. } 
\label{tab_AICs}
\begin{tabular}{cccccc} 
$p_A$ & $p_B$ & $p_C$ & $\hat{AIC}$ & $R^2$ & Number of parameters \\
\hline
5   & 3  &  1 & 58307 & 0.308 & 563 \\
3 &  5  & 1 & 58460 & 0.308 & 613 \\
5 & 1 & 3 & 58738 & 0.307 & 628 \\
1 & 5 & 3 & 58860 & 0.307 & 463 \\
3 & 1 & 5 & 59074 & 0.307 & 617
\end{tabular}
\end{table}
\spacingset{1.5}

To select the network entries to be highlighted in  Figure~\ref{fig:mp_ICEWS}, we plotted the ordered nonzero coefficients (Figure~\ref{fig:coef_ordered}). Then, we visually cut off the coefficient values at natural breaks in the nonzero entries. The visual cutoffs are near the 2.5\%  and 97.5\% quantiles of the nonzero coefficient values.

\spacingset{1}
\begin{figure}[ht]
\centering
  \includegraphics[width=.6\textwidth]{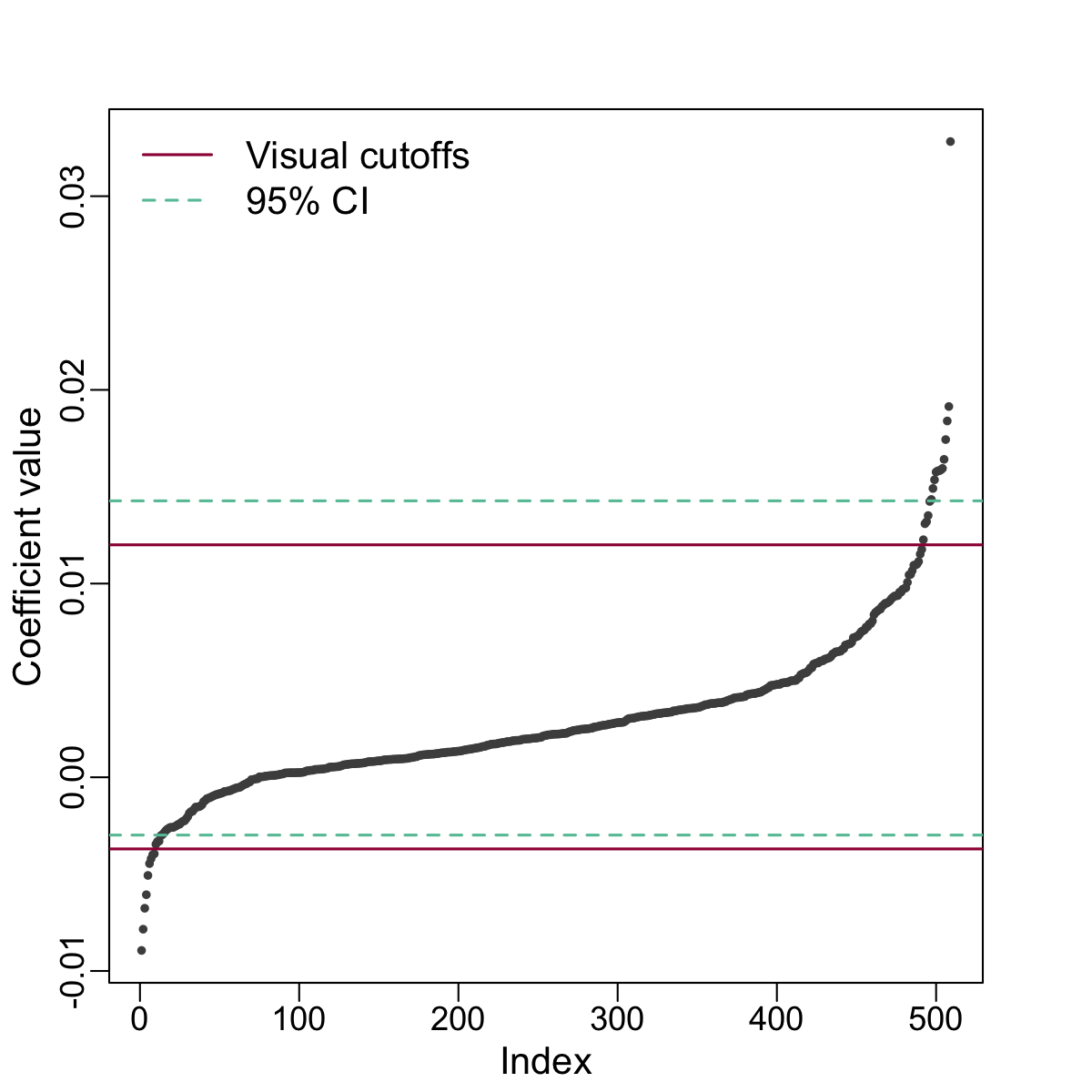}
 
\caption{Estimated influence network values and cutoffs for relations highlighted in Figure~\ref{fig:mp_ICEWS}. }
\label{fig:coef_ordered}
\end{figure} 
\spacingset{1.5}

\end{document}